\documentclass[format=acmsmall, review=false]{acmart}
\usepackage{acm-ec-26-proc}
\usepackage{booktabs} %
\usepackage[ruled]{algorithm2e} %
\usepackage{natbib} %

\SetAlFnt{\small}
\SetAlCapFnt{\small}
\SetAlCapNameFnt{\small}
\SetAlCapHSkip{0pt}
\IncMargin{-\parindent}
\setcitestyle{authoryear}

\usepackage{natbib}  %
\usepackage{caption} %

\acmConference{}{}{} %
\AtBeginDocument{
\fancypagestyle{firstpagestyle}{%
	\fancyhf{}%
	\fancyfoot[L]{\footnotesize{}}%
}
\fancypagestyle{ourstyle}{%
	\fancyhf{}%
	\fancyhead[C]{\footnotesize{\mbox{\shortauthors}}}
	\fancyhead[R]{\thepage}%
}
\pagestyle{ourstyle}
\renewcommand{\footnotetextcopyrightpermission}[1]{} %
\renewcommand{\footnotetextauthorsaddresses}[1]{} %
}

\usepackage[dvipsnames]{xcolor}
\usepackage{tabularx}
\usepackage{amsmath}
\usepackage{amsthm}
\usepackage{amsfonts,textcomp}
\usepackage{pifont}
\usepackage{latexsym}
\usepackage{graphicx} 
\usepackage{wrapfig} 
\usepackage{multirow}
\usepackage{microtype}
\usepackage{rotating}
\usepackage{url}
\usepackage{color}
\usepackage{comment}
\usepackage{paralist}
\usepackage{booktabs}
\usepackage{bm}
\usepackage[]{units}
\usepackage{tikz}
\usepackage{natbib}
\usepackage{xfrac}
\usetikzlibrary {patterns.meta} 
\tikzdeclarepattern{
  name=mylines,
  parameters={
      \pgfkeysvalueof{/pgf/pattern keys/size},
      \pgfkeysvalueof{/pgf/pattern keys/angle},
      \pgfkeysvalueof{/pgf/pattern keys/line width},
  },
  bounding box={
    (0,-0.5*\pgfkeysvalueof{/pgf/pattern keys/line width}) and
    (\pgfkeysvalueof{/pgf/pattern keys/size},
0.5*\pgfkeysvalueof{/pgf/pattern keys/line width})},
  tile size={(\pgfkeysvalueof{/pgf/pattern keys/size},
\pgfkeysvalueof{/pgf/pattern keys/size})},
  tile transformation={rotate=\pgfkeysvalueof{/pgf/pattern keys/angle}},
  defaults={
    size/.initial=5pt,
    angle/.initial=45,
    line width/.initial=.4pt,
  },
  code={
      \draw [line width=\pgfkeysvalueof{/pgf/pattern keys/line width}]
        (0,0) -- (\pgfkeysvalueof{/pgf/pattern keys/size},0);
  },
}

\usepackage[autostyle, english = american]{csquotes}
\MakeOuterQuote{"}
\usepackage{enumitem}

\usepackage[nameinlink]{cleveref}
\usepackage{algpseudocode}
\usepackage{pgfplots}
\pgfplotsset{compat=1.18}
\usepackage{subcaption}
\usetikzlibrary{shapes.geometric, arrows, patterns, patterns.meta}
\usepackage{tabularray}
\UseTblrLibrary{booktabs}
\usepackage{thmtools, thm-restate}
\usepackage{wrapstuff}
\usepackage{arydshln}

\newtheorem{definition}{Definition}
\newtheorem{proposition}{Proposition}

\newtheorem{claim}{Claim}
\crefname{claim}{claim}{claims}
\theoremstyle{definition}

\newtheorem{example}{Example}

\newtheorem{method}{Method}

\allowdisplaybreaks

\RequirePackage{ifthen,calc}
\newsavebox{\ffbox}\newlength{\ffboxlen}
\newcommand{\todo}[1]{%
  {\sbox{\ffbox}{\textbf{TODO:}\ \textit{{#1}}\ \textbf{:ODOT}}
    \settowidth{\ffboxlen}{\usebox{\ffbox}}
		\addtolength{\ffboxlen}{-5mm}
    \ifthenelse{\ffboxlen>\linewidth}{%
      \noindent\marginpar{$>>>>$}\textbf{TODO:}\ \textit{{#1}}\ \textbf{:ODOT}\marginpar{$<<<<$}}{%
      \noindent\marginpar{$>><<$}\textbf{TODO:}\ \textit{{#1}}\ \textbf{:ODOT}}}}

\newcommand{\naturals}{{{\mathbb{N}}}}

\newcommand{\x}{\ding{51}}

\newcommand{\alg}[1]{\textproc{#1}}
\newcommand{\pay}[1]{\textproc{#1}}
\newcommand{\mix}{\mathcal{R}}

\newcommand{\mesincrease}[1]{\alg{MES}^{\pay{#1}+}}
\newcommand{\seqPhrag}{\alg{SeqPhragmén}\xspace}

\newif\ifcomments %
\commentstrue
\commentsfalse

\definecolor{dark-gray}{gray}{0.4}

\newcommand{\anton}[1]{{\color{olive}{\ifcomments AB: #1 \fi}}}

\newcommand{\mut}[1]{{\color{Mahogany}{\ifcomments MU: #1 \fi}}}

\title{Mixed Voting Rules for Participatory Budgeting}

\author{Anton Baychkov}
\affiliation{%
\institution{University of Warwick}
\country{United Kingdom}
}

\author{Markus Brill}
\affiliation{%
\institution{University of Oxford}
\country{United Kingdom}
}

\author{Markus Utke}
\affiliation{%
\institution{TU Eindhoven}
\country{The Netherlands}
}

\begin{abstract}
     Designing and analyzing voting rules for Participatory Budgeting (PB) elections is an active research area in computational social choice. Many PB voting rules aim to optimize a specific objective. For instance, the ubiquitous \textit{Greedy rule} attempts to maximize utilitarian welfare, while the \textit{Method of Equal Shares (MES)} aims to achieve proportional representation. However, it is often desirable to achieve good outcomes on multiple objectives rather than a close-to-perfect outcome for one. Inspired by mixed-member systems for parliamentary elections, we introduce \textit{mixed voting rules} for PB. These are composed of a sequence of two or more rules that can each spend some fraction of the overall budget in order to add projects to the set selected by earlier rules. We develop a theoretical framework for formulating and analyzing mixed PB voting rules, and explore how existing rules can be adapted to this framework. We particularly focus on MES and its potential to address imbalances in representation created by earlier rules. We propose different ways to adjust MES voter budgets based on how satisfied voters are with previously chosen projects, and examine how well the resulting rules approximate well-known proportionality axioms such as EJR+. In particular, we show that one of these methods improves upon a natural proportionality baseline. We also extend our main positive result to  general additive satisfaction functions. 
     We complement our theoretical results with an extensive empirical analysis of real-world PB elections. Our experiments show that mixed rules can achieve favorable trade-offs between utilitarian welfare and proportionality. We identify several refinements that further improve their performance, and apply our framework to PB rules beyond Greedy and MES. 
\end{abstract}

\begin{document}

\maketitle

\anton{COMMENTS ARE ON}

\section{Introduction}

\noindent Participatory Budgeting (PB) is a democratic innovation that allows residents to vote on how public money is spent \citep{WMT21a}. Typically, community members suggest projects, each with a specific cost. Voters are then asked to express preferences over these projects, and a voting rule selects a subset of the projects to fund, making sure that their total cost stays within a given budget. Designing and analyzing PB voting rules is an active research area in computational social choice \citep{ReMa23a}.

PB voting rules have different strengths and weaknesses. For example, the \textit{Greedy} rule simply ranks projects by the number of votes they receive and funds them in that order until the budget runs out. While this method is straightforward to implement and explain, it may fail to represent minority interests. In contrast, the \textit{Method of Equal Shares (MES)} \citep{PPS21a} satisfies strong proportionality axioms, but computing it requires more complex calculations that are harder to explain.
Instead of choosing between different rules, %
we propose a framework that combines multiple rules. Ideally, this approach preserves the advantages of the constituent rules while mitigating their downsides. In our framework, a \emph{mixed voting rule} is defined as a sequence of rules, and each rule in that sequence is allocated a specific portion of the overall budget. The process can be visualized as an assembly line: %
the first rule selects projects using its assigned budget share, then passes both its selection and any unused budget to the next rule. Subsequent rules cannot alter the projects already chosen by earlier rules but can only add new projects using their budget allocation. For example, we might let the Greedy rule spend 60\% of the budget first and then allow MES to spend the remaining 40\%. %

This idea is inspired by \textit{mixed-member electoral systems}, which are used for parliamentary elections around the world \citep{ShWa03a}. 
Prominent examples include Germany's \textit{Mixed-Member Proportional Representation} system and Scotland's \textit{Additional Member System}. 
In these electoral systems, parliamentary seats are allocated according to two distinct selection methods: Some representatives are elected directly (usually through first-past-the-post voting in local districts), while additional seats are allocated to ensure\,---\,or at least approximate\,---\,the proportional representation of political parties in the parliament as a whole. 
Thus, these systems effectively divide the total resource (i.e., the parliamentary seats) between two complementary selection methods, with the second method specifically designed to enhance proportionality.
Similarly, our mixed voting rules for PB allocate portions of the total budget to different selection methods, and later rules may enhance the proportionality of the outcome.\footnote{The analogy has limitations, as PB elections often (but not always) lack the concept of geographic districts that is central to most mixed-member electoral systems. Moreover, voters often (but not always) submit two ballots in a mixed-member~system.}

To be able to apply existing PB voting rules in our mixed framework, we must adapt them to handle scenarios where some projects have already been selected. For the Greedy rule, this adaptation is straightforward: We can simply restrict attention to the remaining unselected projects and iteratively choose those with the highest vote counts. 
Adapting MES, on the other hand, is more subtle. Since the rule is defined via individual voter budgets, we need to decide how to divide the remaining budget among the voters. One trivial way of doing that is to split the budget equally, essentially ignoring the set of previously selected projects. 
In order to enhance proportionality, %
however, we need more sophisticated methods that explicitly take the previously selected projects into account. In this paper, we propose several approaches to adapting MES (and other PB rules) to this context, and we analyze\,---\,theoretically and empirically\,---\,how these methods perform in terms of proportional representation, among other metrics.

Beyond their role in our mixed framework, these adaptations might be of independent interest, as they allow PB voting rules to be applied to situations where certain projects must be included (maybe due to administrative or legal constraints). For example, a municipality might have ongoing projects that need to continue or legally required initiatives that must be funded. 
Another special case that is covered by our framework is the class of PB ``completion methods,'' i.e., rules that are designed to extend an outcome of another rule and make it exhaustive, in the sense that no more projects can be afforded with the unused budget.

\paragraph{Our Contribution} In this paper, we extend PB voting rules to work with a set of \textit{pre-selected projects}. We then use these extended rules to define \textit{mixed voting rules}, composed of a collection of voting rules that are executed in sequence (\Cref{sec:mixed}). We adapt MES to the mixed rules framework by defining ``pre-allocation'' methods to account for pre-selected projects (\Cref{sec:adapting_mes}) and establish proportionality guarantees for these methods using parameterized variants of EJR+ (\Cref{sec:proportionality}). We complement these theoretical guarantees with an empirical analysis of mixed rules on real-world PB instances (\Cref{sec:experiments}). Finally, we briefly discuss mixed rules for the special case of multiwinner voting, as well as a more general PB framework (\Cref{sec:extensions}). 
Omitted proofs can be found in \Cref{app:omitted_proofs}. %

\paragraph{Related Work}

Recent years have witnessed significant attention from the (computational) social choice community to multiwinner elections \citep{FSST17a,LaSk22a} and PB~\citep{AzSh21a,ReMa23a}. 
By contrast, mixed-member electoral systems are mostly studied within the political science literature~\citep{ShWa03a}. %
Proportionality in the PB setting has been a key research direction \citep{ALT18a,LCG22a,BFL+23a}. Prime examples of proportionality notions include \textit{Extended Justified Representation} \citep{ABC+16a}, its ``up to one'' variant \citep{PPS21a}, and its strengthenings \textit{EJR+}~\citep{BrPe23a} and \textit{Strong EJR+}~\citep{Skow26a}. 
Several proportional rules have been proposed, most prominently the \textit{Method of Equal Shares} \citep{PeSk20b}. 
Parameterized proportionality axioms, which play a central role in our paper, have been considered by \citet{BBMP25a} and \citet{Waru21a}. 
Recently, there has been some work on analyzing the performance of (non-mixed) multiwinner voting rules according to competing objectives \citep{KKE+19a,EFI+24a,BrPe24a} and on best-of-both-worlds approaches \citep{GPS+24a}. Extending an already selected set of projects has been studied in the context of PB completion methods \citep{PPS21a}. More generally, the issue of extending partially specified solutions also features, at least implicitly, in inter-temporal fairness notions where repeated decisions need to be made \citep{FZC17a, Lack20a,IsBr25a}.

\section{Preliminaries}\label{sec:preliminaries}

Let $P$ be a set of projects and $N=\{1,...,n\}$ a set of voters. %
We assume approval preferences and let $A_i \subseteq P$ denote the set of projects approved by voter $i \in N$. For a project $p \in P$, we let $N_p=\{i\in N: p\in A_i\}$ denote the \textit{supporters of $p$}, i.e., the set of voters approving $p$. We make the technical assumption that $N_p\neq \emptyset$ for all $p\in P$.

An (approval-based PB) \emph{instance} $I=(B,P,A,c)$ consists of
(i) a budget limit $B\in \mathbb{R}_{>0}$; 
(ii) a finite set of projects $P$;
(iii) an approval profile $A = (A_1, \dots, A_n)$; and
(iv) a cost function $c: P\rightarrow \mathbb{R}_{>0}$.
We assume that voters have \emph{cost satisfaction} functions \citep{BFL+23a}, i.e., voter~$i$'s satisfaction (or utility) from project $p$ is $\mu_i(p)=c(p)$ if $p\in A_i$ and $\mu_i(p)=0$ if $p\notin A_i$. 
For a subset of projects $P' \subseteq P$, we write $c(P')=\sum_{p\in P'} c(p)$ and $\mu_i(P')=\sum_{p\in P'} \mu_i(p)=c(P'\cap A_i)$. We define the \emph{utilitarian welfare} of a set of projects $P'\subseteq P$ as the sum of voter satisfaction, $uw(P')=\sum_{i\in N} \mu_i(P')=\sum_{p\in P'} |N_p|c(p)$. 
The utilitarian welfare of an outcome is a measure of its overall (utilitarian) efficiency.

Any subset of projects is called an \emph{outcome}. An outcome $P'\subseteq P$ is \emph{feasible} for instance $I = (B,P,A,c)$ if $c(P')\leq B$. For an outcome $P'$, we say that a project $p\in P\setminus P'$ is \emph{affordable} if $c(P')+c(p)\leq B$. An outcome is \emph{exhaustive} if there are no unchosen affordable projects.

\textit{Voting rules} map each PB instance to a feasible outcome. In order to facilitate the definition of mixed voting rules in \Cref{sec:mixed}, we define voting rules to take two additional inputs: a budget $B_R$ (that is upper bounded by the instance budget $B$ but can be strictly smaller) and a set $P_0$ of pre-selected projects that are required to be in the output of the rule. 

\begin{definition}
A \emph{voting rule} $R$ takes as an input 
(i) a PB instance $I=(B,P,A,c)$, 
(ii) a rule budget $B_R \in \mathbb{R}_{>0}$ with $B_R\leq B$, and 
(iii) a set $P_0 \subseteq P$ of pre-selected projects with $c(P_0) \le B_R$; 
it outputs an outcome $R(I,B_R,P_0)=P^*$ with $P_0 \subseteq P^*$ and $c(P^*)\le B_R$. 
\end{definition}

Whenever $P_0=\emptyset$ and $B_R=B$, this definition reduces to the standard definition of voting rules in the PB literature. When the instance $I$ is clear from the context, we often write $R(B_R, P_0)$ instead of $R(I, B_R, P_0)$. We say that a voting rule is \textit{exhaustive} if it always produces an exhaustive outcome (w.r.t.~$B_R$).

We introduce two voting rules that are widely used in the PB literature and real-world elections, describing them in the standard setting.

\paragraph{Greedy Rule.}

Given a budget $B$, $\alg{Greedy}(B,\emptyset)$ iteratively selects an affordable project $p$ with the largest number of supporters $|N_p|$, with arbitrary tie-breaking, until no affordable projects remain, greedily maximizing utilitarian welfare. 

\medskip

We measure the relative welfare of a particular outcome using the following notion.

\begin{definition}\label{def:utilitarian_ratio}
    The \emph{utilitarian ratio} of outcome  $P^*\subseteq P$ is given by $uw(P^*)/uw(\hat{P})$, where $\hat{P}$ is the feasible outcome maximizing utilitarian welfare.
\end{definition}

We say that $P^*\subseteq P$ is \emph{efficient up to one project} %
if there exists $p\in P\setminus P^*$ such that the utilitarian ratio of $P^*\cup \{p\}$ is at least $1$. Like the simplest greedy algorithm for the knapsack problem, \alg{Greedy} always produces an outcome that is efficient up to one project. %
When $P_0\neq \emptyset$ and $B_{\alg{G}}\leq B$, $\alg{Greedy}(B_{\alg{G}},P_0)$ can be defined in the exact same way, greedily maximizing welfare (given that $P_0$ must be included in its outcome) and checking project affordability with respect to $B_{\alg{G}}$. %

\paragraph{Method of Equal Shares (MES) \emph{\citep{PPS21a}}}
$\alg{MES}(B,\emptyset)$ assigns each voter $i\in N$ an initial budget of $b_i=\frac{B}{n}$ and iteratively selects projects as follows. Let $P^{(k-1)}$ be the set of projects chosen after step $k-1$ of \alg{MES}. During step $k$, for each affordable project $p \in P\setminus P^{(k-1)}$, we try to find $\rho(p)$ such that
$\sum_{i\in N_p} \min(b_i,\rho(p) c(p))=c(p)$.
We select $p_{k}=\arg \min \{\rho(p)\mid p\in P\setminus P^{(k-1)}\}$, with ties broken arbitrarily. This is the project that can be bought by its supporters while minimizing the maximum payment per unit satisfaction. We add $p_{k}$ to our selection %
and update the budgets of its supporters to $b_i-\min(b_i, \rho(p_{k})c(p_{k}))$ after every round. The algorithm terminates when no more projects can be afforded by their supporters, i.e., when $c(p)>\sum_{i\in N_p}b_i$ for all $p\in P\setminus P^{(k-1)}$.

\medskip

By giving voters equal budgets and allowing them to spend these on projects they approve of, \alg{MES} aims to make its outcome as proportional as possible. This has been formalized in the following proportionality notion, which the outcome of $\alg{MES}(B,\emptyset)$ always satisfies~\citep{BrPe23a}.

\begin{definition}\label{def:EJR+}
    An outcome $P^*\subseteq P$ satisfies \emph{EJR+ up to any project} if for every group of voters $N'\subseteq N$ and every project $p\in \bigcap_{i \in N'} A_i \setminus P^*$, there is a voter $i\in N'$ with
    $c(A_i\cap P^*)+c(p) > \frac{|N'|B}{n}$. 
\end{definition}

\section{Mixed Voting Rules}
\label{sec:mixed}

In this section, we formally introduce a general framework for combining voting rules sequentially. We let $m \in \mathbb{N}$ denote the number of constituent rules of a mixed rule $\mix =[R_1, R_2, \dots, R_m]$ and we use $[m]$ to denote the set $\{1,\dots,m\}$.

\begin{definition}
A \emph{mixed voting rule} $\mix=[R_k]_{k \in [m]}$ takes as an input 
(i)~a PB instance $I=(B,P,A,c)$, 
(ii)~a sequence of rule budgets $[B_k]_{k \in [m]}$, with each $B_k \in \mathbb{R}_{>0}$ and \mbox{$0< B_1\leq B_2\leq ...\leq B_m\leq B$}, and
(iii)~a set $P_0 \subseteq P$ of pre-selected projects with $c(P_0) \le B_1$; 
it outputs an outcome $P^* = \mix(I,[B_k]_{k \in [m]},P_0)$ with $P_0 \subseteq P^*$ and $c(P^*)\leq B_m$. 
The mixed rule $\mix$ produces its output $P^*$ using a series of intermediate outcomes $(P_k)_{k \in [m]}$, created by the rules it contains, with $P_0\subseteq P_1\subseteq\dots\subseteq P_{m-1}\subseteq P_m=P^*\subseteq P$. The rules are resolved in sequence: Each intermediate outcome is iteratively defined as $P_k=R_k(I,B_k,P_{k-1})$ and is used as the next rule's set of pre-selected projects. 
\end{definition}

\tikzstyle{standard} = [rectangle, minimum width=1cm, minimum height=1cm, text centered, draw=black, fill=teal!20]
\tikzstyle{etc} = [rectangle, minimum width=1cm, minimum height=1cm, text centered, draw=black, dashed, 
]
\tikzstyle{arrow} = [thick,->,>=stealth]
\tikzstyle{big} = [rectangle, minimum width=7.5cm, minimum height=2.5cm, text centered, draw=black, thick, fill=violet!10]
\begin{wrapfigure}[11]{R}{0.6\textwidth}
    \centering
    \begin{tikzpicture}[node distance=2cm,xscale=0.9,yscale=0.8, transform shape]
         \node (Mixed) [big, text depth = 1.5 cm]  {$\mix=[R_k]_{k \in [m]}$};
         \node (R1) [standard, left of=Mixed, xshift=-1cm, yshift=-0.5cm] {$R_1$};
         \node (R2) [standard, right of=R1] {$R_2$};
         \node (etc) [etc, right of=R2] {...};
         \node (Rm) [standard, right of=etc] {$R_m$};

         \draw [<-] (R1) -- (-4.5,-0.5) node[near end, above] {$P_0$};
         \draw [->] (R1) -- (R2) node[midway,above] {$P_1$};
         \draw [->] (R2) -- (etc) node[midway,above] {$P_2$};
         \draw [->] (etc) -- (Rm) node[midway,above] {$P_{m-1}$};
         \draw [->] (Rm) -- (4.5,-0.5) node[near end,above] {$P_{m}$};
         
         \draw [<-] (R1) -- (-3,-2) node[near end,left] {$B_1$};
         \draw [<-] (R2) -- (-1,-2) node[near end,left] {$B_2$};
         \draw [<-] (Rm) -- (3,-2) node[near end,left] {$B_m$};     
    \end{tikzpicture}
    
\caption{\centering Illustration of a mixed voting rule $\mix$ as a sequence of rules with inputs and outputs.}
\label{fig:mixed_rule_diagram}  
\end{wrapfigure}

This process is illustrated in \Cref{fig:mixed_rule_diagram}.
When the instance $I$ is clear from the context, we often drop it from the notation and write $P^*=\mix([B_k],P_0)$.

We can think of a mixed rule $\mix([B_k]_{k},P_0)=[R_k]_{k}([B_k]_{k},P_0)$ as splitting up the instance budget among the voting rules it contains, giving rule $R_k$ a budget of at least $B_k-B_{k-1}$.
We allow each rule to use any budget left unspent by the previous rules, giving $R_k$ an \emph{available budget} of $B_k-c(P_{k-1})$. %

\begin{definition}
    Consider a mixed voting rule $\mix=[R_k]_{k \in [m]}$ applied to $(I,[B_k]_k,P_0)$ and let the outcome of $R_{k-1}$ be $P_{k-1}$. We define the \emph{available budget share} of $R_k$ during the execution of $\mix(I,[B_k]_{k \in [m]},P_0)$ as 
        $\alpha_k=\frac{B_k-c(P_{k-1})}{B}.$
\end{definition}

The available budget share $\alpha_k$ represents the proportion of the overall budget available to $R_k$ to spend on remaining projects. The available budget share of the first rule in a mixed rule is $\alpha_1=\frac{B_1-c(P_0)}{B}$. For $k>1$, $\alpha_k$ depends on the set of projects chosen by earlier rules in the mix, and is unaffected by later rules. The available budget share of $R_k$ is bounded from above by its ``gross'' budget share $\frac{B_k}{B}\geq \alpha_k$ (with equality if and only if $P_{k-1}=\emptyset$). We will refer to the available budget share as simply $\alpha$ when it is clear from the context which rule we are considering, writing "$R\in \mix$ with available budget share $\alpha$."

A subset\footnote{We only consider completion methods that add projects to the outcome of the rule they are completing, without modifying the already selected set.} of "completion methods" \citep{PPS21a} that are widely used in PB can be easily defined in the mixed voting rule framework. We call a rule $R_k \in \mix$ a \emph{completion rule} if $B_k=B_{k-1}$. That is, $R_k$ is not allocated any extra budget, but can only spend budget that was left over from the previous rule. 
For instance, `\alg{MES} completed by \alg{Greedy}' can be defined as a mixed rule, $[\alg{MES}, \alg{Greedy}]([B,B], \emptyset) = {\textproc{\alg{Greedy}}(B,\textproc{\alg{MES}}(B,\emptyset))}$.

\begin{wrapstuff}[type=table,width=.34\textwidth]
        \centering
        \scalebox{0.8}{
        \begin{tabular}{c c c c c c c}
            \toprule
             & $p_1$ & $p_2$ & $p_3$ & $p_4$ & $p_5$ & $p_6$ \\
            \midrule
            Cost  & 28 & 12 & 45 & 12 & 8 & 6 \\
            \midrule
            $A_1$ & \x &    &    & \x &    & \x \\
            $A_2$ & \x &    & \x & \x &    & \\
            $A_3$ & \x &    & \x & \x & \x \\
            $A_4$ & \x & \x & \x &    & \x \\
            $A_5$ &    & \x & \x &    &    \\
            \bottomrule
        \end{tabular}
        }
    \end{wrapstuff} 

\begin{example}\label{ex:mixed_method}
    Consider $\mix=[R_1,R_2,R_3]$ where $R_1=\alg{MES}$, $R_2=\alg{Greedy}$, and $R_3=\alg{Spend}$ is a rule that picks the set of projects that maximizes total spending. Let the instance budget be $B=100$, the rule budgets be $[B_1,B_2,B_3]=[50,90,100]$, and the project set be $P=\{p_1,\ldots,p_6\}$. The cost function and approval profile are given in the table on the right. %

    \smallskip
    
    We compute $\mix([50,90,100], \emptyset)$ %
    in three steps:

\begin{enumerate}[label={(\arabic*)}]
    \item $\alg{MES}(50,\emptyset)$ with available budget $B_1-c(\emptyset) = 50$ and available budget share $\alpha_\alg{MES} = 0.5$ selects set {$P_1=\{p_1,p_2\}$} and terminates, as the remaining voter budgets 
    (i.e., $(3,3,3,0,1)$)
    are not sufficient to afford any of the other projects. Note that $B_1-c(P_1)=10$ units of budget are left unspent.

    \item $\alg{Greedy}(90,P_1)$ with available budget $B_2-c(P_1)=50$ %
    and available budget share $\alpha_{\alg{Greedy}}=0.5$ selects $p_3$ and terminates with outcome $P_2=P_1\cup \{p_3\}$, as no other projects can be afforded with \alg{Greedy}'s remaining budget of $B_2-c(P_2)=5$.

    \item$\alg{Spend}(100,P_2)$ with available budget $B_3-c(P_2)=15$ and available budget share $\alpha_{\alg{Spend}}=\frac{B_3-c(P_2)}{100}=0.15$ selects $P_3=P_2\cup \{p_5,p_6\}$, which is the final outcome of our mixed rule $\mix$. 
    \hfill $\diamond$
\end{enumerate}
\end{example}

We will analyze to what extent a mixed voting rule inherits the properties satisfied by its constituent rules. The following definition applies to a wide range of axiomatic~properties.

\begin{definition}\label{def:alpha-budget}
    Let $\mathcal{I}_P$ be the set of all instances with project set $P$. A \emph{monotone property} is a function $X: \mathcal{I}_P \times 2^P \rightarrow \{0,1\}$ such that, for any $I \in \mathcal{I}_P$, 
    $P'\subseteq P''\subseteq P$ implies $X(I, P')\leq X(I, P'')$.
    We say that a set of projects $P^*\subseteq P$ \emph{satisfies} $X$ for instance $I=(B,P,A,c)$ if $X(I,P^*)=1$. For $\alpha\in \mathbb{R}_{>0}$, we say that $P^*$ instead satisfies \emph{${\alpha}$-budget} $X$ for instance $I=(B,P,A,c)$ if $X((\alpha B,P,A,c),P^*)=1$.
\end{definition}

We can use \Cref{def:alpha-budget} to capture the monotone properties \textit{efficiency up to one project} and \textit{$EJR+$ up to any project}, as described in \Cref{sec:preliminaries}.\footnote{\Cref{def:alpha-budget} is tailored to \textit{intra-profile} properties of outcomes (i.e., properties that can be evaluated given a single instance and outcome). \textit{Inter-profile} properties such as strategyproofness fall outside its scope.}
Note that project set $P^*$ need not be a feasible outcome for the instance $(\alpha B,P,A,c)$.

Every voting rule from the standard setting can be adapted to the mixed framework in a trivial way, by ignoring the pre-selected projects and setting $R(I,B,P_0) = P_0 \cup R(I,B-c(P_0),\emptyset)$. In other words, the rule runs on the full project set with the remaining budget, and its output is combined with the pre-selected projects. Note that this differs from the perhaps more natural approach of removing pre-selected projects from the instance; in particular, the rule may re-select projects that are already in $P_0$, potentially leaving large portions of the budget unspent. This ``trivial adaptation'' allows us to establish the following baseline result. %

\begin{restatable}{observation}{obsone}\label{obs:1}
    Suppose there exists a voting rule $R$ whose outcome $P^*=R(B,\emptyset)$ always satisfies some monotone property $X$. Then, we can construct a voting rule $R'$ such that if $R'\in \mix$ with available budget share $\alpha$, the outcome of $\mix$ always satisfies $\alpha$-budget $X$.
\end{restatable}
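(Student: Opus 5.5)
We will take $R'$ to be the trivial adaptation of $R$ described just before the statement:
\[
R'(I,B_R,P_0) = P_0 \cup R\bigl((B_R - c(P_0),P,A,c),\, B_R - c(P_0),\, \emptyset\bigr).
\]
In words, $R'$ ignores the pre-selected projects. It runs the standard-setting rule $R$ on the full project set, with instance budget equal to its available budget. It then adds the result to $P_0$.

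\textbf{Step 1: $R'$ is a voting rule.} The inner call to $R$ returns a set of cost at most $B_R - c(P_0)$. Hence
\[
c(R'(I,B_R,P_0)) \le c(P_0) + \bigl(B_R - c(P_0)\bigr) = B_R,
\]
by subadditivity of $c$ (the two sets may overlap, which only lowers the cost). Also $P_0 \subseteq R'(I,B_R,P_0)$ by construction.

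\textbf{Step 2: the instance seen by the inner call.} Let $R' = R_k \in \mix$. Its pre-selected set is $P_{k-1}$ and its rule budget is $B_k$. By definition of the available budget share, $B_k - c(P_{k-1}) = \alpha B$. So the inner call is exactly $R$ run in the standard setting on the instance $(\alpha B,P,A,c)$. By assumption, its output $Q := R((\alpha B,P,A,c),\alpha B,\emptyset)$ satisfies
\[
X((\alpha B,P,A,c),Q) = 1.
\]

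\textbf{Step 3: monotonicity.} The final outcome is $P^* = P_m \supseteq P_k \supseteq Q$, because later rules only add projects. Since $X$ is monotone for the fixed instance $(\alpha B,P,A,c)$, we get $X((\alpha B,P,A,c),P^*) = 1$. This is precisely the statement that $P^*$ satisfies $\alpha$-budget $X$.

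\textbf{Possible obstacles.} The only delicate points are bookkeeping.

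\emph{Degenerate budget.} If $\alpha B = 0$, the instance $(\alpha B,P,A,c)$ is not a valid PB instance, since budgets must be positive. We treat this edge case separately: $R'$ then returns $P_0$. Alternatively, we restrict the claim to $\alpha > 0$, as in \Cref{def:alpha-budget}, where $\alpha \in \mathbb{R}_{>0}$.

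\emph{Domain of $X$.} The property $X$ must be evaluated on the modified instance $(\alpha B,P,A,c)$, which shares the project set $P$. This is allowed, because $X$ is defined on all of $\mathcal{I}_P$.
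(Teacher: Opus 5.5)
Your proof is correct and follows the paper's argument: the trivial adaptation $P_0\cup R(\alpha B,\emptyset)$, feasibility by subadditivity of $c$, and monotonicity of $X$ to pass from the inner output to the final outcome of $\mix$. Running $R$ on the modified instance $(\alpha B,P,A,c)$ rather than on $I$ with rule budget $\alpha B$ is a small, welcome refinement: it lets the hypothesis on $R$ apply directly, whereas the paper's version tacitly assumes that $R$ depends only on the rule budget.
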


\Cref{obs:1} helps establish a \emph{theoretical baseline} for the axiomatic results of this paper. When we adapt a rule $R$ whose outcome always satisfies property $X$ to the mixed framework, it is desirable that, when $R\in \mix$ with an available budget share of $\alpha$, the outcome of $\mix$ satisfies $\alpha$-budget $X$. %

However, exceeding the theoretical baseline, i.e., always achieving $\alpha'$-budget $X$ for some $\alpha'>\alpha$, is not possible for the properties we have introduced so far. We can easily show that no adaptation of \alg{Greedy} or \alg{MES} can, in general, perform better than their corresponding baseline. 

\begin{restatable}{proposition}{ingeneralbaseline}\label{prop:in_general_baseline}
    Let property $X$ be either "efficiency up to one project" or "EJR+ up to any project".
    There exists no rule $\alg{R}$ such that when $\alg{R}\in \mix$ with available budget share $\alpha$, the outcome of $\mix$ satisfies $\alpha'$-budget $X$ for any $\alpha'>\alpha$, for all instances and sets of pre-selected projects.
\end{restatable}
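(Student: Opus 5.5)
The plan is to refute the existence of such a rule by constructing, for any fixed $\alpha<\alpha'$, one instance with a set of pre-selected projects for which \emph{no} feasible extension satisfies $\alpha'$-budget $X$. Since a rule may only add projects and must respect its budget, it is enough to bound what every admissible output can contain. I would use the simplest mixed rule $\mix=[\alg{R}]$ with $m=1$ and $B_1=B$. The set $P_0$ consists of a single project $q$ with $c(q)=(1-\alpha)B$, so the available budget share is exactly $\alpha$. To respect $N_q\neq\emptyset$, $q$ is approved by voter $1$ only.

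The remaining projects are many cheap projects $p_1,\dots,p_M$, each of cost $\varepsilon$, with $M\varepsilon>B$ so that some of them always remain unselected. All of these are approved by voters $2,\dots,n$, and voter $1$ may approve them as well or not. Any output $P^*\supseteq P_0$ with $c(P^*)\le B$ contains at most $\alpha B/\varepsilon$ of the cheap projects. Hence every voter $i\ge 2$ satisfies $c(A_i\cap P^*)\le \alpha B$. The parameters are $n$ (large) and $\varepsilon$ (small), chosen at the end.

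\emph{EJR+ up to any project.} Take $N'=\{2,\dots,n\}$ and an unselected cheap project $p$, which is approved by all of $N'$. The condition $\alpha'$-budget EJR+ requires some $i\in N'$ with $c(A_i\cap P^*)+\varepsilon>\frac{(n-1)\alpha' B}{n}$. However, the left-hand side is at most $\alpha B+\varepsilon$. It therefore suffices to choose $n$ and $\varepsilon$ with
\[
\alpha B+\varepsilon\le \tfrac{n-1}{n}\alpha' B,
\]
which is possible because $\alpha<\alpha'$.

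\emph{Efficiency up to one project.} Let voter $1$ also approve the cheap projects, so each has welfare $n\varepsilon$. For any $P^*$ and any added project $p$, the welfare satisfies $uw(P^*\cup\{p\})\le (1-\alpha)B+n(\alpha B+\varepsilon)$, where the worst case is $p=q$, already included, or $p$ cheap. By contrast, the welfare-maximizing outcome for budget $\alpha'B$ has welfare at least $n\varepsilon\lfloor\alpha'B/\varepsilon\rfloor\ge n(\alpha'B-\varepsilon)$. It then suffices that
\[
n(\alpha'-\alpha)B>(1-\alpha)B+2n\varepsilon,
\]
which holds for small $\varepsilon$ and large $n$. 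The case $\alpha'>1$ is covered by the same estimate as long as $M\varepsilon\ge\alpha'B$.

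The main obstacle is the bookkeeping for the efficiency case. I need to make sure the single ``up to one'' project cannot close the gap, which is why the projects are made tiny, and that $q$'s own welfare is negligible, which is why $n$ is made large. I also need to check that the benchmark outcome $\hat P$ for budget $\alpha'B$ is evaluated without requiring $P_0$, so its welfare is at least that of the cheap-project bundle. Everything else follows directly from the budget constraint $c(P^*)\le B$ together with $P_0\subseteq P^*$.
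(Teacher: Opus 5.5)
Your proof is correct and is essentially the paper's construction: one pre-selected project of cost $(1-\alpha)B$ supported by a single voter, plus many cheap projects supported by the remaining voters, with $n$ taken large so that the pre-selected project's welfare and the factor $\frac{n-1}{n}$ become negligible. The differences are cosmetic: the paper uses $m$ cheap projects of cost $\alpha'B/m$ instead of $\varepsilon$, argues first with a single voter, and mentions the many-voter version as a closing remark, whereas you work with the many-voter version directly and also handle $\alpha'>1$ explicitly.
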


We argue this in \Cref{app:omitted_proofs} by considering an instance in which the pre-selected set $P_0$ consists of a single unpopular project.

Given the impossibility from \Cref{prop:in_general_baseline}, our goal in adapting \alg{MES} to the mixed rules framework will be twofold. 
We want our adaptation to closely resemble the original rule while taking into account the pre-selected projects, rather than ignoring them as in the trivial adaptation discussed above. At the same time, we aim to match the theoretical baseline from \Cref{obs:1}.
We discuss several possible ways to do so in \Cref{sec:adapting_mes}. In \Cref{sec:proportionality}, we will consider a measure of proportionality as a function of the pre-selected set, and consider how our adaptations perform with respect to this measure.

Most of our adaptations in this paper deal with adjusting a voting rule to account for projects that have already been selected by earlier rules in the sequence. For \alg{Greedy}, this is straightforward, as discussed above. However, the mixed rules framework also opens up a different possibility: adapting a rule in light of knowing that other rules will follow. In the standard setting, an exhaustive rule like \alg{Greedy} spends as much of the budget as possible, since any unspent funds go to waste. In a mixed rule, however, unspent budget is passed on to later rules. This motivates a variant we call \emph{\alg{Greedy} with early stopping}, which terminates as soon as the most supported unselected project is not affordable, rather than continuing to select increasingly inefficient projects. Despite potentially generating lower utilitarian welfare, we can show that both the standard \alg{Greedy} rule and \alg{Greedy} with early stopping meet our theoretical baseline with respect to efficiency up to one project.

\begin{restatable}{proposition}{greedyefficiency}\label{prop:greedy_efficiency}
    Both variants of \alg{Greedy} achieve $\alpha$-efficiency up to one project when provided with a budget share of $\alpha$.
\end{restatable}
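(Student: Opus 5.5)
We have to show: if \alg{Greedy} (either variant) is a rule $R_k\in\mix$ with pre-selected set $P_{k-1}$, rule budget $B_k$ and available budget share $\alpha=(B_k-c(P_{k-1}))/B$, then the final outcome $P^*=P_m\supseteq P_k$ is efficient up to one project with respect to the instance $I'=(\alpha B,P,A,c)$. In other words, there is some $p\in P\setminus P^*$ with $uw(P^*\cup\{p\})\ge uw(\hat P_\alpha)$, where $\hat P_\alpha$ is a welfare-maximising set of cost at most $\alpha B$. Efficiency up to one project is a monotone property, so it suffices to prove the claim for $P_k$ itself and then pass to $P^*\supseteq P_k$. One caveat: if $P^*\setminus P_k$ happens to contain the witness project, monotonicity of $uw$ still gives the conclusion, since $uw(P^*)\ge uw(P_k\cup\{p\})$ already. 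If no project lies outside $P^*$, the claim is read as $uw(P^*)\ge uw(\hat P_\alpha)$, which follows the same way.

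The plan is to analyse only the projects that Greedy itself adds, namely $G=P_k\setminus P_{k-1}$, and compare them to an optimum on the reduced problem. Fix an optimal set $\hat P_\alpha$ for budget $\alpha B$. Its elements lying in $P_{k-1}$ are already counted in $uw(P_k)$. For the rest, $Q=\hat P_\alpha\setminus P_{k-1}$, we have $c(Q)\le\alpha B$, which is exactly Greedy's available budget. It therefore suffices to show $uw(G\cup\{p\})\ge uw(Q)$ for some unselected $p$. Let $p$ be the first project (in Greedy's order, i.e.\ by decreasing $|N_p|$) that Greedy considers but cannot afford. For early-stopping Greedy this is exactly the project at which it stops. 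For standard Greedy, every project it selects afterwards only adds welfare, so we can ignore those later picks and keep only the prefix $G'$ chosen before $p$ was first rejected. If no such $p$ exists, Greedy selects every remaining project and $Q\subseteq G$ trivially.

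Now for the core knapsack argument. The prefix $G'$ consists of projects with support at least $|N_p|$, and together with $p$ they cost more than $\alpha B\ge c(Q)$. Every project of $Q$ not in $G'$ is not earlier than $p$ in the order, so its support is at most $|N_p|$. Comparing welfare per unit cost (which under cost satisfaction is just $|N_q|$), we get
\[
uw(G'\cup\{p\})\ \ge\ uw(Q\cap G') + |N_p|\,\bigl(c(G'\cup\{p\})-c(Q\cap G')\bigr),
\]
and also
\[
uw(Q\setminus G')\ \le\ |N_p|\,c(Q\setminus G').
\]
Combining these with $c(G'\cup\{p\})>c(Q)$ yields $uw(G'\cup\{p\})\ge uw(Q)$.

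The step I expect to be the main obstacle is handling ties and the exact affordability threshold. Affordability is checked against $B_k$ with $P_{k-1}$ included, so I must confirm that "cannot afford $p$" translates exactly to $c(G'\cup\{p\})>\alpha B$. I also need the fractional-relaxation comparison to hold when several projects share the support value $|N_p|$ and some are skipped.
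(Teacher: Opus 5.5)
Your proof is correct, but it is organized differently from the paper's. The paper reduces to the non-mixed setting. It runs Greedy with early stopping from scratch on the scaled instance $(\alpha B,P,A,c)$, assuming consistent tie-breaking. It then argues that this outcome $P'_G$ is contained in the actual outcome $P_G$, cites the textbook knapsack split-item bound for $P'_G$ as a black box, and transfers the conclusion to $P_G$ by monotonicity. You instead split the optimum into $\hat P_\alpha\cap P_{k-1}$, which is already paid for, and $Q=\hat P_\alpha\setminus P_{k-1}$. You then prove the split-item inequality directly for Greedy's own additions against its available budget.

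The paper's route is shorter because it reuses a known lemma. Yours is self-contained and needs no comparison between two different Greedy runs. It also yields a witness $p$ that lies outside $P_k$ by construction; for standard Greedy this holds because the remaining budget only shrinks. In the paper, the split item $p'$ of the from-scratch run may lie in $P_0$ or be picked by the run with $P_0$, and that case is silently absorbed into the claimed monotonicity. You also treat the degenerate cases explicitly: the witness being taken by a later rule, and nothing remaining outside $P^*$. The paper's assertion that efficiency up to one project is monotone hides these.

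The obstacle you anticipate is not one. When $p$ is considered, the current set is $P_{k-1}\cup G'$, so unaffordability reads $c(P_{k-1})+c(G')+c(p)>B_k$, i.e.\ exactly $c(G'\cup\{p\})>\alpha B$. Ties are harmless, because your two inequalities only use $|N_q|\ge|N_p|$ for $q\in G'\cup\{p\}$ and $|N_q|\le|N_p|$ for $q\in Q\setminus G'$.
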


We use the standard variant as our default, but consider both in our experiments (see \Cref{sec:exp_early_stopping}).

\section{The Method of Equal Shares in the Mixed Rules Framework} \label{sec:adapting_mes}

In this section we generalize \alg{MES} to account for a set of pre-selected projects $P_0$. We define several pre-allocation methods to account for the differences in voter satisfaction from $P_0$, each of which provides a different profile of initial voter budgets to \alg{MES}.

Consider $\alg{MES}(B_{\alg{MES}},P_0)$, with $P_0\neq \emptyset$. Our goal is to determine how to initialize voter budgets $b_i$ for each voter $i \in N$ to account for %
$P_0$, in order to maximize the proportionality achieved by subsequently running \alg{MES}. We might no longer desire to equally split \alg{MES}'s available budget, $\alpha B = B_\alg{MES} - c(P_0)$, as the set of pre-selected projects $P_0$ need not be equally liked by all voters. Intuitively, voters that are more satisfied with $P_0$ should be provided with a smaller individual budget.

We formalize methods for determining voter budgets $(b_i)_{i\in N}$ (with $\sum b_i=\alpha B$) as \emph{pre-allocation methods}, and write $\alg{MES}^{M}$ to refer to \alg{MES} with pre-allocation method~$\pay{M}$. All our pre-allocation methods follow a two-step process.

\begin{definition}\label{def:rebalancing_process}

A \emph{pre-allocation method} takes as input an instance $I$, a set $P_0\subseteq P$ of pre-selected projects, and an available budget share $\alpha$,\footnote{This information can be used to find $B_{MES}=c(P_0)+\alpha B$.} and proceeds in two stages:
\begin{enumerate}[label={(\arabic*)}]
    \item It determines \emph{voter payments} $(\pi_i)_{i\in N}$ for projects from $P_0$, such that $\pi_i\geq 0$ for all $i\in N$ and $\sum_{i\in N}\pi_i \leq c(P_0)$.
    
    \item It applies a \emph{rebalancing step} to determine
    \emph{voter budgets} $(b_i)_{i\in N}$ in order to make voters' endowments $\{\pi_i+b_i\}_{i\in N}$ as equal as possible.\footnote{This can be thought of as pouring $\alpha B$ square units of water into a 2D bucket (see \Cref{fig:pre_allocation_example}), where the floor (hatched, violet) is a bar chart, where the height of each bar is $\pi_i$ (and the width is $1$), and the water (filled, blue) represents $(b_i)_{i \in N}$.}
    Formally, $(b_i)_{i\in N}$ are chosen to maximize $\min_{ i\in N} (\pi_i+b_i)$ under the constraints $\sum_{i\in N} b_i=\alpha B$ and $b_i\geq 0$ for all $i \in N$.%
\end{enumerate}
\end{definition}

\noindent Note that we allow pre-allocation methods to have some voters pay more than their ``fair share'' of the \alg{MES} budget, i.e., $\pi_i > \frac{B_{\alg{MES}}}{n}$.
It is also possible for the voter payments to only partially fund the projects in $P_0$, or not fund them at all. The rebalancing step (2) is the same for all pre-allocation methods. To motivate this, it can be shown that every ``reasonable'' voter budget profile $(b_i)_{i\in N}$ can be induced by picking appropriate voter payments
in stage (1)\,---\,see \Cref{claim:rebalancing_induces_every_profile} in \Cref{app:omitted_proofs} for details.

\smallskip

We now define four pre-allocation methods for $\alg{MES}(B_{\alg{MES}},P_0)$ with available budget share $\alpha$, each following the two-stage process in \Cref{def:rebalancing_process}: \pay{Null}, \pay{MES-Style}, \pay{Equal-Split}, and \pay{Value-Based}.
Alongside, we present a running example, %
with the results of our methods illustrated in \Cref{fig:pre_allocation_example}.
All four methods can be computed in polynomial time. %

\begin{example}\label{ex:pre-allocation}
    Consider an instance with $B = 32$, $P=\{p_1,p_2,p_3\}$, and approval profile and project costs as specified in \Cref{fig:instance}. 
    We assume that \alg{MES} is given a budget $B_{\alg{MES}} = 32$, and a pre-selected project set $P_0 = \{p_1, p_2\}$, resulting in an \alg{MES} budget share of $\alpha=\frac{B_{\alg{MES}}-c(P_0)}{B}=0.25$.
\end{example}

The outcomes of the pre-allocation methods for \Cref{ex:pre-allocation} are illustrated in \Cref{fig:pre_allocation_example}.

\pgfplotsset{
/pgfplots/line_legend/.style={
legend image code/.code={
\draw[ultra thick,red](-0.1cm,0cm) -- (0.1cm,0cm);%
   }
  }
}
\begin{method}\label{method:null}
    \textbf{\underline{\pay{Null}}}: 
    Set $\pi_i=0$ for all voters, obtaining voter budgets $b_i=\frac{\alpha B}{n}$ using the rebalancing step. This pre-allocation method splits the remaining budget $\alpha B$ equally among all voters. \hfill $\diamond$
\end{method}

    In \Cref{ex:pre-allocation}, each voter $i\in N$ pays $\pi_i=0$ for $P_0$ and gets an \alg{MES} budget of $b_i=2$. 

\begin{method}\label{method:mes-style}
    \textbf{\underline{\pay{MES-Style}}}: 
    In order to determine voter payments, choose any order for $P_0$ and initialize voter budgets
    to $b^0_i=\frac{B_{\alg{MES}}}{n}$. Iteratively fund the projects in $P_0$ as if they were sequentially selected by \alg{MES}. If at any point the voters in $N_p$ cannot fully fund some project $p\in P_0$, the remaining cost is discarded.\footnote{Alternatively, we could fund such projects by overcharging their supporters; this would result in the same voter budgets.}  %
    Our theoretical result (\Cref{prop:mes_style_fails_EJRk}) holds regardless of the order. For \Cref{ex:pre-allocation} and our experiments in \Cref{sec:experiments}, we order $P_0$ by the number of supporters. \hfill $\diamond$
\end{method}

\tikzstyle{payments1}=[ybar, violet, pattern={mylines[size=5.4pt,line width=4.7pt,angle=0]}, pattern color=violet!20]
\tikzstyle{payments2}=[ybar, violet, pattern={mylines[size=5.4pt,line width=4.7pt,angle=45]}, pattern color=violet!20]
\tikzstyle{budgets}=[ybar, blue, fill=blue!20, ]

\begin{figure}[t]
\begin{subfigure}[t]{0.25\textwidth}
    \centering
    \scalebox{1}{   
        \renewcommand{\arraystretch}{1.1}
        \begin{tabular}{l  c c c}
            \toprule
             & $p_1$ & $p_2$ & $p_3$ \\
            \midrule
            Cost  & 18 & 6 & 9 \\
            \midrule
            $A_1$ & \x &    & \x  \\
            $A_2$ & \x &    &     \\
            $A_3$ & \x & \x & \x  \\
            $A_4$ &    & \x & \x  \\
            \bottomrule
            & & & \\
            \multicolumn{4}{c}{\small $B = 32$, $P_0 = \{p_1, p_2\}$}\\[0.69cm]
        \end{tabular}
    }
    \caption{Instance}
    \label{fig:instance}
\end{subfigure} \hfill
    \begin{minipage}{0.73\textwidth}
        \hspace{-2mm}\begin{subfigure}[b]{.24\textwidth}
        \begin{tikzpicture}
            \begin{axis}[
                    ybar stacked,
                    yscale=0.8,
                    xtick=data,
                    ymin=0,
                    ymax=10,
                    xlabel=voters,
                    bar width=1,
                    enlarge x limits={abs=.7},
                    nodes near coords,
                    every node near coord/.append style={
                        font=\small,
                        /pgf/number format/.cd, 
                        precision=1,
                        frac,
                        frac shift=1,
                    },
                    width=1.5\textwidth,
                    height=3\textwidth,
                    extra y ticks = 2,
                    extra y tick labels={},
                    extra y tick style={grid=major,major grid style={ultra thick,draw=red}},
                    axis on top,
                    legend style={anchor=east, xshift=-0.4cm, yshift=-0.0cm},
                    legend cell align={left}
            ]
            
            \addplot+ [payments1] coordinates {(1,0) (2,0) (3,0) (4,0)};
            \addplot+ [payments2] coordinates {(1,0) (2,0) (3,0) (4,0)};
            \addplot+ [budgets] coordinates {(1,2) (2,2) (3,2) (4,2)};
            \legend{$\pi_i^{p_1}$,$\pi_i^{p_2}$,$b_i$}
            \end{axis}
        \end{tikzpicture}
        \caption{\ \ \ \ \ \ \ \ \pay{Null}}
        \end{subfigure}\hspace{3.8mm}%
        \begin{subfigure}[b]{.24\textwidth}
        \begin{tikzpicture}
            \begin{axis}[
                    ybar stacked,
                    yscale=0.8,
                    xtick=data,
                    ymin=0,
                    ymax=10,
                    yticklabel=\empty,
                    xlabel=voters,
                    bar width=1,
                    enlarge x limits={abs=.7},
                    nodes near coords,
                    every node near coord/.append style={
                        font=\small,
                        /pgf/number format/.cd, 
                        precision=1,
                        frac,
                        frac shift=1,
                    },
                    width=1.5\textwidth,
                    height=3\textwidth,
                    extra y ticks = 8,
                    extra y tick labels={},
                    extra y tick style={grid=major,major grid style={ultra thick,draw=red}},
                    axis on top
            ]
            \addplot+ [payments1] coordinates {(1,6) (2,6) (3,6) (4,0)};
            \addplot+ [payments2] coordinates {(1,0) (2,0) (3,2) (4,4)};
            \addplot+ [budgets] coordinates {(1,2)  (2,2) (3,0) (4,4)};
            \end{axis}
        \end{tikzpicture}
        \caption{\pay{MES-Style}
        }
        \end{subfigure}%
        \begin{subfigure}[b]{.24\textwidth}
        \begin{tikzpicture}
            \begin{axis}[
                    ybar stacked,
                    yscale=0.8,
                    xtick=data,
                    ymin=0,
                    ymax=10,
                    yticklabel=\empty,
                    xlabel=voters,
                    bar width=1,
                    enlarge x limits={abs=.7},
                    nodes near coords,
                    every node near coord/.append style={
                        font=\small,
                        /pgf/number format/.cd, 
                        precision=1,
                        frac,
                        frac shift=1,
                    },
                    width=1.5\textwidth,
                    height=3\textwidth,
                    extra y ticks = 23/3,
                    extra y tick labels={},
                    extra y tick style={grid=major,major grid style={ultra thick,draw=red}},
                    axis on top
            ]
            \addplot+ [payments1] coordinates {(1,6) (2,6) (3,6) (4,0)};
            \addplot+ [payments2] coordinates {(1,0) (2,0) (3,3) (4,3)};
            \addplot+ [budgets] coordinates {(1,5/3)  (2,5/3) (3,0) (4,14/3)};
            \end{axis}
        \end{tikzpicture}
        \caption{\pay{Equal-Split}}
        \end{subfigure}%
        \begin{subfigure}[b]{.24\textwidth}
        \begin{tikzpicture}
            \begin{axis}[
                    ybar stacked,
                    yscale=0.8,
                    xtick=data,
                    ymin=0,
                    ymax=10,
                    yticklabel=\empty,
                    xlabel=voters,
                    bar width=1,
                    enlarge x limits={abs=.7},
                    nodes near coords,
                    every node near coord/.append style={
                        font=\small,
                        /pgf/number format/.cd, 
                        precision=1,
                        frac,
                        frac shift=1,
                    },
                    width=1.5\textwidth,
                    height=3\textwidth,
                    extra y ticks = 22/3,
                    extra y tick labels={},
                    extra y tick style={grid=major,major grid style={ultra thick,draw=red}},
                    axis on top
            ]
            \addplot+ [payments1] coordinates {(1,6) (2,6) (3,6) (4,0)};
            \addplot+ [payments2] coordinates {(1,0) (2,0) (3,2) (4,2)};
            \addplot+ [budgets] coordinates {(1,4/3)  (2,4/3) (3,0) (4,16/3)};
            \end{axis}
        \end{tikzpicture}
        \caption{\pay{Value-Based}}
        \end{subfigure}
    \end{minipage}
    \caption{PB instance and pre-allocation outcomes for \Cref{ex:pre-allocation}. Voter payments $\pi_i$ are calculated separately by each pre-allocation method (and are split by project in the diagrams, with $\pi_i=\pi_i^{p_1}+\pi_i^{p_2}$), and the voter budgets $b_i$ are obtained from the rebalancing step. The \textcolor{red}{{red line}} represents the minimum voter endowment $\min_{i\in N}(\pi_i+b_i)$. \pay{MES-Style} uses project order $[p_1,p_2]$.
    }
    \label{fig:pre_allocation_example}
    
\end{figure}
    
In \Cref{ex:pre-allocation}, with project order $[p_1,p_2]$, $p_1$ is funded first and the first 3 voters each pay~6 for it, obtaining intermediate budgets of $b'_1=b'_2=b'_3=2$. When funding $p_2$, voter $3$ can no longer pay their fair share, so they pay as much as they can, with the rest covered by $4$. 
We obtain payments $(\pi_i)_{i \in [4]}=(6,6,8,4)$
and budgets of $(b_i)_{i \in [4]}=(2,2,0,4)$.
With the reverse project order we instead obtain $(\pi_i)_{i \in [4]}=(6.5,6.5,8,3)$
and $(b_i)_{i \in [4]}=(1.5,1.5,0,5)$.

\begin{method}\label{method:equal-split}
    \textbf{\underline{\pay{Equal-Split}}}:
    Split the total cost of every project $p\in P_0$ equally among its supporters~$N_p$, setting $\pi_i=\sum_{p \in A_i \cap P_0} \frac{c(p)}{|N_p|}$ and derive the voter budgets using the rebalancing step. Note that $\sum_{i\in N} \pi_i=c(P_0)$, i.e., the voter payments fund the pre-selected projects completely.
    \hfill $\diamond$
\end{method}
    
    In \Cref{ex:pre-allocation}, supporters pay $6$ each for $p_1$ and $3$ each for $p_2$, with total payments $(\pi_i)_{i \in [4]}=(6,6,9,3)$. Voter $3$ has spent more than their fair share. From the rebalancing step, we get 
    $(b_i)_{i \in [4]}=(\frac{5}{3},\frac{5}{3},0,\frac{14}{3})$. \\

    The \pay{MES-Style} and \pay{Equal-Split} methods represent fairly natural ways to divide the cost of $P_0$ among voters. However, we will see in \Cref{sec:proportionality} that they do not achieve proportionality at a level dictated by our theoretical baseline, while the (trivial) \pay{Null} method does. We introduce one more method that works similarly to \pay{Equal-Split}, but achieves better proportionality guarantees. The idea behind this method is to partially fund the projects in $P_0$ from voter payments $\pi_i$ in such a way that voters get good value for money whenever they contribute to a project.

\begin{method}\label{method:value-based}
    \textbf{\underline{\pay{Value-Based}}}:
    We define the utilitarian \emph{value}
    (for money) $v(p)$ of a project $p$ to be $v(p)=|N_p|$. This is the ratio of the utilitarian welfare of a project, $uw(p)=|N_p|c(p)$, to its cost.
    For a set of pre-selected projects $P_0$, we define the \emph{threshold value} $v^*$ as the value of the most valuable unselected project that is affordable under budget $B_{\alg{MES}}$ after every project from $P_0$ with greater value has been selected. Formally,
    $$v^*=\max_{p\in P\setminus P_0} \{v(p) \mid p \text{ satisfies } c(U(p))+c(p)\leq B_{\alg{MES}}\},$$
    where $U(p) = \{p'\in P_0 \mid v(p')\geq v(p)\}$ 
    denotes the upper contour set of $p$ in $P_0$, i.e., the set of pre-selected projects with value at least $v(p)$.
    If no project $p \in P\setminus P_0$ satisfies $c(U(p))+c(p)\leq B_{\alg{MES}}$, we let $v^*=0$.  
    In the case that all projects have distinct values, the threshold value $v^*$ is the value of the first project from $P\setminus P_0$ that $\alg{Greedy}(B_{\alg{MES}},\emptyset)$ would select.

    When funding some project $p^*$ with value $v^*$, voters in $N_{p^*}$ would each pay $\frac{c(p^*)}{v^*}$, or equivalently $\frac{1}{v^*}$ per unit satisfaction they obtain from $p^*$. The idea of the \pay{Value-Based} pre-allocation method is to allow voters to spend \textit{at most} $\frac{1}{v^*}$ per unit satisfaction, defining voter payments as follows:
    $$\pi_i=\sum_{p\in A_i \cap P_0} \frac{c(p)}{\max\{v(p),v^*\}} \, .$$

    Thus, any project $p\in P_0$ with value $v(p)\geq v^*$ is fully funded, identically to the \pay{Equal-Split} method, and any project with $v(p)<v^*$ is partially funded.
    \hfill $\diamond$

\end{method}
    
In \Cref{ex:pre-allocation}, supporters pay $6$ each for $p_1$, like they did under the \pay{Equal-Split} method. The threshold value in this instance is $v^*=3$, corresponding to the value of $p_3$. Thus, the \pay{Value-Based} method allows voters $3$ and $4$ to partially fund $p_2$, each paying $\frac{1}{3}\cdot 6=2$ for it, resulting in total payments of $(\pi_i)_{i \in [4]}=(6,6,8,2)$. Using the rebalancing step, we get $(b_i)_{i \in [4]}=\big(\frac{4}{3},\frac{4}{3},0,\frac{16}{3}\big)$. We briefly discuss a potential extension to the \pay{Value-Based} method in \Cref{app:value_based} which uses a voter-specific notion of threshold value. \\

Arguably, the four methods outlined above choose voter payments $\pi_i$ in a reasonable way, from the perspective of the voters. In particular, they never force a voter to pay for a project they do not approve, and never require a group of supporters to pay for more than the cost of a project.  %

\section{Proportionality for \alg{MES} Variants}\label{sec:proportionality}

In this section, we study the proportionality of \alg{MES} in the mixed framework, for each of the four pre-allocation methods defined in \Cref{sec:adapting_mes}. We consider parameterized versions of the proportionality axiom \textit{EJR+ up to any project} (\Cref{def:EJR+}) and weakenings of it. 
The axioms we consider are parameterized using the \textit{minimum voter budget share}, an important value that can be calculated from the output of a pre-allocation method. Notably, this creates a non-standard approach where the strength of our proportionality guarantees cannot be determined until the mixed rule is partially executed and we observe which projects were selected by earlier rules. Such an approach is necessary in light of the negative result of \Cref{prop:in_general_baseline}. We identify the \pay{Value-Based} pre-allocation method as the sole method that improves on our theoretical baseline from \Cref{sec:mixed}.

We define a parameterized version of EJR+ up to any project from \Cref{def:EJR+}.

\begin{definition}\label{def:EJR+xk}
    Let $k \in \mathbb{N}^+$ and $\alpha \in [0,1]$.
    An outcome $P^*\subseteq P$ satisfies \emph{$\alpha$-budget EJR+ up to any $k$ projects} if for every group of voters $N'\subseteq N$ and every set of projects $P'\subseteq \bigcap_{i \in N'} A_i \setminus P^*$ with $|P'|=k$, there is a voter $i\in N'$ with
    $c(A_i\cap P^*)+c(P')> \alpha \frac{|N'| B}{n}$.
\end{definition}

For $k=1$, this is the direct result of applying \Cref{def:alpha-budget} to \textit{EJR+ up to any project}. This property becomes weaker for larger values of $k$, and for smaller values of $\alpha$. Similar ``up to $k$''-style notions have been defined in the fair division literature \citep{ARS22a}.

The normative goal of the pre-allocation methods we defined in \Cref{sec:adapting_mes} is to select the profile of voter payments $(\pi_i)_{i\in N}$ in such a way that each voter gets good "value for money" whenever they contribute to a project in $P_0$. We can think of $\pi_i+b_i$ as the \textit{total endowment} of the voter, which the rebalancing step tries to make as equal as possible among the set of voters. 
The following definition focuses on a voter with minimal total endowment and compares their total endowment to a voter's fair share of the instance budget, $\frac{B}{n}$. 

\begin{definition}\label{def:minimum_voter_budget_share}
    Consider an MES pre-allocation method $\pay{M}$ that is applied as part of a mixed rule.
    The \emph{minimum voter budget share} $\alpha^{\pay{M}}$ is defined as $\alpha^{\pay{M}}=\min_{i\in N}\{(\pi_i+b_i)\frac{n}{B}\}$.
\end{definition}

The minimum voter budget share represents how much the worst off voter gets to spend on (1) projects in $P_0$ and (2) during the execution of $\alg{MES}^{\pay{M}}$, as a fraction of their fair share of the instance budget. A method \pay{M} with a higher value of $\alpha^\pay{M}$ is not necessarily more proportional as it is possible for \pay{M} to spend voter budgets inefficiently on projects from $P_0$ (see the negative results in \Cref{thm:violations}).

The rebalancing step (see \Cref{def:rebalancing_process}) places some constraints on the possible values of $\alpha^\pay{M}$.

\begin{restatable}{observation}{alphav}\label{claim:alpha_v}
   For any pre-allocation method $\pay{M}$, $\alpha^{\pay{M}}$ lies between \alg{MES}'s available budget share $\alpha$, and the proportion of the instance budget $B$ allocated to \alg{MES}: $\frac{B_{\alg{MES}}-c(P_0)}{B}=\alpha\leq \alpha^{\pay{M}}\leq \frac{B_{\alg{MES}}}{B}$.
\end{restatable}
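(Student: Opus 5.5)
We need to show $\alpha B/n \le \min_i(\pi_i+b_i) \le B_{\alg{MES}}/n$, and multiply by $n/B$ at the end. Write $m^* = \max_{(b_i)} \min_i(\pi_i+b_i)$ for the value attained by the rebalancing step in \Cref{def:rebalancing_process}. By \Cref{def:minimum_voter_budget_share}, $\alpha^{\pay{M}} = m^* n/B$. Recall that $\alpha B = B_{\alg{MES}} - c(P_0)$.

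\emph{Lower bound.} The rebalancing step maximizes the minimum endowment, so $m^*$ is at least the value of any feasible choice. The uniform choice $b_i = \alpha B/n$ is feasible, since it is nonnegative and sums to $\alpha B$. Because $\pi_i \ge 0$, this choice gives
\[
\min_i(\pi_i + \alpha B/n) \ge \alpha B/n .
\]
Hence $m^* \ge \alpha B/n$, which gives $\alpha^{\pay{M}} \ge \alpha$.

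\emph{Upper bound.} A minimum never exceeds an average. Applying this to the optimal budgets $(b_i)$ gives
\[
m^* = \min_i(\pi_i+b_i) \le \frac{1}{n}\sum_i(\pi_i+b_i) = \frac{1}{n}\Big(\sum_i\pi_i + \alpha B\Big).
\]
Stage (1) of \Cref{def:rebalancing_process} requires $\sum_i \pi_i \le c(P_0)$. Therefore
\[
m^* \le \frac{c(P_0)+\alpha B}{n} = \frac{B_{\alg{MES}}}{n},
\]
which gives $\alpha^{\pay{M}} \le B_{\alg{MES}}/B$.

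Both bounds are routine. The only points that need care are using the constraint $\sum_i\pi_i \le c(P_0)$ from stage (1) together with the identity $\alpha B = B_{\alg{MES}} - c(P_0)$. Note also that the maximum in the rebalancing step is attained, because the feasible set of budgets is compact and the objective is continuous.
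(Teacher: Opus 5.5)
Your proposal is correct and takes the same route as the paper. The lower bound comes from comparing the rebalancing optimum with the feasible uniform split $b_i=\alpha B/n$ and using $\pi_i\ge 0$; the upper bound comes from bounding the minimum endowment by the average, using $\sum_i\pi_i\le c(P_0)$ and $\sum_i b_i=\alpha B$, with your remark that the rebalancing maximum is attained being a small detail the paper leaves implicit.
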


Before considering proportionality guarantees for each of our pre-allocation methods, we derive the following relationship between their minimum voter budget shares. 

\begin{restatable}{proposition}{alphavrelationship}\label{prop:min_voter_budget_share}
Fix an instance $I$ and a set of pre-selected projects $P_0$. The minimum voter budget shares for $\alg{MES}^{\pay{M}}(B_{\alg{MES}},P_0)$ with $M \in \{\pay{Null}, \pay{MES-Style},$ 
$\pay{Equal-Split}, \pay{Value-Based}\}$ satisfy
$$\alpha=\alpha^{\pay{Null}}\leq  \alpha^{\pay{Value-Based}}\leq  \alpha^{\pay{Equal-Split}}  \leq  \alpha^{\pay{MES-Style}}.$$
\end{restatable}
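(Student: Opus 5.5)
The plan is to reduce all three inequalities to a single monotonicity property of the rebalancing step, and then compare the payment vectors the four methods produce.

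\textbf{Step 1: the water level.} For a payment profile $\pi=(\pi_i)_{i\in N}$, the rebalancing step of \Cref{def:rebalancing_process} is water-filling. The optimal budgets are $b_i=\max(0,L-\pi_i)$, where the level $L=L(\pi)$ is the unique value with $\sum_{i\in N}\max(0,L-\pi_i)=\alpha B$. Every voter with $b_i>0$ then has endowment exactly $L$. A voter with $b_i=0$ has $\pi_i\ge L$. Hence $\min_i(\pi_i+b_i)=L(\pi)$, and $\alpha^{\pay{M}}=L(\pi^{\pay{M}})\,n/B$.

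Since $g_\pi(L)=\sum_i\max(0,L-\pi_i)$ is increasing in $L$ and nonincreasing in each $\pi_i$, we get a comparison criterion:
\[
L(\pi')\ge L(\pi)\quad\text{whenever}\quad \sum_i \min(\pi'_i,L)\ \ge\ \sum_i\min(\pi_i,L)\ \text{ at } L=L(\pi).
\]
This holds because $g_{\pi}(L)=nL-\sum_i\min(\pi_i,L)$. In particular, $L$ is monotone under pointwise increases of $\pi$.

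\textbf{Step 2: the easy inequalities.} For \pay{Null} we have $\pi=0$, so $L=\alpha B/n$ and $\alpha^{\pay{Null}}=\alpha$. For every $i$ and every $p\in A_i\cap P_0$,
\[
0\le \frac{c(p)}{\max\{v(p),v^*\}}\le \frac{c(p)}{|N_p|}.
\]
Therefore $0\le\pi^{\pay{Value-Based}}_i\le\pi^{\pay{Equal-Split}}_i$ pointwise. Pointwise monotonicity of $L$ then gives $\alpha^{\pay{Null}}\le\alpha^{\pay{Value-Based}}\le\alpha^{\pay{Equal-Split}}$.

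\textbf{Step 3: \pay{Equal-Split} versus \pay{MES-Style} (the main obstacle).} These two payment vectors are not pointwise comparable; in \Cref{ex:pre-allocation}, voter $3$ pays $9$ versus $8$ and voter $4$ pays $3$ versus $4$. So I would use the truncated-sum criterion from Step 1.

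Let $L=L(\pi^{\pay{Equal-Split}})$. By \Cref{claim:alpha_v}, $L\le B_{\alg{MES}}/n$. The goal is
\[
\sum_i\min(\pi^{\pay{MES-Style}}_i,L)\ \ge\ \sum_i\min(\pi^{\pay{Equal-Split}}_i,L).
\]

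I would prove this by induction over the projects of $P_0$, processed in the \pay{MES-Style} order, tracking the partial payment vectors $\pi^{MS,t}$ and $\pi^{ES,t}$. The key per-project fact concerns how \pay{MES-Style} pays for a project $p$. It charges each supporter the amount $\min(\text{remaining budget},\rho c(p))$. Since each voter's cap $B_{\alg{MES}}/n$ is at least $L$, a voter is exhausted only after its cumulative payment already exceeds $L$. So \pay{MES-Style} either raises every supporter's truncated payment by the full $c(p)$ in aggregate, or brings every supporter to at least $L$. Meanwhile, \pay{Equal-Split} raises the truncated sum over $N_p$ by at most $\min\bigl(c(p),\sum_{i\in N_p}(L-\min(\pi^{ES}_i,L))\bigr)$.

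Naive per-project induction fails, because the two vectors are distributed differently across voters. So I would strengthen the invariant to a majorization-type statement. For every voter set $S$, the truncated total $\sum_{i\in S}\min(\pi_i,L)$ under \pay{MES-Style} should dominate the corresponding quantity under \pay{Equal-Split}, up to the budget each side has spent. The intended reason is that \pay{MES-Style} equalizes payments among supporters (it is "leveling"), and $\min(\cdot,L)$ is concave.

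If this invariant proves awkward, my fallback is a direct argument at the final level $L$. Voters with $\pi^{MS}_i\ge L$ contribute the maximum possible amount $L$. For the remaining voters, every project they support was fully funded by \pay{MES-Style}. Using the \alg{MES} payment rule ($\rho c(p)\ge c(p)/|N_p|$ is paid by every non-exhausted supporter), I would bound their \pay{Equal-Split} truncated contributions by their \pay{MES-Style} ones project by project.
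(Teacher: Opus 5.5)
Your fallback argument is correct and is essentially the paper's proof. The paper replaces the \pay{Equal-Split} payments by $\min\{\pi_i^{\pay{Equal-Split}},B_{\alg{MES}}/n\}$, which leaves the level unchanged because it is at most $B_{\alg{MES}}/n$. It then shows pointwise domination by \pay{MES-Style}, using the same observation you make: a voter who is never exhausted paid $\rho(p)c(p)\ge c(p)/|N_p|$ for every supported project, since a project that cannot be fully funded exhausts all its supporters. You truncate at $L$ rather than at $B_{\alg{MES}}/n$, which is equivalent.

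This already gives $\min(\pi_i^{\pay{MES-Style}},L)\ge\min(\pi_i^{\pay{Equal-Split}},L)$ for every single voter, so the majorization induction is unnecessary. Its stated dichotomy is also false: a supporter whose payment already exceeds $L$ adds nothing to the truncated sum. So drop the main plan and write up the fallback.
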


It is important to note that $\alpha^{\pay{Null}}< \alpha^{\pay{Value-Based}}$ whenever $P_0\neq \emptyset$, and that the gap between those two values can be quite large in practice, while the other gaps are typically smaller.\footnote{For instance, when running $[\alg{Greedy},\alg{MES}]([0.5B, B])$ on the real-world PB instances considered in \Cref{sec:experiments}, our pre-allocation methods obtain the following minimum voter budget shares on average: $\alpha^{\pay{Null}}=0.50$,  $\alpha^{\pay{Value-Based}}=0.92$, $\alpha^{\pay{Equal-Split}}=0.94$, and $\alpha^{\pay{MES-Style}}=0.97$. See \Cref{app:alpha_v_in_practice} for a more detailed overview of these values in practice. \label{fn:alpha-values}}

We will now consider which of our pre-allocation methods \alg{M} guarantee that the outcome of the voting rule $\alg{MES}^{\pay{M}}$ satisfies a proportionality property of the form described in \Cref{def:EJR+xk}.
These guarantees will be parameterized by the minimum voter budget share $\alpha^{\pay{M}}$ corresponding to \alg{M}.

Let $P^*=\alg{MES}^{\pay{M}}(B_{\alg{MES}},P_0)$ be the outcome of $\alg{MES}^{\pay{M}}$ when it is provided with rule budget $B_{\alg{MES}}$ and set of pre-selected projects $P_0$. Any proportionality guarantee that we can prove for $P^*$ directly translates to the same guarantee for the outcome of any mixed rule $\mix$ with $R_k=\alg{MES}^{\pay{M}}$, provided that the input $P_{k-1}$ for $R_k$ equals $P_0$. This is because $P^*$ will be contained in the output of $\mix$.

We state our proportionality guarantees formally for mixed rules $\mix$ containing $\alg{MES}^\pay{M}$. Note that the parameter $\alpha^\pay{M}$ of the resulting guarantees cannot be directly inferred from the inputs to $\mix$, but depends on the partial outcome provided to $\alg{MES}^\pay{M}$ during the execution of $\mix$. That is, we first need to partially run the rule $\mix$ before determining how good a proportionality guarantee on its outcome we can give. While this might sound like a disadvantage, we remind the reader of our empirical observation that the values of $\alpha^\pay{M}$ are often close to $1$ in realistic scenarios (see \Cref{fn:alpha-values}).

We start with a guarantee for $\alg{MES}^\pay{Null}$, which is a straightforward extension of existing results~(Proposition 19 of \citet{BrPe23a}).

\begin{restatable}{theorem}{nullEJR}\label{thm:null_EJR}
    Consider a mixed voting rule $\mix$ such that $\alg{MES}^\pay{Null}\in \mix$ with available budget share $\alpha$.
    Then, the outcome of $\mix$ satisfies $\alpha$-budget EJR+ up to any project.
\end{restatable}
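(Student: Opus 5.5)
First I will reduce the statement to a claim about the single rule $\alg{MES}^\pay{Null}$. Let $R_k=\alg{MES}^\pay{Null}$ receive pre-selected set $P_{k-1}=P_0$ and rule budget $B_k$, so that $\alpha B=B_k-c(P_0)$, and let $P^*=\alg{MES}^\pay{Null}(B_k,P_0)$. Later rules only add projects, so the final outcome of $\mix$ contains $P^*$. The property ``$\alpha$-budget EJR+ up to any project'' is monotone in the sense of \Cref{def:alpha-budget}, since adding projects can only increase $c(A_i\cap P^*)$ and shrink the set of candidate projects $p$. It therefore suffices to show that $P^*$ satisfies EJR+ up to any project for the instance $(\alpha B,P,A,c)$.

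Under \pay{Null}, every voter starts with budget $b_i=\alpha B/n$, and \alg{MES} then runs with pre-selected set $P_0$. I will treat the projects in $P_0$ as already present and not purchasable, so that \alg{MES} only buys from $P\setminus P_0$; the same argument also covers the convention in which $P_0$ is treated as available. Let $Q=P^*\setminus P_0$ be the set of projects bought by \alg{MES}.

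The core step is the standard EJR+ argument of \citet{BrPe23a}, which I will redo with budget $\alpha B$ and run through to termination. Fix $N'$ and a project $p\in\bigcap_{i\in N'}A_i\setminus P^*$. Suppose for contradiction that every $i\in N'$ has $c(A_i\cap P^*)+c(p)\le \alpha|N'|B/n$. Then $c(A_i\cap Q)\le \alpha|N'|B/n-c(p)$ for every $i\in N'$.

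Now consider the moment \alg{MES} terminates. Since $p\notin P^*$, the supporters of $p$ cannot afford it, so in particular $\sum_{i\in N'}b_i^{\mathrm{end}}<c(p)$. Therefore $\sum_{i\in N'}(\text{spent}_i)>\alpha|N'|B/n-c(p)$, and some voter in $N'$ has spent more than $\alpha B/n-c(p)/|N'|$.

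The key lemma from the MES analysis is a bound on prices. As long as every $i\in N'$ still has budget at least $c(p)/|N'|$, project $p$ is affordable at price $\rho(p)\le 1/|N'|$ per unit of utility. Hence every project bought up to that point costs each payer at most $c(q)/|N'|$ per unit of cost. So a voter $i\in N'$ pays at most $\mu_i(Q)/|N'|$ before the first time some member of $N'$ drops below $c(p)/|N'|$.

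Let $i$ be the first member of $N'$ whose remaining budget falls below $c(p)/|N'|$, and let $q$ be the purchase at which this happens. Right before $q$, voter $i$ has spent at most $c(A_i\cap Q\setminus\{q\})/|N'|$. In the purchase of $q$, voter $i$ pays at most $\rho c(q)\le c(q)/|N'|$. Hence
\[
\frac{\alpha B}{n}-\frac{c(p)}{|N'|}<\frac{c(A_i\cap Q)}{|N'|}.
\]
Multiplying by $|N'|$ gives $c(A_i\cap Q)+c(p)>\alpha|N'|B/n$, which contradicts the assumption. The remaining case, where no member of $N'$ ever drops below $c(p)/|N'|$, contradicts termination, because $p$ would still be affordable. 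Finally, $c(A_i\cap P^*)\ge c(A_i\cap Q)$ gives the required inequality for $P^*$.

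The main obstacle is stating the price bound precisely. I have to argue that at every round before $i$ drops below $c(p)/|N'|$, the chosen project's $\rho$ is at most $\rho(p)\le 1/|N'|$. I also have to handle voters who pay their whole remaining budget, so that they pay less than $\rho c(q)$; this only helps the inequality.
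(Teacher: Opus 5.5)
Your proof is correct and follows essentially the same route as the paper. Both arguments rest on one fact: while every voter in $N'$ still holds at least $c(p)/|N'|$, project $p$ is available at $\rho(p)\le 1/|N'|$, so every purchase up to then charges members of $N'$ at most $1/|N'|$ per unit of satisfaction. The paper uses an averaging argument over $N'$ to show some project was bought at $\rho>1/|N'|$ and derives the contradiction at the first such purchase, whereas you take the first purchase that pushes a member of $N'$ below $c(p)/|N'|$ and contradict the assumption directly for that voter, which also avoids the degenerate zero-denominator case in the paper's ratio.
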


\begin{proof}

    Let $P^*$ be the outcome of $\alg{MES}^\pay{Null}(B_{\alg{MES}},P_0)$. It is sufficient for us to show that $P^*$ satisfies $\alpha$-budget EJR+ up to any project, as $P^*$ will be contained in the outcome of $\mix$. Assume for a contradiction that $P^*$ does not satisfy $\alpha$-budget EJR+ up to any project. Then, there exists $p\notin P^*$ and voter set $N'\subseteq N_p$ with $c(A_i \cap P^*)+c(p)\leq \frac{|N'|}{n}\alpha B$ for all $i \in N'$.

    Since $p\notin P^*$, we know that this project was not affordable when $\alg{MES}^\pay{Null}$ terminated, and thus the remaining \alg{MES} budgets $b_i^r$ for voters in $N'$ satisfy $\sum_{i\in N'} b_i^r<c(p)$. Therefore, we get that for projects from~$P^*$,
    $$ 
    \frac{\text{spending by voters in } N'}{\text{satisfaction of voters in } N'}=\frac{\sum_{i\in N'}(\frac{\alpha B}{n}-b_i^r)}{\sum_{i\in N'} c(A_i\cap P^*)}>\frac{|N'|\frac{\alpha B}{n}-c(p)}{|N'|(|N'|\frac{\alpha B}{n}-c(p))}=\frac{1}{|N'|}.
    $$
    
    Hence, during the execution of $\alg{MES}^\pay{Null}$ at least one voter has to pay more than $\frac{1}{|N'|}$ per unit satisfaction they received, for a project from $P^*$. This means that at least one project $p'$ with $\rho(p')>\frac{1}{|N'|}$ was selected by $\alg{MES}^\pay{Null}$. Just before the first such project was selected, each voter $i\in N'$ must have spent at most $\frac{c(A_i\cap P^*)}{|N'|}\leq \frac{\alpha B}{n}-\frac{c(p)}{|N'|}$ during the execution of $\alg{MES}^\pay{Null}$, and thus $p$ must have been affordable with $\rho(p)\leq \frac{1}{|N'|}$ at that point. But then it should have been selected over $p'$, leading to a contradiction.
\end{proof}

Thus, $\alg{MES}^\pay{Null}$ meets\,---\,but does not exceed\,---\,our theoretical baseline from \Cref{sec:mixed}. (It does so directly, without following the ``trivial adaptation'' approach from \Cref{obs:1}.) 
However, we can improve upon our baseline by employing the \pay{Value-Based} pre-allocation method. 

\begin{restatable}{theorem}{valuebasedEJR}\label{thm:value_based_EJR}
    Consider a mixed voting rule $\mix$ such that $\alg{MES}^\pay{Value-Based}\in \mix$ with available budget share $\alpha$ and minimum voter budget share $\alpha^{\pay{Value-Based}}\geq \alpha$. Then the outcome of $\mix$ satisfies $\alpha^{\pay{Value-Based}}$-budget EJR+ up to any project.
\end{restatable}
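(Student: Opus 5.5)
We follow the template of the proof of \Cref{thm:null_EJR}, but now count each voter's total endowment $\pi_i+b_i$ rather than only the \alg{MES} budget. Write $\alpha^*=\alpha^{\pay{Value-Based}}$ and let $P^*=\alg{MES}^{\pay{Value-Based}}(B_{\alg{MES}},P_0)$. It suffices to show that $P^*$ satisfies $\alpha^*$-budget EJR+ up to any project, since $P^*$ is contained in the outcome of $\mix$. Suppose not. Then there are $p\notin P^*$ and $N'\subseteq N_p$ with $c(A_i\cap P^*)+c(p)\le \frac{|N'|}{n}\alpha^*B$ for every $i\in N'$. Put $s=|N'|$ and $\beta=\frac{\alpha^*B}{n}$, so every voter has endowment $\pi_i+b_i\ge\beta$.

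The key step is to bound the price per unit of satisfaction that voters in $N'$ pay for projects in $P_0$. Under \pay{Value-Based}, a voter $i$ pays $c(q)/\max\{v(q),v^*\}$ for each $q\in A_i\cap P_0$, which is at most $c(q)/v^*$ per unit of satisfaction. We claim $v^*\ge s$. Since $p\in P\setminus P_0$ (because $p\notin P^*\supseteq P_0$) and $v(p)=|N_p|\ge s$, it is enough to show $c(U(p))+c(p)\le B_{\alg{MES}}$. To verify this, take any voter $i\in N'$. Every $q\in U(p)$ has $v(q)\ge v(p)$, so $q$ is fully funded by equal split. Hence $\sum_{j}\pi_j$ restricted to $U(p)$ equals $c(U(p))$, while $\sum_j(\pi_j+b_j)\le c(P_0)+\alpha B=B_{\alg{MES}}$.

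That accounting alone does not involve $c(p)$, so it may be cleaner to argue via the definition of $v^*$ directly. If $v^*<s\le v(p)$, then $p$ failed the condition, i.e.\ $c(U(p))+c(p)>B_{\alg{MES}}$. In that case I would try to show the violation cannot occur by deriving a contradiction from the fact that voters in $N'$ are left with too little. This step is the main obstacle, and I would first try the following route. If $v^*<s$, then every $q\in P_0$ with $v(q)<s$ is charged at rate $1/v^*>1/s$. Instead, I would use the weaker fact that projects with $v(q)\ge s$ are charged $c(q)/v(q)\le c(q)/s$. For projects with $v(q)<v(p)$, the rate is $1/\max\{v(q),v^*\}$, and this needs $v^*\ge s$. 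So the plan is to establish $v^*\ge v(p)$ or, failing that, $v^*\ge s$ whenever a violation exists, using exhaustiveness-type reasoning on $U(p)$ together with the inequality $c(A_i\cap P^*)+c(p)\le s\beta\le sB_{\alg{MES}}/n$.

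Given that every voter in $N'$ pays at most $1/s$ per unit of satisfaction on $P_0$, the rest mirrors \Cref{thm:null_EJR}. Since $p$ is unaffordable at termination, the remaining budgets satisfy $\sum_{i\in N'}b_i^r<c(p)$. Therefore the total spending of $N'$ on $P^*$, namely $\sum_{i\in N'}(\pi_i+b_i-b_i^r)$, exceeds $s\beta-c(p)$, while their total satisfaction is at most $s(s\beta-c(p))$. So some voter in $N'$ pays more than $1/s$ per unit of satisfaction on some project. By the claim above, this project is not in $P_0$, so \alg{MES} selected some $p'$ with $\rho(p')>1/s$. Consider the moment just before the first such $p'$. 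Each $i\in N'$ has so far spent at most $c(A_i\cap P^*)/s\le\beta-c(p)/s$, so it retains at least $c(p)/s$. Hence $p$ was affordable with $\rho(p)\le 1/s$ and should have been chosen over $p'$, a contradiction.
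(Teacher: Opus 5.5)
Your final paragraph is correct and is exactly the paper's Case 1. Once each voter in $N'$ pays at most $1/s$ per unit of satisfaction on $P_0$, the argument of \Cref{thm:null_EJR} goes through with endowments $\pi_i+b_i\ge\beta$.

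The gap is the claim $v^*\ge s$ on which that argument rests. You correctly reduce it to ruling out the case $v^*<s\le v(p)$, where $p$ fails the threshold condition, i.e.\ $c(U(p))+c(p)>B_{\alg{MES}}$, but you leave that case open. Note that $v^*\ge s$ is not a structural property of the pre-allocation: $v^*$ may even be $0$. So it can only come from the violation hypothesis, and what is actually needed is a direct contradiction whenever $v(p)>v^*$. The route you sketch, ``exhaustiveness-type reasoning'' combined with $s\beta\le sB_{\alg{MES}}/n$, cannot deliver this. The bound $\beta\le B_{\alg{MES}}/n$ ignores how the rebalancing step distributes money, whereas the contradiction comes from $\beta$ being limited by the MES budget $\alpha B$ actually handed out. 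For instance, if $P_0=U(p)$ is approved only by voters outside $N'$, the a~priori bound allows $s\beta$ far above $c(p)$. Only the water-filling forces $s\beta\le\alpha B<c(p)$.

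The missing idea (the paper's Case 2) is to compare with a pre-allocation for the smaller set $U=U(p)$. Every $q\in U$ has $v(q)\ge v(p)>v^*$, so it is fully funded by equal split. Define $\pi^U_i:=\sum_{q\in A_i\cap U}c(q)/|N_q|$. Then $\pi^U_i\le\pi_i$ and $\sum_i\pi^U_i=c(U)$. For $i\in N'$, since $|N_q|\ge|N_p|\ge s$ and $U\subseteq P^*$, we also get $\pi^U_i\le c(A_i\cap U)/s\le\beta-c(p)/s$. Using $b_i\ge\max\{0,\beta-\pi_i\}$ for all $i$ and $\sum_i\pi_i\le c(P_0)$,
\[
c(p)\le\sum_{i\in N'}(\beta-\pi^U_i)\le\sum_{i\in N}\bigl(\max\{0,\beta-\pi_i\}+\pi_i-\pi^U_i\bigr)\le\alpha B+c(P_0)-c(U)=B_{\alg{MES}}-c(U)<c(p),
\]
which is a contradiction. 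The paper phrases this as running the \pay{Value-Based} pre-allocation with pre-selected set $U$. It shows that the minimum voter budget share is then at least $\alpha^*$, while the total MES budget $B_{\alg{MES}}-c(U)$ is below $c(p)$. Hence every violation has $v(p)\le v^*$, so $v^*\ge|N_p|\ge s$, and your remaining argument completes the proof.
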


\Cref{thm:value_based_EJR} is the main theoretical result of our paper. Its proof heavily relies on the fact that voter spending per unit of utility is bounded in the \pay{Value-Based} pre-allocation.

\begin{proof}

Let $\alpha_v=\alpha^{\pay{Value-Based}}$. We let the outcome of $\alg{MES}^\pay{Value-Based}(B_{\alg{MES}},P_0)$ be $P^*$ and assume for contradiction that there exist $p\notin P^*$ and voter set $N'\subseteq N_p$ with $c(A_i \cap P^*)+c(p)\leq \frac{|N'|}{n}\alpha_v B$ for all $i \in N'$. Let $v^*$ be the threshold value from the definition of the \pay{Value-Based} pre-allocation method. We distinguish two cases.

\paragraph{\textbf{Case 1:}} $|N_p|\leq v^*$. 

    From the definition of minimum voter budget share, we know that for each voter $i\in N$, $\pi_i+b_i\geq \frac{\alpha_v B}{n}$.
    Analogously to the proof of \Cref{thm:null_EJR}, we know that for projects from $P^*$,
    \[ 
    \frac{\text{spending by voters in } N' \ (\pi_i \text{ and } b_i)}{\text{satisfaction of voters in } N'}\geq\frac{\sum_{i\in N'}(\frac{\alpha_v B}{n}-b_i^r)}{\sum_{i\in N'} c(A_i\cap P^*)}> \frac{1}{|N'|} \text.
    \]

    Hence, during either the pre-allocation of $P_0$ or the execution of $\alg{MES}$ at least one voter has to pay more than $\frac{1}{|N'|}$ per unit satisfaction they received, for a project from $P^*$. Further, this must be a project from $P^*\setminus P_0 $ as whenever a voter funds projects from $P_0$ during the pre-allocation, they must spend at most $\frac{1}{v^*}\leq \frac{1}{|N_p|}\leq \frac{1}{|N'|}$ per unit satisfaction, from the definition of the \pay{Value-Based} pre-allocation method. Then, a contradiction can be obtained analogously to the proof of \Cref{thm:null_EJR}.
\paragraph{\textbf{Case 2:}} $|N_p| > v^*$. 

    From our definition of the \pay{Value-Based} method this must mean that there exists $P'_0\subset P_0$ with $c(P'_0)+c(p)>B_{\alg{MES}}$ and $|N_{p'}|\geq |N_p|$ for each $p'\in P'_0$, as $p$ was not selected and  $v(p)=|N_p|>v^*$.
    
    Let the voter payments and budgets produced in the pre-allocation of $\alg{MES}^\pay{Value-Based}(B_{\alg{MES}},P_0)$ be $(\pi_i)_{i\in N}$ and $(b_i)_{i\in N}$ respectively. In order to reach a contradiction, we will consider running $\alg{MES}^{\pay{Value-Based}}(B_{\alg{MES}},P_0')$. We let the voter payments and budgets its pre-allocation produces be $(\pi'_i)_{i\in N}$ and $(b'_i)_{i\in N}$ respectively and let its available and minimum voter budget shares be $\alpha'$ and $\alpha'_v$ respectively. Clearly $\pi'_i\leq \pi_i$ for any voter $i$ as $P'_0 \subseteq P_0$, and every project in $P'_0$ was funded fully by voters. We claim that the following inequalities hold:
    \begin{enumerate}[label={(\arabic*)}]
    \setlength{\itemsep}{2pt}
        \item $\alpha'_v\geq \alpha_v$, \label{eq1}
        \item $\frac{c(A_i \cap P'_0)}{|N'|}+\frac{c(p)}{|N'|}\leq(\pi'_i+b'_i)$ for all $i\in N'$, \label{eq2}
        \item $\pi'_i\leq \frac{c(A_i \cap P_0')}{|N'|}$ for all $i\in N'$, and \label{eq3}
        \item $b'_i<\frac{c(p)}{|N'|}$ for some $i\in N'$. \label{eq4}
    \end{enumerate}

    For \ref{eq1}, observe that one (perhaps not optimal) way to choose voter budgets $(b'_i)_{i\in N}$ would be to give each voter $b'_i=b_i+(\pi_i-\pi'_i)\geq b_i$. This is a feasible allocation as $\sum_{i\in N} b'_i=\sum_{i\in N} (b_i+\pi_i-\pi'_i)\leq \alpha B+c(P_0)- c(P'_0)=\alpha'B$ (we can account for the inequality by allocating any remaining budget arbitrarily). Thus, $\alpha'_v\geq \min_{i\in N}\{(\pi'_i+b'_i)\frac{n}{B}\}\geq \min_{i\in N}\{(\pi_i+b_i)\frac{n}{B}\}=\alpha_v$.

    For \ref{eq2}, recall that $P'_0\subset P^*$. Thus we know that for each voter $i\in N'$ the following holds: $c(A_i \cap P'_0)+c(p)\leq c(A_i \cap P^*)+c(p)\leq \frac{|N'|}{n}\alpha_v B\leq\frac{|N'|}{n}\alpha'_v B$, using \ref{eq1}. Further, from the definition of minimum voter budget share: $\alpha'_v\leq (\pi'_i+b'_i)\frac{n}{B}$. Combining these, we obtain the statement above.

    For \ref{eq3}, note that each project $p'\in P'_0\subseteq P_0$ has $|N_{p'}|\geq |N_p|\geq |N'|$. This means that each voter $i\in N'$ paid at most $\frac{1}{|N'|}$ per unit satisfaction they obtained from projects in $P'_0$, i.e.,  $\frac{\pi'_i}{c(A_i \cap P'_0)}\leq \frac{1}{|N'|}$.

     For \ref{eq4}, observe that $c(p)>B_{\alg{MES}}-c(P'_0)=\alpha'B=\sum_{i\in N} b'_i\geq \sum_{i\in N'} b'_i$. Therefore, there exists a voter $i\in N'$ such that $\frac{c(p)}{|N'|}> b'_i$.

    Combining \ref{eq2}, \ref{eq3}, and \ref{eq4}, we obtain a contradiction.
\end{proof}

We can also show that the \pay{Equal-Split} pre-allocation method performs reasonably well in the special case that the set of pre-selected projects was chosen by \alg{Greedy}.

\begin{restatable}{theorem}{equalsplitEJR}\label{thm:equal_split_EJR}
    Suppose $\mix=[R_k]_{k \in [m]}$, where $R_1$ is \alg{Greedy} and $R_2$ is $\alg{MES}^{\pay{Equal-Split}}$ with minimum voter budget share $\alpha^{\pay{Equal-Split}}$. Then, the outcome of $\mix$ %
    satisfies $\alpha^{\pay{Equal-Split}}$-budget EJR+ up to any two projects.
\end{restatable}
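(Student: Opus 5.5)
Suppose the statement fails, and argue by contradiction along the lines of the proofs of \Cref{thm:null_EJR} and \Cref{thm:value_based_EJR}. Let $\alpha_e=\alpha^{\pay{Equal-Split}}$, let $P_0=P_1$ be the output of $\alg{Greedy}(B_1,\emptyset)$, and let $P^*$ be the output of $\alg{MES}^{\pay{Equal-Split}}(B_2,P_0)$. Since $P^*$ is contained in the outcome of $\mix$, it suffices to treat $P^*$. Assume there is a group $N'$ and two distinct projects $p,q\in\bigcap_{i\in N'}A_i\setminus P^*$ with $c(A_i\cap P^*)+c(p)+c(q)\le \frac{|N'|}{n}\alpha_e B$ for all $i\in N'$. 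Note that $|N_p|,|N_q|\ge |N'|$.

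\textbf{Step 1: split $P_0$ by support.} Let $P_0^{hi}=\{p'\in P_0: |N_{p'}|\ge |N'|\}$ and $P_0^{lo}=P_0\setminus P_0^{hi}$. Under \pay{Equal-Split}, every voter in $N'$ pays $c(p')/|N_{p'}|\le c(p')/|N'|$ for each $p'\in P_0^{hi}\cap A_i$. So on $P_0^{hi}$, voters in $N'$ pay at most $\frac{1}{|N'|}$ per unit of satisfaction.

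\textbf{Step 2: use the structure of \alg{Greedy}.} Affordability only decreases over time. Hence, once \alg{Greedy} selects a project with support below $|N'|$, both $p$ and $q$ (which have support at least $|N'|$ and are never selected) must already be unaffordable. Since \alg{Greedy} always picks the most-supported affordable project, $P_0^{hi}$ is exactly a prefix of the selection order. If $P_0^{lo}\neq\emptyset$, we therefore get $c(P_0^{hi})+c(p)>B_1\ge c(P_0)$, i.e.\ $c(P_0^{lo})<c(p)$; symmetrically $c(P_0^{lo})<c(q)$. Consequently, the total payment of $N'$ towards $P_0^{lo}$ is at most $c(P_0^{lo})<c(p)$. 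This is where the ``up to two projects'' slack is consumed: one of the two projects pays for the inefficient \alg{Greedy} projects.

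\textbf{Step 3: counting.} Each $i\in N'$ has endowment $\pi_i+b_i\ge \alpha_e B/n$. Let $b_i^r$ be the leftover \alg{MES} budgets. Since $q$ is not bought, $\sum_{i\in N'}b_i^r<c(q)$. Suppose first that all projects selected in the \alg{MES} phase were bought at $\rho\le \frac{1}{|N'|}$. Then the total spending of $N'$ is at most $\frac{1}{|N'|}\sum_{i\in N'}c(A_i\cap P^*)+c(p)$, which gives
\[
|N'|\tfrac{\alpha_e B}{n} < \tfrac{1}{|N'|}\textstyle\sum_{i\in N'}c(A_i\cap P^*)+c(p)+c(q).
\]
The right-hand side is at most $\max_{i\in N'}c(A_i\cap P^*)+c(p)+c(q)$, which contradicts the assumption. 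The remaining case is when some \alg{MES} project is bought at $\rho>\frac{1}{|N'|}$. Consider the moment just before the first such purchase, and aim to show that $q$ (or $p$) was then affordable at rate $\le \frac{1}{|N'|}$, contradicting the choice made by \alg{MES}.

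\textbf{Main obstacle.} The hard step is this last one, because the payments to $P_0^{lo}$ are spread unevenly across $N'$. I would try to show that each $i\in N'$ still held at least $c(q)/|N'|$ at that moment. To do so, I would combine the per-voter bound $c(A_i\cap P^*)\le \frac{|N'|\alpha_e B}{n}-c(p)-c(q)$ with two facts: voter $i$'s payments towards $P_0^{lo}$ are at most $c(P_0^{lo}\cap A_i)$, and they are absorbed by the extra slack $c(p)/|N'|\cdot |N'|$. If the per-voter version fails, I would fall back on an aggregate exchange argument: raise the low-payers' budgets as in claim~(1) of the proof of \Cref{thm:value_based_EJR}, reducing to the instance with pre-selected set $P_0^{hi}$, for which the argument of that proof applies directly.
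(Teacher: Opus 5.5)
Your Steps~1--2 and the first case of Step~3 are correct and match the paper's proof: $P_0^{hi}$ plays the role of the paper's $P_0'$, and $c(P_0^{lo})<c(p)$ is the one place where \alg{Greedy} is used. The gap is the case you leave open, and neither of your repairs works.

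The inequality $c(P_0^{lo})<c(p)$ bounds the low-support payments of $N'$ only in aggregate. A single voter of $N'$ may have paid the whole cost of a singleton \alg{Greedy} project (at rate $1$) and then been emptied by an \alg{MES} purchase at rate below $1/|N'|$. From then on, $p$ and $q$ cost rate $1/(|N'|-1)$ to the remaining voters. \alg{MES} can drain those voters at rates in $(1/|N'|,1/(|N'|-1))$ without $p$ or $q$ ever being cheapest. So at the critical moment no voter of $N'$ need hold $c(q)/|N'|$.

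The fallback fails for two reasons. First, Case~2 of the proof of \Cref{thm:value_based_EJR} uses $c(p)>B_{\alg{MES}}-c(P_0')$, whereas you only have $c(p)>B_1-c(P_0^{hi})$, and $B_2$ can be much larger than $B_1$. Second, changing initial budgets changes the run of \alg{MES}, so a violation for $P^*$ does not transfer to the modified instance.

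In fact the missing step is false, and so is the statement. The following instance has no ties at any step.
\begin{itemize}
\item \textbf{Voters and budgets.} There are $n=14$ voters: $N'=\{1,2,3,4\}$, a set $X$ of four voters, and a set $O$ of six voters containing some voter $o$. Let $B=B_2=168$ (so $B/n=12$) and $B_1=74$.
\item \textbf{Projects} (cost; supporters): $h$ ($66$; $O$), $s$ ($8$; $\{1\}$), $p,q$ ($10$ each; $N'$), $a$ ($20$; $\{1\}\cup X$), and $r$ ($28$; $\{2,3,4,o\}$).
\item \textbf{Greedy.} \alg{Greedy} picks $h$ (unique largest support). After that only $s$ is affordable, so $P_0=\{h,s\}$.
\item \textbf{Pre-allocation.} \pay{Equal-Split} charges $11$ to each voter in $O$ and $8$ to voter~$1$. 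Rebalancing lifts every endowment to exactly $12$, so $\alpha^{\pay{Equal-Split}}=1$. The resulting \alg{MES} budgets are $4$ for voter~$1$, $1$ for voters in $O$, and $12$ for everyone else.
\item \textbf{MES run.} \alg{MES} first buys $a$ at $\rho=1/5$ (while $\rho(p)=\rho(q)=1/4$ and $\rho(r)=9/28$), which empties voter~$1$. Now $\rho(p)=\rho(q)=1/3>9/28=\rho(r)$, so it buys $r$. This leaves voters $2,3,4$ with $3$ each, so $p$ and $q$ become unaffordable.
\item \textbf{Violation.} Hence $P^*=\{h,s,a,r\}$. Every $i\in N'$ has $c(A_i\cap P^*)+c(p)+c(q)=48=|N'|B/n$: this is $8+20+20$ for voter~$1$ and $28+20$ for the others. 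So $[\alg{Greedy},\alg{MES}^{\pay{Equal-Split}}]$ with budgets $[74,168]$ violates $1$-budget EJR+ up to any two projects.
\end{itemize}

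The paper's proof has the same hole. Its closing appeal to the proof of \Cref{thm:null_EJR} is exactly the per-voter bound that fails for voter~$1$ here.

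What your argument does prove is the case $P_0^{lo}=\emptyset$, for instance when the first rule is \alg{Greedy} with early stopping. There every pre-allocation payment of $N'$ is at rate at most $1/|N'|$, and the per-voter bound goes through.
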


Unlike the \pay{Value-Based} method, \pay{Equal-Split} and \pay{MES-Style} can subject voters to arbitrarily high payments per unit satisfaction. Because of this, the outcome of \alg{MES} with either of these pre-allocation methods may violate arbitrarily weak proportionality notions, and for \pay{MES-Style} this holds even if the set of pre-selected projects was chosen by \alg{Greedy}.

\begin{restatable}{theorem}{violations}\label{thm:violations}
    Let $M\in \{\pay{Equal-Split},\pay{MES-Style}\}$ and $k\in \mathbb{N}$. Consider a mixed voting rule $\mix$ such that $\alg{MES}^\pay{M} \in \mix$ with available budget share $\alpha$. Then, the outcome of $\mix$ does not necessarily satisfy $\alpha$-budget EJR up to $k$ projects. If $k=1$ or $M=\pay{MES-Style}$, this holds even when \alg{MES} is the second rule in $\mix$ and the first rule in $\mix$ is \alg{Greedy} (both tie-breaking in favor of large projects).
\end{restatable}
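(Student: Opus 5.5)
This theorem is a negative result, so the plan is to build explicit counterexample families, one per case, and then check them. One point of interpretation first. "EJR up to $k$ projects" here is the EJR-style weakening, where $N'$ is $\ell$-cohesive for a set $T$ of jointly approved unselected projects. I will actually violate the even weaker requirement that some $i\in N'$ has $c(A_i\cap P^*)+c(P')>\alpha\frac{|N'|B}{n}$ for every $k$-subset $P'$ of commonly approved unselected projects. So every instance will have a group $N'$ that approves many cheap projects of cost $\varepsilon$ that end up unselected, while $c(A_i\cap P^*)$ stays tiny for all $i\in N'$.

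The mechanism is the one already flagged before the theorem: \pay{Equal-Split} and \pay{MES-Style} can charge a voter an arbitrarily high price per unit of satisfaction for projects in $P_0$. This high charge lifts the voter's endowment $\pi_i$ above the level set by rebalancing, so the voter receives $b_i=0$, even though their satisfaction from $P_0$ is tiny.

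\textbf{General case ($M=\pay{Equal-Split}$, $P_0$ arbitrary).}
\begin{itemize}
\item Fix $n$ voters and a group $N'$ of size $g$.
\item Pre-select projects $q_1,\dots$ with a single supporter each, or a few supporters each, all inside $N'$. Each $q_j$ has cost about $\frac{B_{\alg{MES}}}{n}$ or more, so \pay{Equal-Split} charges each $i\in N'$ at least the rebalancing level. These voters then get $b_i=0$.
\item Choose costs so that $c(A_i\cap P_0)$ is nonetheless small relative to $\alpha\frac{gB}{n}$. This needs care, because a voter paying $x$ alone for a project gets utility $x$. To avoid this, make the $q_j$ projects jointly approved by many voters outside $N'$ as well, or let voters in $N'$ share one big project with non-members, whose utility counts but which is split among many supporters.
\item Cleaner option: take a single expensive $q$ approved by all of $N'$ plus a few others, with $c(q)$ set so that $\frac{c(q)}{|N_q|}$ exceeds the fair level. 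Then $c(A_i\cap P^*)=c(q)$. Give $N'$ also $k+1$ cheap common projects.
\item Tune $\alpha$ so that $\alpha\frac{gB}{n}> c(q)+k\varepsilon$. This is possible because $\alpha B$ is large and the rebalancing gives all of it to voters outside $N_q$, who approve nothing cheap of $N'$'s. Hence $N'$ never funds its cheap projects.
\end{itemize}
The key check is the inequality $c(q)<\alpha gB/n$ together with $c(q)/|N_q|\ge$ the water level. The water level is roughly $\frac{\alpha B+c(q)}{n}$, and $n$ is much larger than $|N_q|$, so both inequalities can hold at once.

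\textbf{Case $M=\pay{MES-Style}$, any $k$, with $P_0$ chosen by \alg{Greedy}.}
\begin{itemize}
\item \pay{MES-Style} charges up to a voter's full initial budget $\frac{B_{\alg{MES}}}{n}$ across projects processed earlier.
\item Design \alg{Greedy}'s selection $P_0$ as a sequence of popular projects whose supporter sets overlap on $N'$. Processing in MES order drains the budgets of $N'$ on the first project, and later projects are paid by others.
\item Let \alg{Greedy} pick these projects because they have many supporters. Use its tie-breaking toward large projects so that it does not pick $N'$'s cheap projects, which are less popular.
\item Draining forces $b_i=0$ for $i\in N'$, while each $i\in N'$ obtains only moderate utility. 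Then add $k$ cheap common projects that \alg{Greedy} cannot fit into $B_1$, for example by letting it exhaust $B_1$ exactly with large projects.
\end{itemize}

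\textbf{Case \pay{Equal-Split}, $k=1$, with \alg{Greedy} first.} Adapt the same construction with one cheap project $p$ per voter group. This is consistent with \Cref{thm:equal_split_EJR} guaranteeing only up to two projects.

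\textbf{Main obstacle.} The hard part is the simultaneous parameter tuning. We need $\alpha\frac{|N'|B}{n}$ to exceed $N'$'s satisfaction plus $k$ cheap costs, yet rebalancing must still give $N'$ no budget. I would first fix ratios $|N_q|/n\to0$ and $c(q)=\beta B$, then solve for $\alpha,\beta$ symbolically before instantiating concrete numbers.
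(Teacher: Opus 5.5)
\textbf{The central construction cannot work.} In your ``cleaner option'' every voter of $N'$ approves $q$, so $N'\subseteq N_q$. The rebalancing level is always at least $\alpha B/n$, so $b_i=0$ forces $c(q)/|N_q|\ge \alpha B/n$. That gives $c(q)\ge |N_q|\alpha B/n\ge |N'|\alpha B/n$, contradicting the inequality $c(q)<\alpha|N'|B/n$ you need. The ratio $n/|N_q|$ does not help, since both sides scale with $1/n$.

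The conceptual reason is this. Suppose every voter of $N'$ paid at most $1/|N'|$ per unit of satisfaction for $P_0$. Then the Case~1 argument in the proof of \Cref{thm:value_based_EJR} shows that $N'$ cannot witness any violation at level $\alpha$; that argument uses nothing about \pay{Value-Based} beyond this rate bound. Under \pay{Equal-Split}, a pre-selected project approved by all of $N'$ is charged at rate $1/|N_q|\le 1/|N'|$, so it meets the bound.

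Your \pay{MES-Style} plan fails the same way. Draining $N'$ on the first project, while all budgets are still equal, charges each of them at rate at most $1/|N_q|\le 1/|N'|$, and afterwards they pay nothing. You need the reverse: pre-selected projects charged to members of $N'$ at a rate well above $1/|N'|$.

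The paper does this in two ways:
\begin{itemize}
\item For \pay{Equal-Split}, it uses unit-cost projects $p_i$, each approved only by voter $i$, with $|N'|=n/2$. This is exactly the option you dismissed. Your worry does not apply: the utility $B_{\alg{MES}}/n$ a voter gets is harmless, because the entitlement $\alpha|N'|B/n$ grows with $|N'|$.
\item For \pay{MES-Style}, it has 39 outside voters co-approve all 50 \alg{Greedy}-selected unit projects. These voters are exhausted after 40 projects, and each of the last 10 is charged in full to a single voter.
\end{itemize}

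\textbf{You are also aiming at the wrong axiom.} Requiring $c(A_i\cap P^*)+c(P')>\alpha|N'|B/n$ over $k$-subsets of common unselected projects is $\alpha$-budget EJR+ up to any $k$ projects. That axiom \emph{implies} EJR up to $k$ projects (\Cref{prop:EJR_relationships}), so violating it does not violate the axiom in the statement.

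An EJRk violation needs an $\alpha$-budget $T$-cohesive group $N'$ with at least $k$ projects of $T$ unselected. It also needs $c(A_i\cap P^*)+c(P')\le c(T)$ for every $i\in N'$ and every $P'\subseteq A_i\setminus P^*$ with $|P'|\le k$, not only common projects. With $T$ equal to $k+1$ projects of cost $\varepsilon$, this would require $c(q)\le\varepsilon$. Instead, $T$ must consist of many small projects whose total cost equals the entitlement. The paper takes $2k$ projects of cost $c(T)/(2k)$, where each voter's satisfaction is at most $c(T)/2$ and no voter approves anything else outside $P^*$.

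\textbf{The $k=1$ case for \pay{Equal-Split} after \alg{Greedy} is not a routine adaptation.} By \Cref{thm:equal_split_EJR}, $\alpha^{\pay{Equal-Split}}\ge\alpha$, and \Cref{prop:EJR_relationships}, such outcomes always satisfy $\alpha$-budget EJR up to $2$ projects. So the example must separate EJR1 from EJR2 exactly. The paper needs a finely tuned instance. There, \alg{Greedy} fills its leftover budget with ten single-supporter projects. \alg{MES} then buys a 10-supporter project, leaving the 9-voter group at $2.99+1.5<4.5=c(T)$.
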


\section{Experiments}

\label{sec:experiments}

In this section, we empirically evaluate our framework by applying mixed voting rules to a large dataset of real-world PB instances. We initially consider mixed rules obtained by a combination of \alg{Greedy} and \alg{MES} (with \alg{Greedy} completion), where \alg{MES} is implemented with one of the pre-allocation methods introduced in \Cref{sec:adapting_mes}. We then progressively refine our methodology, and consider other voting rules.

We compute the results for our mixed rules while varying the fraction $\alpha_\alg{G}$ of the budget provided to \alg{Greedy} from $0$ to $1$ in steps of $0.1$. 
Formally, let $\mix^{\pay{M}}=[\alg{Greedy}, \alg{MES}^{\pay{M}}, \alg{Greedy}]$. Then, for an instance $I = (B,P,A,c)$, we compute $\mix^{\pay{M}}([\alpha_\alg{G} B, B, B])$ for all $\alpha_\alg{G} \in \{0, 0.1, 0.2, \dots, 1\}$ and all $\pay{M} \in \{\pay{Null}, \pay{MES-Style}, \pay{Equal-Split}, \pay{Value-Based}\}$. The variable $\alpha_\alg{G}$, which we will refer to as the \textit{greedy (budget) share}, is used to interpolate between the two rules, with $\alpha_\alg{G} = 0$ corresponding to \alg{MES} (with \alg{Greedy} completion) and $\alpha_\alg{G} = 1$ corresponding to \alg{Greedy}.

\paragraph{Dataset} The data for our experiments is obtained from \textsc{Pabulib} \citep{FFP+23a}, a library of PB instances. Our dataset comprises all non-trivial real-world instances with at least 20 projects, which amounts to 313 instances.\footnote{\textsc{Pabulib} contains instances that are artificially generated or trivial instances, where all projects can be funded. The rationale for considering only instances with 20+ projects is discussed in \Cref{app:splitting_budget}.}
We break ties lexicographically by project name.

\paragraph{Measures} \label{subsec:measures}

Since we are combining rules which aim to optimize utilitarian welfare (\alg{Greedy}) and provide proportional representation (\alg{MES}), we evaluate the mixed rules on these criteria. For a given instance $I = (B,P,A,c)$ and outcome $P^\ast \subseteq P$, we compute several numerical measures and then average the results over all instances in our dataset. 

To measure welfare, we use the utilitarian ratio (\Cref{def:utilitarian_ratio}). %
For proportionality, we consider two measures.
First, we check whether $P^*$ satisfies EJR+ up to any project (henceforth abbreviated to \textit{EJR+X}).
Then, to get a more granular understanding of how ``close'' an outcome is to violating EJR+X, we compute the maximum 
value of $\alpha$ for which $P^*$ satisfies $\alpha$-budget EJR+X.\footnote{This is technically a supremum; we look for the smallest value of $\alpha$ for which $P^*$ violates $\alpha$-budget EJR+X.}
This yields a quantitative proportionality measure, analogously to one recently suggested by \citet{BBMP25a}. %

\subsection{Mixing \alg{Greedy} and \alg{MES}}
\label{sec:exp1}

We start by discussing our experimental results for mixing \alg{Greedy} with \alg{MES} using the four different pre-allocation methods defined in \Cref{sec:adapting_mes}.
In \Cref{fig:results_payments_EJR_violations} we observe that, for non-\alg{Null} pre-allocation methods, the number of proportionality violations only significantly increases once \alg{Greedy} has a share of at least $80\,\%$ of the budget, suggesting that even a small amount of budget allocated to \alg{MES} is sufficient to achieve proportional outcomes. On the other hand, we see that even \alg{Greedy} itself satisfies EJR+X on most of the instances, indicating that the property is relatively easy to satisfy in practice. \Cref{fig:results_payments_min_beta_EJR}, showing the trade-off between proportional representation and utilitarian welfare, further supports this observation. The average values of $\alpha$ for which $\alpha$-budget EJR+X is satisfied range between $1.5$ and $2.9$, well above the theoretical guarantees. 

For greedy shares $\alpha_\alg{G}\in \{0.1, 0.2, 0.3\}$, most mixed rules perform similarly, with a slightly lower utilitarian welfare, but about the same proportionality as \alg{MES} ($\alpha_\alg{G} = 0$). This decrease in utilitarian welfare can be explained by the \alg{Greedy} rule having to pick suboptimal projects due to its low budget constraint. We discuss this effect in more detail in \Cref{sec:exp_early_stopping}. The mixed rules perform best for greedy shares in the range $0.6$ to $0.9$. 
For greedy shares $\alpha_\alg{G} \ge 0.5$, all other pre-allocation methods consistently outperform \pay{Null} across both metrics. 
It is also noteworthy that even for $\alpha_\alg{G}= 0.8$, the proportionality of the mixed rule is much closer to \alg{MES} than \alg{Greedy}. 
However, this is partially a side effect of splitting the budget into two parts, which on its own can lead to higher proportionality values on our data. For a detailed discussion of this effect, we refer to \Cref{app:splitting_budget}.

All pre-allocation methods aside from \pay{Null} behave similarly and continue to do so in the remaining experiments. Since \pay{Value-Based} provides the strongest theoretical guarantees, we report only the results of $\alg{MES}^\pay{Value-Based}$ for the remainder of the experimental section.

\begin{figure}[t]
    \begin{subfigure}[t]{.48\textwidth}
        \centering
        \includegraphics[scale=0.6]{figures/EC/EJR_violations.png}
        \caption{Fraction of instances for which EJR+ up to any project is violated.} %
        \label{fig:results_payments_EJR_violations}
    \end{subfigure}
    \hfill
    \begin{subfigure}[t]{.48\textwidth}
        \centering
        \includegraphics[scale=0.609]{figures/EC/compare_payments_functions.png}
        \caption{Maximum $\alpha$ for which $\alpha$-budget EJR+ up to any project is satisfied over the utilitarian ratio.}
        \label{fig:results_payments_min_beta_EJR}
    \end{subfigure}
    \caption{Experimental results for mixing \alg{Greedy} with \alg{MES} (without budget increase, with \alg{Greedy} completion) for different pre-allocation methods and greedy budget shares $\alpha_{G} \in \{0, 0.1, 0.2, \dots, 1\}$. Metrics are averaged over all instances with at least $20$ projects.}
    \label{fig:results_payments}
\end{figure}

\subsection{\alg{MES} with Budget Increase}
\label{sec:exp2}

A well-known issue with \alg{MES} is that it can leave a significant portion of the budget unspent, which is why we complete it with \alg{Greedy} to guarantee an exhaustive outcome. On our dataset, when \alg{MES} is allocated the entire budget ($\alpha_\alg{G} = 0$), it spends only about $62\,\%$ of it, on average. Splitting the budget across multiple rules amplifies this problem: an even split ($\alpha_\alg{G} = 0.5$) results in \alg{MES} spending $18\,\%$ of the total budget, and for $\alpha_\alg{G} = 0.9$, \alg{MES} spends less than $1\,\%$ (see \Cref{app:budget_spending} for details). In the latter case, $[\alg{Greedy}, \alg{MES}^{\pay{M}}, \alg{Greedy}]$ almost degenerates into $[\alg{Greedy}, \alg{Greedy}]$.

A popular way to increase the proportion of the budget spent by \alg{MES} is to artificially increase the budget available to \alg{MES}, until its outcome is close to exhaustive  \citep{PPS21a}.
More precisely, we iteratively increase the budget available to MES by some $\beta$, stopping either when an exhaustive outcome is returned, or just before an infeasible outcome is reached. We generalize this to our setting by altering the input of any pre-allocation method \pay{M} in \Cref{def:rebalancing_process} to use an MES budget share of $\alpha^x =\alpha+\frac{\beta x}{B}$, increasing $x\in \mathbb{N}$ until either stopping condition is reached, calling the resulting rule $\mesincrease{M}$. The proportionality guarantees from Theorems~\ref{thm:null_EJR}--\ref{thm:equal_split_EJR} then hold with respect to the minimum voter budget share produced by this method in its final iteration, which can exceed~$1$. For more details on \alg{MES} with Budget Increase and how we adapt it to the mixed setting, see \Cref{app:budget_increase}.

\begin{figure}[t]
    \begin{minipage}[t]{.48\textwidth}
        \centering
        \includegraphics[scale=0.609]{figures/EC/budget_increment.png}
        \caption{Proportionality and welfare of mixing \alg{Greedy} and \alg{MES} with and without budget increase.}
        \label{fig:results_budget_increment}
    \end{minipage}
    \hfill
    \begin{minipage}[t]{.48\textwidth}
        \centering
        \includegraphics[scale=0.6]{figures/EC/early_stopping_welfare_dip.png}
        \caption{Decrease of welfare when mixing small fractions of \alg{Greedy} with $\alg{MES}$.}
        \label{fig:early_stopping_welfare_dip}
    \end{minipage}
    \label{fig:results_improvements}
\end{figure}

When using this approach on our data, the average fractions of the budget spent by \alg{MES} increase to $95\,\%$ for $\alpha_\alg{G} = 0$, to $46\,\%$ for $\alpha_\alg{G} = 0.5$ and to $8\,\%$ for $\alpha_\alg{G} = 0.9$ (see \Cref{app:budget_spending}).
\Cref{fig:results_budget_increment} compares our experimental results for mixing \alg{Greedy} with \alg{MES} with and without budget increase. When comparing the two methods for a fixed greedy share $\alpha_\alg{G}$, we observe an increase in proportionality and a decrease in welfare. This is unsurprising, since we essentially take some of the budget that the \alg{Greedy} completion spent and allow \alg{MES} to spend it more proportionally. 
However, when considering the entire curve (assuming well-behaved interpolation), the mixed rule with budget increase dominates the one without, in the sense that for any point on the curve without budget increase, we can find a point on the curve with budget increase that improves upon it across both metrics.

Introducing the budget increase heuristic resolves the issue of \alg{MES} spending significantly less than the budget allocated to it, which makes our results for mixing \alg{Greedy} and \alg{MES} more meaningful, in particular for large $\alpha_\alg{G}$. We continue to use this adaptation for the remaining experiments.

\subsection{Stopping \alg{Greedy} Early} \label{sec:exp_early_stopping} 

In \Cref{fig:results_payments_min_beta_EJR,fig:results_budget_increment} we can observe a somewhat surprising drop in utilitarian welfare when mixing small percentages of \alg{Greedy} with $\alg{MES}$. 
We primarily attribute this effect to employing an exhaustive rule as the initial rule in the composition. Many popular rules from the literature iteratively select the next project which maximizes a certain objective among the projects that still fit into the budget constraint. As we approach the budget limit, the pool of affordable projects shrinks, and the chosen projects can become increasingly suboptimal. While this is beneficial in standard PB, where spending the remaining budget on any project is better than leaving it unused, it might not be desirable for the first rule in a mixture, as any unspent budget can still be utilized by later rules. In order to alleviate this problem, we consider using ``\alg{Greedy} with early stopping,'' which we briefly introduced in \Cref{sec:mixed}. This non-exhaustive variant of \alg{Greedy} iteratively picks unselected projects with the maximum number of approvals, until the next project to be selected is not affordable. An interesting theoretical observation is that the pre-allocation methods \pay{Equal-Split} and \pay{Value-Based} coincide when the set of pre-selected projects is chosen by \alg{Greedy} with early stopping.

\Cref{fig:early_stopping_welfare_dip} shows a comparison of the average utilitarian ratio when using \alg{Greedy} with versus without early stopping, mixed with \alg{MES} with budget increase and \alg{Greedy} completion. 
We can see that the decrease of utilitarian welfare for small greedy budget shares is significantly reduced. 
Note that with early stopping, we are no longer interpolating between \alg{MES} and \alg{Greedy}. Even with a greedy share of $\alpha_\alg{G} = 1$, \alg{MES} still runs on the portion of the budget left unspent by the non-exhaustive version of \alg{Greedy}, which explains why the method without early stopping achieves higher welfare for large greedy budget shares. 

For our experiments so far, we have only considered large instances with at least $20$ projects. This is because small instances tend to include projects that cost a large fraction of the total budget, and splitting the budget into two parts effectively prevents them from being selected, leading to suboptimal outcomes. \alg{Greedy} with early stopping can help mitigate this problem; see \Cref{app:splitting_budget}, where we also show empirical results for small instances.

\subsection{Mixing Other Rules} \label{sec:other_rules}

Thus far, we have focused on mixing \alg{Greedy} with \alg{MES}. In the following, we consider mixing \alg{Greedy} with three other popular PB rules. \alg{Greedy} remains the first rule in the mix and its outcome continues to serve as a proxy for a reasonable set of pre-selected projects that doesn't maximize proportionality.
Before discussing our results, we briefly explain each rule and how we adapt it to the mixed setting. For more detailed explanations, we refer to \Cref{app:other_rules_theory}.

\textit{Sequential Phragmén} (\seqPhrag) \citep{BFJL24a,LCG22a} is a sequential voting rule similar to \alg{MES}. \seqPhrag assigns each voter an initial budget of $0$ and then continuously increases all voter budgets, until some group of voters can collectively afford a commonly approved project. That project is selected, and its supporters pay for it from their budgets. This process continues until an exhaustive outcome is reached. Inspired by the \pay{Equal-Split} method,\footnote{The \pay{Value-Based} method and the notion of threshold value are heavily adapted to \alg{MES} and would need to be altered to be meaningful for \seqPhrag.} we adapt \seqPhrag to a set of pre-selected projects by splitting the cost of the pre-selected set $P_0$ equally among supporters, giving each voter $i\in N$ a negative initial budget of $-\sum_{p \in A_i \cap P_0} \frac{c(p)}{|N_p|}$. %

The \textit{Method of Equal Shares with Bounded Overspending} (\alg{BOS}) \citep{GPS+24a} is a recent evolution of \alg{MES} that occasionally allows voters to spend more than their remaining budget. Nevertheless, \alg{BOS} uses the same initial voter budgets as \alg{MES}, and can thus be adapted in the same way. We use the \pay{Value-Based} pre-allocation method in the experiments below.

\textit{Greedy Chamberlin--Courant} (GreedyCC) \citep{TaFa19a} %
greedily selects projects approved by the largest amount of unrepresented voters per unit cost, until an exhaustive outcome is obtained. Like the \alg{Greedy} method, we adapt \alg{GreedyCC} to the mixed rule framework simply by greedily selecting projects from those outside the pre-selected set.

\Cref{fig:results_other_methods} shows the results of mixing \alg{Greedy} with the above rules on our data. As none of the above rules satisfy EJR+X, it is not surprising that all of them perform worse with respect to our proportionality measure than \alg{MES} for $\alpha_\alg{G} = 0$. \alg{BOS} is the only rule that slightly outperforms \alg{MES} across both metrics for some greedy share. This constitutes additional evidence of \alg{BOS} being an effective project selection mechanism.

It is surprising that for \seqPhrag and \alg{GreedyCC}, proportionality increases as we allocate more budget to a purely utilitarian rule (for $0 \leq \alpha_\alg{G}\leq 0.6$). In the case of \alg{GreedyCC}, this behavior can be explained by the fact that this rule does not prioritize selecting projects that any particular voter approves of, beyond the first. Therefore, any group of voters that approves several low-cost projects can cause $\alpha$-budget EJR+X violations for relatively small values of $\alpha$. Thus, allocating more of the budget to \alg{Greedy} might actually give these projects a higher chance of being selected, improving proportionality. In the case of \seqPhrag, we believe this effect could be caused by voters spending their budgets inefficiently, as the rule does not prioritize splitting project costs evenly among supporters.

\begin{wrapstuff}[type=figure,width=0.49\textwidth]
    \centering
    \includegraphics[scale=0.6]{figures/EC/greedy_any.png}
    \vspace{-15pt}
    \caption{Mixing \alg{Greedy} with other rules.}
    \label{fig:results_other_methods}
\end{wrapstuff}

To some extent, evaluating all mixed rules on our EJR+X-based proportionality measure may be seen as unfair, since this is not necessarily the objective these methods are designed to optimize. For example, the mixed rule containing \alg{GreedyCC} may be more appropriately evaluated using a measure based on \textit{coverage}, such as the proportion of voters approving at least one project in the outcome (see \Cref{app:other_methods_results}). On the other hand, using rules with weaker proportionality guarantees, such as \alg{BOS}, or \alg{MES} with \pay{Equal-Split} or \pay{MES-Style} pre-allocation, can sometimes lead to slight empirical improvements for the two measures we are considering, as shown in \Cref{fig:results_payments,fig:results_other_methods}. Combined with the large gap between parameterized theoretical guarantees and practical outcomes, this suggests that exploring different PB rules and their mixed versions might prove fruitful in improving the efficiency and proportionality of PB outcomes.

\section{Extensions}\label{sec:extensions}

In this section, we consider mixed rules for variations of our approval-based PB setting with cost satisfaction. We first examine a special case, namely approval-based multi-winner voting, and then discuss a generalization of our setting that allows for arbitrary satisfaction functions.

\subsection{Multiwinner Voting} 
An approval-based \emph{multiwinner} instance is a type of PB instance where all projects have the same cost, normalized to $1$, and the instance budget $B$ and rule budgets $B_R$ are all integers.
Multiwinner voting is an important and extensively studied field used to model committee elections (see, e.g., \citet{LaSk22a} for an overview).

For multiwinner instances, it is possible to satisfy stronger versions of the welfare and proportionality axioms introduced in \Cref{sec:preliminaries}. Namely, \alg{Greedy} always selects an efficient outcome, achieving a utilitarian ratio of $1$, and \alg{MES} always selects an outcome satisfying \emph{EJR+}.

\begin{definition}\label{def:EJR+_multiwinner}
    An outcome $P^*\subseteq P$ satisfies \emph{EJR+} if for every group of voters $N'\subseteq N$ and every project $p\in \bigcap_{i \in N'} A_i \setminus P^*$, there is a voter $i\in N'$ with
    $c(A_i\cap P^*)=|A_i\cap P^*| \geq  \lfloor\frac{|N'|B}{n}\rfloor$. 
\end{definition}

Both of our adaptations of \alg{Greedy} from \Cref{sec:mixed}, with and without early stopping, are equivalent in the multiwinner setting, and it is easy to see that they achieve $\alpha$-efficiency when provided with a budget share of $\alpha$, analogously to \Cref{prop:greedy_efficiency}. The \alg{MES} proportionality guarantees from \Cref{thm:null_EJR,thm:value_based_EJR,thm:equal_split_EJR} also extend naturally to this constrained setting.

\begin{restatable}{corollary}{cormultiwinner}\label{cor:multiwinner}
    For multiwinner instances:
    \begin{itemize}
        \item The result of \Cref{thm:null_EJR} holds with respect to $\alpha$-budget EJR+.
        \item The result of \Cref{thm:value_based_EJR} holds with respect to $\alpha^\pay{Value-Based}$-budget EJR+.
        \item The result of \Cref{thm:equal_split_EJR} holds with respect to $\alpha^\pay{Equal-Split}$-budget EJR+.
    \end{itemize}
\end{restatable}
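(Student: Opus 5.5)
The plan is to deduce each bullet from the corresponding general PB guarantee, using the unit costs and integer budgets to upgrade the ``up to any project'' (resp. ``up to any two projects'') statement into EJR+ with floors.

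\textbf{First two bullets.} Fix one of these theorems with parameter $\beta$, where $\beta=\alpha$ for \Cref{thm:null_EJR} and $\beta=\alpha^{\pay{Value-Based}}$ for \Cref{thm:value_based_EJR}. The outcome then satisfies $\beta$-budget EJR+ up to any project. Take any $N'$ and any $p\in\bigcap_{i\in N'}A_i\setminus P^*$. Some $i\in N'$ satisfies $|A_i\cap P^*|+1>\beta|N'|B/n$. Since $|A_i\cap P^*|$ is an integer, this gives $|A_i\cap P^*|\geq \lceil \beta|N'|B/n\rceil - 1$ when $\beta|N'|B/n$ is an integer. In that case it is at least $\beta|N'|B/n \ge \lfloor\beta|N'|B/n\rfloor$. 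When $\beta|N'|B/n$ is not an integer, $x+1>y$ for an integer $x$ gives $x\geq\lfloor y\rfloor$. Both cases therefore yield $|A_i\cap P^*|\geq\lfloor \beta|N'|B/n\rfloor$, which is $\beta$-budget EJR+ in the sense of \Cref{def:EJR+_multiwinner} with $B$ replaced by $\beta B$. This part is routine.

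\textbf{Third bullet.} \Cref{thm:equal_split_EJR} only gives the guarantee up to any two projects. Naively, this yields only $|A_i\cap P^*|+2>y$, i.e.\ $|A_i\cap P^*|\geq\lfloor y\rfloor -1$, which is too weak. I would therefore go back into the proof of \Cref{thm:equal_split_EJR} rather than use its statement. In the multiwinner setting, \alg{Greedy} selects the $B_1$ most-approved candidates and is automatically stopped early. Greedy with early stopping makes \pay{Equal-Split} and \pay{Value-Based} coincide, as noted in \Cref{sec:exp_early_stopping}. If that holds here, the third bullet reduces to the second.

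The reduction needs one check: every pre-selected candidate $p'$ must have $|N_{p'}|\geq v^*$, so that it is fully funded under \pay{Value-Based}. I would verify this directly. Any unselected candidate $p$ has $|N_p|\le|N_{p'}|$ for all $p'\in P_0$ by the greedy order. Unit costs and an integer $B_1<B_{\alg{MES}}$ make $p$ affordable after $U(p)$. Hence $v^*=\max_{p\notin P_0}|N_p|\le\min_{p'\in P_0}|N_{p'}|$, so the two payment profiles coincide. This gives $\alpha^{\pay{Equal-Split}}=\alpha^{\pay{Value-Based}}$ and identical \alg{MES} runs.

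There is one edge case. If $B_{\alg{MES}}=B_1$, i.e.\ \alg{MES} is a completion rule with available budget share $\alpha=0$, affordability may fail. In that case I would use the convention $v^*=0$, under which all projects are again fully funded and the two methods still agree. Applying the second bullet then gives $\alpha^{\pay{Equal-Split}}$-budget EJR+.

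The main obstacle I expect is this coincidence argument, in particular handling ties in \alg{Greedy} and making sure the threshold value is computed with respect to $B_{\alg{MES}}$ correctly.
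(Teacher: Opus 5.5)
Your proposal is correct and matches the paper's proof. The first two bullets follow because, with unit costs, integrality turns ``up to any one project'' into the floor bound. The third reduces to the second because \alg{Greedy} selects the most-approved candidates, so every pre-selected project has value at least $v^*$ and \pay{Equal-Split} coincides with \pay{Value-Based}. Your affordability check and the $B_{\alg{MES}}=B_1$ edge case are unnecessary: by definition $v^*$ is either $0$ or the value of some unselected project, so $v^*\le\max_{p\notin P_0}|N_p|\le\min_{p'\in P_0}|N_{p'}|$ in all cases.
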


\mut{Maybe talk about which of the results do not hold/are open here? I don't think it needs much detail, just to wrap up the subsection.}

\subsection{General Satisfaction Functions} \label{sec:general_satisfaction}
Throughout this paper, we have assumed that voters derive their satisfaction from the cost of approved projects. However, this is not the only possible way to model satisfaction. Instead, we will assume that our satisfaction functions are \emph{additive} \citep{PPS21a}, with $\mu_i(P^*)=\sum_{p\in P^*} \mu_i(p)$ for any voter $i\in N$ and project set $P^*\subseteq P$. Additionally, in line with \citet{BFL+23a} we will also assume that voters attain no satisfaction from unapproved projects, i.e., for any voter $i\in N$, with $p\not\in A_i$ we have $\mu_i(p)=0$, and that voters get the same satisfaction from a commonly approved project, i.e., for any voters $i,j\in N$, with $p\in A_i\cap A_j$, $\mu_i(p)=\mu_j(p)$. For notational convenience, we use $\mu(p)$ to refer to the satisfaction of any supporter of $p$, defining welfare, $uw(p)=|N_p|\mu(p)$, and value (for money), $v(p)=\frac{uw(p)}{c(p)}=\frac{|N_p|\mu(p)}{c(p)}$, accordingly. Observe that cost satisfaction functions are a special case of this model, but that it also captures many other functions, such as the cardinality satisfaction function, which assigns a satisfaction of $1$ to every approved project.

\mut{We could try to introduce this satisfaction model a bit quicker/directly. As a reader, I was slightly disappointed by it becoming progressively weaker after we started talking about additive utilities. Maybe something like (not polished at all) "A more general model is that of \citet{BFL+23a} in which the satisfaction of a voter from a project is not bound to its cost, but some arbitrary value for each project. More formally, the function $\mu$ assigns every project a utility and any voter $i\in N$, with $p\in A_i$ has $u_i(p)=\mu_j(p)$, while voters with $p\not\in A_i$ have $\mu_i(p)=0$."}

The voting rules we defined in \Cref{sec:preliminaries} extend to this more general setting, with \alg{Greedy} sequentially selecting the affordable project with highest value for money, and \alg{MES} selecting project $p$ with lowest $\rho(p)$ from
$\sum_{i\in N_p} \min(b_i,\rho(p) \mu(p))=c(p)$ and updating the budgets of its supporters to $b_i-\min(b_i, \rho(p)\mu(p))$ \citep{PPS21a}. Similarly, our pre-allocation methods extend with their existing definitions, using the more general versions of \alg{MES} and value defined above. Voter payments for the \pay{Value-Based} method, which we focus on in this section, are generalized to $\pi_i=\sum_{p\in A_i \cap P_0} \frac{\mu_i(p)}{\max\{v(p),v^*\}}$.

In order to argue about the proportionality of these generalized methods, we cannot use the notion of `EJR+ up to any project', as it is only well defined for cost satisfaction. Instead, we will utilize a strengthening of this axiom that was recently introduced by \citet{Skow26a}; we present it in its parameterized version (compare~\Cref{obs:1}).

\mut{I think it would be nice if there was some intuitive explanation of this strong EJR+ here, especially since it's a property that the reader won't find anywhere else. (Can't say I fully understood it myself)}

\begin{definition}
    An outcome $P^*\subseteq P$ satisfies \emph{$\alpha$-budget Strong EJR+ up to any project} if for every project $p\in P\setminus P^*$, and every group of voters $N'\subseteq N$ with $N'\cap N_p\neq \emptyset$ there is a voter $i\in N'$ with

    $$ \sum_{p'\in (P^*\cup \{p\})\cap A_i} \mu(p') \cdot \min \bigg\{\frac{c(p')}{|N'\cap N_{p'}| \cdot  \mu(p')},\frac{c(p)}{|N'\cap N_{p}| \cdot \mu(p)}\bigg\}>\alpha \frac{B}{n}.$$
\end{definition}

For cost satisfaction this condition reduces to 
$ \sum_{p'\in (P^*\cup \{p\})\cap A_i} \min \Big\{\frac{c(p')}{|N'\cap N_{p'}|},\frac{c(p')}{|N'\cap N_{p}|}\Big\}>\alpha \frac{ B}{n}$,
which is a strengthening of `$\alpha$-budget EJR+ up to any project'.
Nevertheless, the outcome of \alg{MES} always satisfies Strong EJR+ up to any project \citep{Skow26a}.
We use this notion to prove a proportionality guarantee for $\alg{MES}^\pay{Value-Based}$ analogous to \Cref{thm:value_based_EJR}, for our general class of satisfaction functions. 

\begin{restatable}{theorem}{valuebasedstrongEJR}\label{thm:value_based_strong_EJR}
    Consider a mixed voting rule $\mix$ such that $\alg{MES}^\pay{Value-Based}\in \mix$ with available budget share~$\alpha$ and minimum voter budget share $\alpha^{\pay{Value-Based}}\geq \alpha$. Then, the outcome of $\mix$ satisfies $\alpha^{\pay{Value-Based}}$-budget Strong EJR+ up to any project.
\end{restatable}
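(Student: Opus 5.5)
The plan is to mirror the proof of \Cref{thm:value_based_EJR}, replacing the ``spending per unit satisfaction'' argument by the weighted argument underlying Skowron's proof that \alg{MES} satisfies Strong EJR+ up to any project.

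\textbf{Setup.} Let $\alpha_v=\alpha^{\pay{Value-Based}}$ and let $P^*=\alg{MES}^{\pay{Value-Based}}(B_{\alg{MES}},P_0)$. It suffices to prove the property for $P^*$, since $P^*$ is contained in the outcome of $\mix$. Suppose it fails. Then there are $p\notin P^*$ and $N'$ with $N'\cap N_p\neq\emptyset$ such that for every $i\in N'$
\[
\sum_{p'\in (P^*\cup\{p\})\cap A_i}\mu(p')\min\{\rho_{p'},\rho_p\}\le \alpha_v\tfrac{B}{n},
\]
where we abbreviate $\rho_{q}=\frac{c(q)}{|N'\cap N_q|\mu(q)}$. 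Note $\rho_p\ge \frac{1}{v(p)}$ always, since $|N'\cap N_p|\le|N_p|$. We split into cases on whether $v(p)\le v^*$ or $v(p)>v^*$, as before.

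\textbf{Case 1: $v(p)\le v^*$.} Here every pre-allocation payment is cheap. Voter $i$ pays at most $\mu(p')/v^*\le \mu(p')/v(p)\le\mu(p')\rho_p$ for each $p'\in P_0\cap A_i$. Moreover, in the \pay{Value-Based} pre-allocation the supporters in $N'\cap N_{p'}$ jointly pay at most $c(p')$, and they pay equally, so each pays at most $c(p')/|N_{p'}|\le \mu(p')\rho_{p'}$. Hence each $i\in N'$ pays at most $\mu(p')\min\{\rho_{p'},\rho_p\}$ for every $p'\in P_0\cap A_i$. This is exactly the per-project bound that Skowron's argument maintains as an invariant for projects bought by \alg{MES}.

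We then replay that argument on the \alg{MES} phase, treating the endowment $\pi_i+b_i\ge\alpha_v B/n$ as if all of it were \alg{MES} budget and $P_0$ as if it had been bought first. Consider the first moment \alg{MES} buys some $q$ at which some $i\in N'$ pays more than $\mu(q)\min\{\rho_q,\rho_p\}$. Before that moment, every $i\in N'$ has spent at most $\sum_{p'}\mu(p')\min\{\rho_{p'},\rho_p\}$ over the projects bought so far. By the violated inequality, $i$ therefore retains at least $\mu(p)\rho_p$. So $p$ is affordable for $N'\cap N_p$ at price $\rho_p$ per unit satisfaction.

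The key fact to verify is that in \alg{MES}, paying more than $\mu(q)\rho_q$ forces the per-unit price of $q$ to exceed $\rho_q$, which in turn means $q$ had price above $\rho_p$. The first claim holds because the $|N'\cap N_q|$ supporters cannot all pay more than their $c(q)/|N'\cap N_q|$ share. Either way $p$ should have been preferred, a contradiction. If no such $q$ exists, the same retained budget shows $p$ is still affordable at termination, again a contradiction.

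\textbf{Case 2: $v(p)>v^*$.} By the definition of $v^*$ there is $P_0'=U(p)\subseteq P_0$ with $c(P_0')+c(p)>B_{\alg{MES}}$ and $v(p')\ge v(p)$ for all $p'\in P_0'$. As in \Cref{thm:value_based_EJR}, run the pre-allocation on $P_0'$ to obtain $\pi_i',b_i'$, and note $\alpha_v'\ge\alpha_v$ by the same redistribution argument. This gives three facts:
\begin{itemize}
\item Each $p'\in P_0'$ is fully and equally funded, with $\pi_i'$ containing $c(p')/|N_{p'}|\le \mu(p')\min\{\rho_{p'},\rho_p\}$, using $v(p')\ge v(p)$.
\item $\pi_i'+b_i'\ge\alpha_v'B/n$ exceeds $\sum_{P_0'}\mu(p')\min\{\cdot\}+\mu(p)\rho_p$ for $i\in N'\cap N_p$, by the violated inequality restricted to $P_0'\subseteq P^*$.
\item $\sum_{i\in N'\cap N_p}b_i'\le\alpha'B<c(p)$.
\end{itemize}
The second fact gives $b_i'\ge\mu(p)\rho_p=c(p)/|N'\cap N_p|$ for every $i\in N'\cap N_p$. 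Summing over $N'\cap N_p$ then gives $\sum b_i'\ge c(p)$, contradicting the third fact.

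\textbf{Main obstacle.} The hard part is Case 1. There we must check that the $\min\{\rho_{p'},\rho_p\}$ accounting of Skowron's proof survives when some of $P_0$ is only partially funded and when payments are computed from $|N_{p'}|$ rather than $|N'\cap N_{p'}|$. I expect the inequality $c(p')/|N_{p'}|\le\mu(p')\rho_{p'}$ together with $1/v^*\le\rho_p$ to suffice, but the induction over \alg{MES} rounds needs care.
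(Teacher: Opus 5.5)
Your case split, the pre-allocation bound $\mu(p')\min\{\rho_{p'},\rho_p\}$ in Case 1, and all of Case 2 match the paper's proof and are fine. In Case 2 you compare with the \pay{Value-Based} pre-allocation on $P_0'=U(p)$ and then sum $b_i'\ge c(p)/|N'\cap N_p|$ over $N'\cap N_p$, as the paper does. The gap is in the \alg{MES} phase of Case 1.

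Let $q$ be the first project for which some $i\in N'$ pays more than $\mu(q)\min\{\rho_q,\rho_p\}$, and write $\rho(q)$ for the \alg{MES} price of $q$ (not your $\rho_q$).

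\textbf{Subcase $\rho_p\le\rho_q$.} Your argument works here: $\rho(q)>\rho_p$, while $p$ is affordable at price at most $\rho_p$.

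\textbf{Subcase $\rho_q<\rho_p$.} Here you only learn $\rho(q)>\rho_q$, which says nothing about $\rho_p$. So the inference ``which in turn means $q$ had price above $\rho_p$'' is invalid, and so is ``either way $p$ should have been preferred.'' The price $\rho(q)$ may lie in $(\rho_q,\rho_p]$, and then comparing with $p$ yields nothing, since you only know $\rho(p)\le\rho_p$. Your remark that the supporters in $N'\cap N_q$ ``cannot all pay more than their share'' points at the right counting. However, you never show that these voters still hold their share; you only derived retained budget for voters in $N'\cap N_p$.

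The missing step is to run the retained-budget argument on $q$ itself. Since $q\in P^*$, the violated inequality contains, for every $i\in N'\cap N_q$, the term $\mu(q)\min\{\rho_q,\rho_p\}=\mu(q)\rho_q$. No earlier payment exceeded its bound, so each such voter holds at least $\mu(q)\rho_q=c(q)/|N'\cap N_q|$ just before $q$ is bought. Hence $\sum_{j\in N_q}\min(b_j,\rho_q\mu(q))\ge c(q)$, so $\rho(q)\le\rho_q$. Then no voter pays more than $\mu(q)\rho_q$ for $q$, which is a contradiction that does not involve $p$ at all.

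This is exactly the paper's device: set $\hat{p}=p$ if $\rho_q>\rho_p$ and $\hat{p}=q$ otherwise. Then show that $N'\cap N_{\hat{p}}$ could buy $\hat{p}$ with equal payments at a strictly lower price than $q$ was bought at. With that repair, your proof coincides with the paper's.
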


We provide some initial arguments for why \pay{Value-Based} is, in some sense, optimal, with respect to this stronger proportionality notion in \Cref{app:optimality}.

\section{Conclusion}

Taking inspiration from mixed-member electoral systems across the world, we introduced mixed voting rules for participatory budgeting. Using combinations of \alg{Greedy} and \alg{MES} as our primary examples, we established a general framework for analyzing the performance of such mixed rules, from both a theoretical and an empirical perspective.

Our focus on \alg{MES} stems from its strong proportionality guarantees and its potential to rebalance disproportional selections by earlier rules. Just as the proportional component in a mixed-member electoral system (like Scotland's \textit{Additional Member System}) aims to correct the disproportionalities created by district voting, MES can enhance the proportionality of outcomes when used as a later stage of a mixed voting rule. This parallel has the potential to improve the explainability of mixed PB voting rules, particularly when voters are already familiar with mixed-member systems. At the same time, it would be interesting to apply our framework and techniques to existing mixed-member systems, to analyze what proportionality properties they might guarantee.

We proposed several methods to account for an existing set of selected projects when setting initial voter budgets for \alg{MES}.
Since no rule can exceed the baseline proportionality guarantee for all instances (\Cref{prop:in_general_baseline}), we adopted a more granular approach based on key instance properties, parameterizing our guarantees by the minimum voter budget share. This quantity captures how effectively each method accounts for voters' satisfaction for the already selected projects, and allows us to differentiate between alternative methods.
Among these, only the \pay{Value-Based} method exceeded our baseline in terms of proportionality guarantees, by making sure that voters obtain good value for money when paying for the potentially inefficient set of already selected projects. 

Our experiments suggest that mixing \alg{Greedy} and \alg{MES} works best on large instances, when letting \alg{Greedy} spend at least $60\,\%$ of the budget. 
We achieve significantly better results when using the \pay{MES-Style}, \pay{Equal-Split} or \pay{Value-Based} pre-allocation method over the \pay{Null} method, with \pay{MES-Style} consistently performing slightly better than the other two.  
However, as \pay{MES-Style} fails to give \textit{any} theoretical proportionality guarantees (see \Cref{thm:violations}), the \pay{Value-Based} method is preferable, as it gives strong theoretical guarantees with only a slight practical performance trade-off. We identified several key improvements to mixed rules in our experiments, including an adaptation of the \alg{MES} budget increase heuristic to our setting, in order to better balance the budget between the two rules, and stopping the \alg{Greedy} algorithm early, which helped combat the inefficiencies arising from using an exhaustive rule early in the mix. Finally, we conducted some experiments that considered how other voting rules might be adapted to the mixed rule setting. %

\newpage

\begin{acks}
    This work was supported by a \textit{Structural Democracy Fellowship} through the Brooks School of Public Policy at Cornell University. We thank the anonymous reviewers for their valuable feedback. We thank Piotr Skowron, Jannik Peters, Tomasz Wąs, Dominik Peters and Tuva Bardal for helpful discussions.
\end{acks}

\bibliographystyle{ACM-Reference-Format}
\bibliography{abb,algo,additional}

@STRING{proc = {Proceedings of the }}

@STRING{ijcai = { International Joint Conference on Artificial Intelligence (IJCAI)}}

@STRING{aamas = { International Conference on Autonomous Agents and Multiagent Systems (AAMAS)}}

@STRING{aaai = { AAAI Conference on Artificial Intelligence (AAAI)}}

@STRING{acmec = { ACM Conference on Economics and Computation (ACM-EC)}}

@STRING{neurips = { Conference on Neural Information Processing Systems (NeurIPS)}}

@STRING{springer = {Springer-Verlag}}

@STRING{acm = {ACM Press}}

@article{Waru21a,
	author = {W. Suksompong},
	journal = {SIGecom Exchanges},
	number = {1},
	pages = {46–61},
	title = {Constraints in fair division},
	volume = {19},
	year = {2021}}

@inproceedings{FZC17a,
  author    = {R. Freeman and S. M. Zahedi and V. Conitzer},
  title     = {Fair and Efficient Social Choice in Dynamic Settings},
  booktitle = proc # {26th} # ijcai,
  pages     = {4580--4587},
  year      = {2017},
}

@book{ShWa03a,
	editor = {M. S. Shugart and M. P. Wattenberg},
	publisher = {Oxford University Press},
	title = {Mixed-member electoral systems: The best of both worlds?},
	year = {2003}}

@inproceedings{ARS22a,
  title={An EF2X Allocation Protocol for Restricted Additive Valuations},
  author={Akrami, H. and Rezvan, R. and Seddighin, M.},
  booktitle = proc # {31st} # ijcai,
  pages={17--23},
  year={2022}
}

@unpublished{Skow26a,
	author = {P. Skowron},
	note = {Unpublished manuscript. Personal Communication},
	title = {Robust Proportionality Axioms for Additive Utilities},
	year = {2026}}

@inproceedings{BBP26a,
	author = {A. Baychkov and M. Brill and J. Peters},
	booktitle = proc # {40th} # aaai,
	note = {Forthcoming.},
	title = {Utilitarian Guarantees for the Method of Equal Shares},
	year = {2026}}

@inproceedings{GPS+24a,
	author = {G. Papasotiropoulos and S. Z. Pishbin and O. Skibski and P. Skowron and T. W{\k a}s},
	booktitle = proc # {26th} # acmec,
	institution = {arXiv:2409.15005v1 [cs.GT]},
	pages = {841--868},
	title = {Method of Equal Shares with Bounded Overspending},
	year = {2025}}

@inproceedings{BBMP25a,
	author = {T. Bardal and M. Brill and D. McCune and J. Peters},
	booktitle = proc # {39th} # aaai,
	pages = {13581--13588},
	title = {Proportional Representation in Practice: Quantifying Proportionality in Ordinal Elections},
	year = {2025}}

@article{EFI+24a,
	author = {E. Elkind and P. Faliszewski and A. Igarashi and P. Manurangsi and U. Schmidt-Kraepelin and W. Suksompong},
	journal = {ACM Transactions on Economics and Computation},
	number = {3},
	pages = {11:1--11:27},
	title = {The Price of Justified Representation},
	volume = {12},
	year = {2024}}

@inproceedings{TaFa19a,
	author = {N. Talmon and P. Faliszewski},
	booktitle = proc # {33rd} # aaai,
	pages = {2181--2188},
	title = {A framework for approval-based budgeting methods},
	year = {2019}}

@incollection{AzSh21a,
	author = {H. Aziz and N. Shah},
	booktitle = {Pathways Between Social Science and Computational Social Science},
	editor = {T. Rudas and G. P{\'e}li},
	pages = {215--236},
	publisher = {Springer},
	series = {Computational Social Sciences},
	title = {Participatory Budgeting: Models and Approaches},
	year = {2021}}

@inproceedings{BrPe24a,
	author = {M. Brill and J. Peters},
	booktitle = proc # {38th} # aaai,
	pages = {9528--9536},
	publisher = {AAAI Press},
	title = {Completing Priceable Committees: Utilitarian and Representation Guarantees for Proportional Multiwinner Voting},
	year = {2024}}

@article{BFJL24a,
	author = {M. Brill and R. Freeman and S. Janson and M. Lackner},
	journal = {Mathematical Programming},
	number = {1--2},
	pages = {47--76},
	title = {Phragm\'{e}n's Voting Methods and Justified Representation},
	volume = {203},
	year = {2024}}

@book{WMT21a,
	author = {B. Wampler and S. McNulty and M. Touchton},
	publisher = {Oxford University Press},
	title = {Participatory Budgeting in Global Perspective},
	year = {2021}}

@article{IsBr25a,
	author = {J. Israel and M. Brill},
	journal = {Social Choice and Welfare},
	number = {1--2},
	pages = {221--261},
	title = {Dynamic Proportional Rankings},
	volume = {64},
	year = {2025}}

@inproceedings{FFP+23a,
	author = {P. Faliszewski and J. Fils and D. Peters and G. Pierczy{\'n}ski and P. Skowron and D. Stolicki and S. Szufa and N. Talmon},
	booktitle = proc # {32nd} # ijcai,
	pages = {2667--2674},
	title = {Participatory Budgeting: Data, Tools, and Analysis},
	year = {2023}}

@inproceedings{BrPe23a,
	author = {M. Brill and J. Peters},
	booktitle = proc # {24th} # acmec,
	note = {Full version arXiv:2302.01989 [cs.GT]},
	pages = {301},
	publisher = acm,
	title = {Robust and Verifiable Proportionality Axioms for Multiwinner Voting},
	year = {2023}}

@techreport{ReMa23a,
	author = {S. Rey and F. Schmidt and J. Maly},
	institution = {arXiv:2303.00621 [cs.GT]},
	title = {The (Computational) Social Choice Take on Indivisible Participatory Budgeting},
	year = {2025}}

@inproceedings{BFL+23a,
	author = {M. Brill and S. Forster and M. Lackner and J. Maly and J. Peters},
	booktitle = proc # {37th} # aaai,
	pages = {5524--5531},
	publisher = {AAAI Press},
	title = {Proportionality in Approval-Based Participatory Budgeting},
	year = {2023}}

@book{LaSk22a,
	author = {M. Lackner and P. Skowron},
	publisher = {Springer},
	title = {Multi-Winner Voting with Approval Preferences},
	year = {2023}}

@inproceedings{LCG22a,
	author = {M. Los and Z. Christoff and D. Grossi},
	booktitle = proc # {31st} # ijcai,
	pages = {398--404},
	title = {Proportional Budget Allocations: A Systematization},
	year = {2022}}

@inproceedings{PeSk20b,
	author = {D. Peters and P. Skowron},
	booktitle = proc # {21st} # acmec,
	note = {Full version arXiv:1911.11747 [cs.GT]},
	pages = {793--794},
	publisher = acm,
	title = {Proportionality and the Limits of Welfarism},
	year = {2020}}

@inproceedings{Lack20a,
	author = {M. Lackner},
	booktitle = {Proceedings of the 34th {AAAI} Conference on Artificial Intelligence ({AAAI})},
	pages = {2103--2110},
	publisher = {{AAAI} Press},
	title = {Perpetual Voting: Fairness in Long-Term Decision Making},
	year = {2020}}

@inproceedings{PPS21a,
	author = {D. Peters and G. Pierczy{\'n}ski and P. Skowron},
	booktitle = proc # {34th} # neurips,
	pages = {12726--12737},
	title = {Proportional Participatory Budgeting with Additive Utilities},
	year = {2021}}

@inproceedings{KKE+19a,
	author = {M. Kocot and A. Kolonko and E. Elkind and P. Faliszewski and N. Talmon},
	booktitle = {Proceedings of the 28th International Joint Conference on Artificial Intelligence (IJCAI)},
	pages = {385 -- 391},
	title = {Multigoal Committee Selection},
	year = {2019}}

@inproceedings{ALT18a,
	author = {H. Aziz and B. E. Lee and N. Talmon},
	booktitle = proc # {17th} # aamas,
	pages = {23--31},
	publisher = {IFAAMAS},
	title = {Proportionally Representative Participatory Budgeting: Axioms and Algorithms},
	year = {2018}}

@incollection{FSST17a,
	author = {P. Faliszewski and P. Skowron and A. Slinko and N. Talmon},
	booktitle = {Trends in Computational Social Choice},
	chapter = 2,
	editor = {U. Endriss},
	publisher = {AI Access},
	title = {Multiwinner Voting: A New Challenge for Social Choice Theory},
	year = {2017}}

@article{ABC+16a,
	author = {H. Aziz and M. Brill and V. Conitzer and E. Elkind and R. Freeman and T. Walsh},
	journal = {Social Choice and Welfare},
	number = {2},
	pages = {461--485},
	title = {Justified Representation in Approval-Based Committee Voting},
	volume = {48},
	year = {2017}}

@book{KPP04a,
	author = {H. Kellerer and U. Pferschy and D. Pisinger},
	publisher = {Springer},
	title = {Knapsack Problems},
	year = {2004}}

\newpage
\appendix

\section*{Appendix}

\section{Omitted proofs}\label{app:omitted_proofs}

\obsone*

\begin{proof}
    Fix $\alpha>0$ and let $P^*=R(\alpha B,\emptyset)$. Define $R'$ such that whenever it is provided with available budget share $\alpha$ and set of pre-selected projects $P_0$ it produces outcome $R'(\alpha B+c(P_0),P_0)=P_0\cup P^*$. This outcome is feasible for $R'$ as $c(P_0\cup P^*)\leq c(P_0)+c(P^*)\leq c(P_0)+\alpha B$. Further, if $P^*$ satisfies $\alpha$-budget $X$ then so does $P_0\cup P^*$, completing the proof. Note that when $P_0=\emptyset$, $R'$ is identical to~$R$.
\end{proof}

\ingeneralbaseline*

\begin{wrapstuff}[type=table,width=.37\textwidth]
    \centering
    \scalebox{0.85}{
    \begin{tabular}{c c c c c c c }
        \toprule
         & $p_0$ & $p_1$ & \dots & $p_m$ \\
        \midrule
        Cost  & $(1-\alpha)B$ & $\alpha'B/m$ & \dots & $\alpha'B/m$ \\
        \midrule
        $A_1$ & &  \x   & \dots   &  \x \\
        \bottomrule
    \end{tabular}
    }
    \caption{Instance for \Cref{prop:in_general_baseline} with 
    $P_0=\{p_0\}$ and $m$ is chosen such that $\frac{m-1}{m}\alpha'> \alpha $.}
    \label{tab:empirical_baseline_example_restated}
\end{wrapstuff}

\begin{proof}
    Consider the instance in \Cref{tab:empirical_baseline_example_restated}, and let the outcome of some rule \alg{R}, with rule budget $B_R=B$ and pre-selected set of projects $P_0=\{p_0\}$, be $P^*=\alg{R}(B,P_0)$. Due to feasibility, $P^*$ can contain at most $m-2$ projects from $\{p_1,\dots,p_m\}$. The welfare-maximizing feasible outcome for the instance $(\alpha'B,P,A,c)$ is $\{p_1,\dots,p_m\}$, and thus $P^*$ does not have an $\alpha'$-budget utilitarian ratio of $1$ up to one project. Similarly, for some $p_i\notin P^*$, we have $c(A_1\cap P^*)+c(p_i)\leq \frac{m-1}{m}\alpha'B<\alpha' B$, so $p_i$ constitutes an $\alpha'$-budget EJR+ up to any project violation for voter group $N'=\{1\}$. Note that this argument holds even if we restrict ourselves to projects approved by at least one voter, by having $N_{p_0}=\{1\}$ and $N_{p_j}=\{2,\dots,n\}$ for all $1\leq j\leq m$, choosing $n$ suitably large.
\end{proof}

\greedyefficiency*

\begin{proof}
    Fix an instance $I=(B,P,A,c)$, set of pre-selected projects $P_0$ and greedy budget share $\alpha$.
    Let the outcome of \alg{Greedy} with early stopping be $P_G\supseteq P_0$. Observe that the outcome of \alg{Greedy} without early stopping contains $P_G$, and thus has weakly higher utilitarian welfare.
    Let the outcome of \alg{Greedy} with early stopping for the modified instance $I'=(\alpha B,P,A,c)$ be $P'_G$, and let the first project it could not afford be $p'$. Assuming consistent tie-breaking, it is easy to see that $P'_G\subseteq P_G$, as the projects from $P'_G\setminus P_0$ are the first projects that \alg{Greedy} with early stopping would add to $P_0$. Then, compare $P'_G$ to the welfare-maximizing outcome for $I'=(\alpha B,P,A,c)$. $P'_G$ is efficient up to one project for $I'=(\alpha B,P,A,c)$ using results from Knapsack literature (see e.g. Corollary 2.2.2 in \citet{KPP04a} and Proposition 4 of \citet{BBP26a}). Thus, according to \Cref{def:alpha-budget}, $P_G$ is $\alpha$-efficient up to one project. %
\end{proof}

\subsection{Properties of the Minimum Voter Budget Share}\label{app:alphav}

\alphav*

\begin{proof}\label{proof:alpha_v}\leavevmode
    Pre-allocation methods defined using \Cref{def:rebalancing_process} always perform weakly better (in terms of minimum voter budget share) than just giving each voter their fair share of \alg{MES}'s available budget $\frac{\alpha B}{n}$, which means that $\alpha^\pay{M}\geq \min_{i\in N}\{(\pi_i+\frac{\alpha B}{n})\frac{n}{B}\}\geq \frac{\alpha B}{n}\frac{n}{B}=\alpha$.
    
    Also, $\alpha^\pay{M}=\min \{(\pi_i+b_i)\frac{n}{B}\}\leq \sum_{i\in N}\{(\pi_i+b_i)\frac{n}{B}\}\frac{1}{n}\leq \frac{c(P_0)+\alpha B}{B}= \frac{B_{\alg{MES}}}{B}$.
        
\end{proof}

\begin{claim}\label{claim:rebalancing_induces_every_profile}
    Let $B$ be the instance budget, $B_{\alg{MES}}$ the \alg{MES} budget, and $\alpha$ the \alg{MES} budget share.
    Consider a target \alg{MES} budget profile $(b_i)_{i\in N}$ and a target minimum voter budget share $\alpha^\pay{M}$ (\Cref{def:minimum_voter_budget_share}), satisfying the following constraints:
    \begin{itemize}
        \item $0\leq b_i\leq \frac{\alpha^\pay{M} B}{n}$ (if $b_i>\frac{\alpha^\pay{M} B}{n}$, then we could decrease $b_i$ and raise the budgets of other voters to increase the minimum voter budget share and thus have a fairer initial budget allocation)
        \item $\sum_{i\in N} b_i=\alpha B$
        \item $\alpha\leq \alpha^\pay{M}\leq \frac{B_{\alg{MES}}}{B}$
    \end{itemize}

    Then, there exists a choice of $(\pi_i)_{i\in N}$ (for some pre-allocation method \pay{M}) with $\pi_i\geq 0$ and $\sum_{i\in N} \pi_i\leq c(P_0)$ such that the rebalancing step from \Cref{def:rebalancing_process} outputs this profile.
\end{claim}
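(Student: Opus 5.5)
The plan is to construct the voter payments explicitly, as the ``floor'' left after removing the target water from the target endowment level. Set $\tau=\frac{\alpha^\pay{M} B}{n}$ and define
\[
\pi_i=\tau-b_i \qquad \text{for all } i\in N .
\]
Then I will check, in order, that these payments are admissible and that the rebalancing step returns exactly $(b_i)_{i\in N}$.

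\emph{Admissibility of the payments.} The first constraint $b_i\le \tau$ gives $\pi_i\ge 0$. Summing over voters,
\[
\sum_{i\in N}\pi_i=n\tau-\alpha B=\alpha^\pay{M}B-\alpha B .
\]
By the third constraint, $\alpha^\pay{M}\le B_{\alg{MES}}/B$, so this is at most $B_{\alg{MES}}-\alpha B=c(P_0)$. Hence $(\pi_i)_{i\in N}$ satisfies stage (1) of \Cref{def:rebalancing_process}. The lower bound $\alpha\le\alpha^\pay{M}$ is consistent with $\sum_i \pi_i\ge 0$ but is not otherwise needed.

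\emph{The rebalancing step returns $(b_i)_{i\in N}$.} With these payments every voter has endowment $\pi_i+b_i=\tau$. So $(b_i)_{i\in N}$ is feasible for the rebalancing program: it is nonnegative, sums to $\alpha B$, and achieves $\min_i(\pi_i+b_i)=\tau$. To show optimality, take any feasible $(b'_i)_{i\in N}$. Then
\[
\min_i(\pi_i+b'_i)\le \frac{1}{n}\sum_i(\pi_i+b'_i)=\frac{1}{n}\bigl(n\tau-\alpha B+\alpha B\bigr)=\tau ,
\]
so $\tau$ is the maximum value.

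\emph{Uniqueness of the maximizer (main obstacle).} The delicate point is that the rebalancing step is described as choosing the maximizer, so the output must be unambiguous. Suppose $(b'_i)_{i\in N}$ also attains the value $\tau$. Then $\pi_i+b'_i\ge\tau$ for every $i$, while the average of the endowments equals $\tau$. This forces equality throughout, so $b'_i=\tau-\pi_i=b_i$ for every $i$. Thus the rebalancing step outputs exactly the target profile. As a consistency check, its minimum voter budget share is $\tau\cdot\frac{n}{B}=\alpha^\pay{M}$, matching the target.
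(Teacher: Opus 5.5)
Your proposal is correct and follows the paper's proof: the paper uses the same construction $\pi_i=\frac{\alpha^{\pay{M}}B}{n}-b_i$ and the same checks of nonnegativity and $\sum_{i\in N}\pi_i\le \alpha^{\pay{M}}B-\alpha B\le c(P_0)$. Your averaging argument showing that $(b_i)_{i\in N}$ is the unique maximizer of the rebalancing program goes further. The paper leaves this step implicit and only notes that the resulting minimum endowment equals $\frac{\alpha^{\pay{M}}B}{n}$.
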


\begin{proof}
    Choose $\pi_i=\frac{\alpha^\pay{M} B}{n}-b_i$. 
    
    Then, $\pi_i\geq 0$ as $b_i\leq \frac{\alpha^\pay{M} B}{n}$ and $\sum_{i\in N} \pi_i=\alpha^\pay{M} B-\sum_{i\in N}b_i\leq \alpha^\pay{M} B-\alpha B\leq B_{\alg{MES}}-(B_{\alg{MES}}-c(P_0))=c(P_0)$. Further, $\min_{i\in N} \{\pi_i+b_i\}\frac{n}{B}=\frac{\alpha^\pay{M} B}{n}\frac{n}{B}=\alpha^\pay{M}$ as required.
\end{proof}

\alphavrelationship*

\begin{proof}

Trivially $\alpha=\alpha^\pay{Null}$, as $\pi_i+b_i=0+\frac{\alpha B}{n}=\alpha$ for all voters, from the definition of the \pay{Null} method.

Fix an instance $I$, MES budget $B_{\alg{MES}}$ and set of pre-selected projects $P_0$. Suppose two different pre-allocation methods $\pay{M}$ and $\pay{M'}$ produce profiles $\{\pi_i^{\pay{M}}\}_{i\in N}$ and $\{\pi_i^{\pay{M'}}\}_{i\in N}$, which results in minimum voter budget shares of $\alpha^{\pay{M}}$ and $\alpha^{\pay{M'}}$ respectively computed during the rebalancing step. Using \Cref{def:rebalancing_process} we can find that if for all $i \in N$, $\pi_i^{\pay{M}}\leq \pi_i^{\pay{M'}}$, then $\alpha^{\pay{M}}\leq \alpha^{\pay{M'}}$. Using this, we can show the following:

\begin{itemize}

\item For all $i\in N$ we have $0=\pi_i^{\pay{Null}}\leq \pi_i^{\pay{Value-Based}}$ so $\alpha^\pay{Null}\leq \alpha^\pay{Value-Based}$.
\item For all $i\in N$ we again have $\pi_i^{\pay{Value-Based}}\leq \pi_i^{\pay{Equal-Split}}$ as voters pay weakly less for each project they support in the \pay{Value-Based} method. Thus, $\alpha^\pay{Value-Based}\leq \alpha^\pay{Equal-Split}$.
\item 
Define a new pre-allocation method, $\pay{Equal-Split}'$, which selects voter payments  $\pi_i^{\pay{Equal-Split}'}=\min\{\pi_i^{\pay{Equal-Split}}, \frac{B_{MES}}{n}\}$. $\alpha^{\pay{Equal-Split}'}=\alpha^{\pay{Equal-Split}}$ as any voter with $\pi_i>\frac{B_{MES}}{n}$ gets $b_i=0$ in both cases and does not affect the objective of the rebalancing optimization. 

Now suppose there exists $i\in N$ for which $\pi_i^{\pay{MES-Style}}<\pi_i^{\pay{Equal-Split}'}$. In particular, this means that $\pi_i^{\pay{MES-Style}}<\frac{B_{MES}}{n}$ and thus $i$ has never paid less than their fair share $\frac{c(p)}{|N_p|}$ for any project $p\in A_i\cap P_0$. However, under \pay{Equal-Split}, voter $i$ has always paid exactly $\frac{c(p)}{|N_p|}$ for every project $p\in A_i\cap P_0$. This leads to a contradiction, which means that $\pi_i^{\pay{MES-Style}}\geq \pi_i^{\pay{Equal-Split}'}$ for all $i\in N$, and therefore $\alpha^{\pay{Equal-Split}}=\alpha^{\pay{Equal-Split}'}\leq \alpha^{\pay{MES-Style}}$.
\end{itemize}
\end{proof}

\subsection{Proportionality Theorems}\label{app:proportionality}

\equalsplitEJR*

\begin{proof} 
    For this proof, we make some tweaks to the proof of the first case of \Cref{thm:value_based_EJR} (the second case is not needed). Let $\alpha_v=\alpha^{\pay{Equal-Split}}$. We again let the outcome of $\alg{MES}^{\pay{Equal-Split}}$ be $P^*$, but now assume that there exist $p_1,p_2\notin P^*$ and voter set $N'\subseteq N_{p_1}\cap N_{p_2}$ with $c(A_i \cap P^*)+c(p_1)+c(p_2)\leq \frac{|N'|}{n}\alpha_v B$ for all $i \in N'$. Without loss of generality, assume $c(p_1)\leq c(p_2)$ and let $p=p_1$. Thus, we know that $c(A_i \cap P^*)+2c(p)\leq \frac{|N'|}{n}\alpha_v B$ for all $i \in N'$.

    Let the outcome of \alg{Greedy} be $P_0$ (as it is the set of pre-selected projects for $\alg{MES}^{\pay{Equal-Split}}$). \alg{Greedy} did not select $p$, which means that there exists $P'_0\subseteq P_0$ with $|N_{p'}|\geq |N_p|$ for all $p'\in P'_0$ and $c(P'_0)+c(p)>B_{1}\geq c(P_0)$.

    We now consider the spending of voters from $N'$ for projects in $P'_0\cup (P^*\setminus P_0)\subseteq P^*$ (omitting $P_0\setminus P'_0$ as that spending may have been inefficient), and the satisfaction they achieve from those projects:
    $$\frac{\text{spending by voters in } N'}{\text{satisfaction of voters in } N'}\geq\frac{(\sum_{i\in N'}\frac{\alpha_v B}{n}-b^r_i)-c(p)}{\sum_{i\in N'} c(A_i\cap P^*)}>\frac{|N'|\frac{\alpha_v B}{n}-2c(p)}{|N'|(|N'|\frac{\alpha_v B}{n}-2c(p))}=\frac{1}{|N'|}
    $$

    Hence, during either the pre-allocation of $P'_0\subseteq P_0$ or the execution of $\alg{MES}$ (which additionally selected $P_R\setminus P_0$) at least one voter had to pay more than $\frac{1}{|N'|}$ per unit satisfaction they received, for a project from $P_R$. Further, this must be a project from $P_R\setminus P_0$ as whenever a voter funded projects from $P'_0$ during the pre-allocation, they must have spent at most $\frac{1}{|N_p|}\leq \frac{1}{|N'|}$ per unit satisfaction. The rest of the argument follows from the proof of \Cref{thm:null_EJR}.
\end{proof}

\violations*

\begin{proof}
    This is a direct consequence of \Cref{prop:equal_split_not_EJRk,prop:mes_style_fails_EJRk,prop:equal_split_not_EJR} in \Cref{app:negative_results} below, which also prove some slightly stronger claims.
\end{proof}

\cormultiwinner*

\begin{proof}
    The first two claims follow from the fact that EJR+ is equivalent to EJR+ up to one project in the multiwinner setting. For the last claim, observe that when \alg{Greedy} is given some budget $B_G$ it will select the $B_G$ most popular projects. Thus the \pay{Equal-Split} and \pay{Value-Based} methods are equivalent when $P_0$ was selected by \alg{Greedy}, as the value of every project in $P_0$ is weakly higher than the threshold value.
\end{proof}

\valuebasedstrongEJR*

\begin{proof}

    We build on the proof of \Cref{thm:value_based_EJR}. We again let $\alpha_v=\alpha^{\pay{Value-Based}}$, and let the outcome of $\alg{MES}^\pay{Value-Based}(B_{\alg{MES}},P_0)$ be $P^*$. Assume for contradiction that this outcome violates Strong EJR+ up to any project, witnessed by project $p\notin P^*$ and voter set $N'\subseteq N$ with: 
    
    $$\sum_{p'\in (P^*\cup \{p\})\cap A_i} \mu(p')\min \bigg\{\frac{c(p')}{|N'\cap N_{p'}|\mu(p')},\frac{c(p)}{|N'\cap N_{p}|\mu(p)}\bigg\}\leq\alpha_v \frac{B}{n}, \ \forall i \in N'.$$ 
    
    Let $v^*$ be the threshold value, from our definition of the \pay{Value-Based} pre-allocation method.

    \paragraph{Case 1:} $v(p)\leq v^*$

    We begin by showing that each voter $i\in N'$ pays at most $\mu(p')\min\Big\{\frac{c(p')}{|N'\cap N_{p'}|\mu(p')},\frac{c(p)}{|N'\cap N_{p}|\mu(p)}\Big\}$ for each project $p'\in P^*\cap A_i$.
    
    From the definition of the \pay{Value-Based} method, we know that during the pre-allocation step any voter $i$ pays exactly $\frac{\mu(p')}{\max\{v(p'),v^*\}}$ for each project $p'\in P_0\cap A_i$ they approve of, which is bounded from above as follows. 
    \begin{align*}
        \frac{\mu(p')}{\max\{v(p'),v^*\}}
        &\leq \frac{\mu(p')}{\max\{v(p'),v(p)\}} \\
        &=\frac{\mu(p')}{\max\{|N_{p'}|\mu(p')/c(p'),| N_p|\mu(p)/c(p)\}} \\
        &=\mu(p')\min\Big\{\frac{c(p')}{|N_{p'}|\mu(p')},\frac{c(p)}{|N_{p}|\mu(p)}\Big\} \\
        &\leq \mu(p')\min\Big\{\frac{c(p')}{|N'\cap N_{p'}|\mu(p')},\frac{c(p)}{|N'\cap N_{p}|\mu(p)}\Big\}
    \end{align*}

    Similarly, following \citet{Skow26a}, we can also show this bound for any project $p'\in (P^*\setminus P_0)\cap A_i$, i.e., any project approved by voter $i\in N'$ that was selected during the execution of \alg{MES}. Consider the first time a voter $i\in N'$ would pay more than this value for some project $p^*\in P^*$ and let the set of projects selected just before it be $P^{**}\subseteq P^*$. Let $\hat{p}=p$ if 
    $$\frac{c(p^*)}{|N'\cap N_{p^*}|\mu(p^*)}> \frac{c(p)}{|N'\cap N_{p}|\mu(p)},$$ 
    and let $\hat{p}=p^*$ otherwise. From our proportionality violation assumption, observing that $\hat{p}\in (P^*\cup \{p\})\cap A_i$ for each voter $i\in N'\cap N_{\hat{p}}$, and the fact that $\pi_i+b_i\geq \alpha_v\frac{ B}{n}$ for all voters $i\in N$ we know that at this point each voter $i\in N'\cap N_{\hat{p}}$ has at a remaining budget of at least $$\mu_i(\hat{p})\frac{c(\hat{p})}{|N'\cap N_{\hat{p}}|\mu_i(\hat{p})}=\frac{c(\hat{p})}{|N'\cap N_{\hat{p}}|}.$$

    Thus, at this point in the execution of \alg{MES} the voters in $N'\cap N_{\hat{p}}$ have enough money to purchase $\hat{p}$ with equal payments, and thus with a strictly lower $\rho$ than $p^*$ was purchased with, which is a contradiction.

    Now that we have shown this bound for all projects in $P^*$, consider the remaining funds of voters in $N'\cap N_p$ after every project in $P^*$ has been selected. From the definition of the \pay{Value-Based} method, we know that for each voter $i\in N$, $\pi_i+b_i\geq \alpha_v \frac{B}{n}$. We also know that they have spent at most:
    
    \begin{align*}
        \sum_{p'\in P^*\cap A_i} \mu(p')\min \bigg\{\frac{c(p')}{|N'\cap N_{p'}|\mu(p')},\frac{c(p)}{|N'\cap N_{p}|\mu(p)}\bigg\} 
        &\leq \alpha_v \frac{B}{n}-\frac{c(p)}{|N'\cap N_{p}|} \\
        &\leq \pi_i+b_i -\frac{c(p)}{|N'\cap N_{p}|}
    \end{align*}

    Thus, the voters in $N'\cap N_p$ had at least $|N'\cap N_{p}|\frac{c(p)}{|N'\cap N_{p}|}=c(p)$ in unspent funds when \alg{MES} terminated, and would have been able to buy $p$, leading to a contradiction.
    
    \paragraph{Case 2:} $v(p) > v^*$

    We define $P'_0, \pi'_i, b'_i,$ and $\alpha'_v$ in the same way as the proof of \Cref{thm:value_based_EJR}. We claim that the following inequalities hold:
    \begin{enumerate}[label={(\arabic*)}]
        \item $\alpha'_v\geq \alpha_v$, \label{strong_eq1}
        \item $\sum_{p'\in (P_0'\cup \{p\})\cap A_i} \mu(p')\min \big\{\frac{c(p')}{|N'\cap N_{p'}|\mu(p')},\frac{c(p)}{|N'\cap N_{p}|\mu(p)}\big\}\leq(\pi'_i+b'_i)$ for every $i\in N'$, \label{strong_eq2}
        \item $\pi'_i\leq \sum_{p'\in P'_0\cap A_i} \mu(p')\min \big\{\frac{c(p')}{|N'\cap N_{p'}|\mu(p')},\frac{c(p)}{|N'\cap N_{p}|\mu(p)}\big\}$ for every $i\in N'$, and \label{strong_eq3}
        \item $b'_i<\frac{c(p)}{|N'\cap N_{p}|}$ for some $i\in N'\cap N_p$. \label{strong_eq4}
    \end{enumerate}

    The argument for \ref{strong_eq1} is the same as in the proof of \Cref{thm:value_based_EJR}.

    For \ref{strong_eq2}, recall that $P'_0\subset P^*$. Thus we know that for each voter $i\in N'$ the following holds using (1) and our assumption that Strong EJR+ is violated.
    \begin{align*}
        &\sum_{p'\in (P_0'\cup \{p\})\cap A_i} \mu(p')\min \big\{\frac{c(p')}{|N'\cap N_{p'}|\mu(p')},\frac{c(p)}{|N'\cap N_{p}|\mu(p)}\big\} \\
        \leq \ &\sum_{p'\in (P^*\cup \{p\})\cap A_i} \mu(p')\min \big\{\frac{c(p')}{|N'\cap N_{p'}|\mu(p')},\frac{c(p)}{|N'\cap N_{p}|\mu(p)}\big\}\\
        \leq \ & \alpha_v\frac{B}{n} \\
        \leq \ & \alpha'_v\frac{B}{n} 
    \end{align*}
    
    Further, from the definition of minimum voter budget share: $\alpha'_v\leq (\pi'_i+b'_i)\frac{n}{B}$. Combining these, we obtain the statement above.

    For \ref{strong_eq3}, note that each project $p'\in P'_0\subseteq P_0$ has $v(p')\geq v(p)>v^*$, and the cost of those projects was split equally amongst their supporters, giving us the following. 

    \begin{align*}
    \pi'_i
    &= \sum_{p'\in P'_0\cap A_i} \frac{c(p')}{|N_{p'}|}\\
    &= \sum_{p'\in P'_0\cap A_i} \frac{\mu(p')}{\max\{v(p'),v(p)\}} \\
    &\leq \sum_{p'\in P'_0\cap A_i} \mu(p')\min \big\{\frac{c(p')}{|N'\cap N_{p'}|\mu(p')},\frac{c(p)}{|N'\cap N_{p}|\mu(p)}\big\} \\
    \end{align*}

    For \ref{strong_eq4}, observe that $c(p)>B_{\alg{MES}}-c(P'_0)=\alpha'B=\sum_{i\in N} b'_i\geq \sum_{i\in N'\cap N_p} b'_i$. Therefore, there exists $i\in N'\cap N_p$ such that $\frac{c(p)}{|N'\cap N_p|}> b'_i$.

    Finally, note that the RHS of \ref{strong_eq3} and \ref{strong_eq4} sums to the LHS of \ref{strong_eq2}. Thus, combining \ref{strong_eq2}, \ref{strong_eq3}, and \ref{strong_eq4}, we again obtain a contradiction.
\end{proof}

\section{Proportionality Violations}\label{app:negative_results}

In order to formulate the negative results in this section, we will use the proportionality notion of Extended Justified Representation (EJR), and its variants, from PB literature.

\begin{definition} \label{def:cohesiveness}
    Let $T\subseteq P$ and $N'\subseteq N$. We say that voter group $N'$ is \emph{$T$-cohesive} if and only if $T\subseteq \bigcap_{i\in N'} A_i$ and $c(T)\leq \frac{|N'|}{n}B$. We say that $N'$ is $\alpha$-budget $T$-cohesive if and only if $T\subseteq \bigcap_{i\in N'} A_i$ and $c(T)\leq \frac{|N'|}{n}\alpha B$
\end{definition}

\begin{definition}[Extended to PB by \citet{PPS21a}]
    We say that an outcome $P^*\subseteq P$ satisfies \emph{Extended Justified Representation} (EJR) if, for every $T$-cohesive group $N'$, either $T\subseteq P^*$ or there exists a voter $i\in N'$ such that $c(A_i \cap P^*)\geq c(T)$. Following \Cref{def:alpha-budget}, the outcome satisfies $\alpha$-budget EJR if this is instead true for every $\alpha$-budget $T$-cohesive group.
\end{definition}

A feasible outcome satisfying EJR always exists, but cannot be computed in polynomial time, unless P=NP \citep{PPS21a}, which motivated the following relaxation:

\begin{definition}\label{def:EJR1}
    We say that an outcome $P^*\subseteq P$ satisfies Extended Justified Representation up to one project (EJR1) if, for every $T$-cohesive group $N'$, either $T\subseteq P^*$ or there exists a voter $i\in N'$ and a project $p\in A_i \cap (P\setminus P^*)$ such that $c(A_i \cap P^*)+c(p)>c(T)$. The outcome satisfies $\alpha$-budget EJR1 if this is instead true for every $\alpha$-budget $T$-cohesive group.
\end{definition}

The outcome of $\alg{MES}(B_{\alg{MES}},\emptyset)$ always satisfies EJR1, and is computable in polynomial time. EJR+ up to any project implies EJR1 \citep{BrPe23a}.\footnote{We are skipping over some intermediate notions between the two, such as EJR up to any project (EJRx), which are not needed for our results\,---\,see, e.g., \citet{ReMa23a} for an overview.}

We generalize the definition of EJR1 as follows.

\begin{definition}\label{def:EJRk}
    We say that an outcome $P^*\subseteq P$ satisfies \emph{EJR up to $k$ projects (EJRk)} if, for every $T$-cohesive group $N'$, either $|T\cap  P^*|> |T|-k$\footnote{"$|T\cap  P^*|> |T|-k$" represents a generalization of "$T\subseteq P^*$" from the definition of EJR1.} or there exists a voter $i\in N'$ and a set of projects $P'\subseteq A_i \cap (P\setminus P^*)$ with $|P'|\leq k$ such that $c(A_i\cap P^*)+c(P')>c(T)$. $P^*$ satisfies \emph{$\alpha$-budget EJRk} if the above instead holds for every $\alpha$-budget $T$-cohesive group $N'$.
\end{definition}

The idea of EJRk is that either (i) $N'$ is at most $k-1$ projects away from getting $T$, the set of projects they are cohesive over, or (ii) we can give some voter in $N'$ an additional $k$ projects (possibly sourced from outside of $T$) to make them strictly better off than they would be from getting $T$. EJRk reduces to EJR1 when $k=1$. An "up to $k$ projects" style notion has not been considered for the PB setting, and we define it here analogously to the definition of envy-freeness up to $k$ goods in fair division literature (see, e.g., \citet{Waru21a}). 

We can show that EJRk is a weaker axiom than $EJR$ up to any $k$ projects.

\begin{proposition}\label{prop:EJR_relationships}
    Fix $P^*\subseteq P$ and $k\in \mathbb{N}^+$. If $P^*$ satisfies EJR+ up to any $k$ projects, then $P^*$ satisfies \emph{EJRk}.
\end{proposition}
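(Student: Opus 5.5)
We argue by contraposition: if $P^*$ violates EJRk, we exhibit a violation of EJR+ up to any $k$ projects, taking $\alpha=1$ in \Cref{def:EJR+xk}. The key idea is to take the violating group itself as the witness group and to pick the witness projects from the part of $T$ that $P^*$ misses.

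Suppose $P^*$ violates EJRk. Then there is a $T$-cohesive group $N'$ satisfying two conditions:
\begin{enumerate}[label={(\roman*)}]
    \item $|T\cap P^*|\leq |T|-k$, so $|T\setminus P^*|\geq k$;
    \item for every voter $i\in N'$ and every set $P''\subseteq A_i\setminus P^*$ with $|P''|\leq k$, we have $c(A_i\cap P^*)+c(P'')\leq c(T)$.
\end{enumerate}
By (i), we can choose a set $P'\subseteq T\setminus P^*$ with $|P'|=k$.

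This $P'$ is a legitimate witness set for EJR+ up to any $k$ projects. Since $T\subseteq \bigcap_{i\in N'}A_i$, we get $P'\subseteq \bigcap_{i\in N'}A_i\setminus P^*$. Moreover, for each $i\in N'$ we have $P'\subseteq A_i\setminus P^*$ and $|P'|=k$, so condition (ii) applies to $P'$. It gives $c(A_i\cap P^*)+c(P')\leq c(T)$ for every $i\in N'$.

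Cohesiveness supplies the remaining inequality $c(T)\leq \frac{|N'|B}{n}$. Chaining the two bounds, every $i\in N'$ satisfies $c(A_i\cap P^*)+c(P')\leq \frac{|N'|B}{n}$. Hence no voter in $N'$ meets the strict inequality required by \Cref{def:EJR+xk}, and the pair $(N',P')$ witnesses a violation of EJR+ up to any $k$ projects, as required.

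The only point needing care is that EJR+ up to any $k$ projects asks for sets of size exactly $k$ that are jointly approved by all of $N'$. This is precisely why the witness set is drawn from $T\setminus P^*$, whose size is guaranteed to be at least $k$ by (i). The same argument works verbatim for the $\alpha$-budget versions.
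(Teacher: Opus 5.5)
Your proof is correct and is essentially the paper's argument: the paper also picks a $k$-element subset $P'\subseteq T\setminus P^*$ for a $T$-cohesive group with $|T\cap P^*|\leq |T|-k$ and chains $c(A_i\cap P^*)+c(P')>\frac{|N'|B}{n}\geq c(T)$. The only difference is that the paper argues directly rather than by contraposition.
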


\begin{proof}
    Suppose $P^*$ satisfies EJR+ up to any $k$ projects for some $k\geq 1$.
    Let $T\subseteq P$ and consider a $T$-cohesive group $N'\subseteq N$ with $|T\cap P^*|\leq |T|-k$. Then, consider a $k$-size subset of $T\setminus P^*$ and call it $P'$. We know from the definition of EJR+ up to any $k$ projects that $c(A_i\cap P^*)+c(P')> \frac{|N'|B}{n}$ and we are done.
\end{proof}

The relationships between the proportionality notions restated and introduced in this paper are summarized in \Cref{fig:proportionality_diagram}. Our primary motivation for introducing weaker proportionality axioms is to show that some of our pre-allocation methods can perform arbitrarily badly, from a proportionality perspective. 

\tikzstyle{standard} = [rectangle, minimum width=2cm, minimum height=1cm, text centered, draw=black]
\tikzstyle{standard1} = [rectangle, style=double, minimum width=2cm, minimum height=1cm, text centered, draw=black]
\tikzstyle{etc} = [rectangle, minimum width=2cm, minimum height=1cm, text centered]
\tikzstyle{arrow} = [thick,->,>=stealth]

\tikzstyle{standard2} = [rectangle, style=double, minimum width=3.5cm, minimum height=1cm, text centered, draw=blue, text=blue]
\tikzstyle{arrow2} = [thick,->,>=stealth]

\tikzstyle{arrow12} = [thick,->,>=stealth]
\begin{figure}
    \centering
    \scalebox{0.8}{
    \begin{tikzpicture}[node distance=1.7cm]
        \node (EJR+x) [standard] {EJR+X1};
        \node (EJR+x2) [standard1, below of=EJR+x] {EJR+X2};
        \node (etc1) [etc, below of=EJR+x2] {...};
        \node (EJR1) [standard, right of=EJR+x, xshift=1cm] {EJR1};
        \node (EJR2) [standard1, below of=EJR1] {EJR2};
        \node (etc2) [etc, below of=EJR2] {...};
        \node (EJR) [standard, above of=EJR1] {EJR};
     
        \draw [arrow] (EJR) -- (EJR1);
        \draw [arrow] (EJR+x) -- (EJR+x2);
        \draw [arrow] (EJR+x2) -- (etc1);
        \draw [arrow] (EJR1) -- (EJR2);
        \draw [arrow] (EJR2) -- (etc2);
        \draw [arrow] (EJR+x) -- (EJR1);
        \draw [arrow] (EJR+x2) -- (EJR2);
        
        \node (aEJR+x) [standard2, right of=EJR1, xshift=2cm] {$\alpha$-budget EJR+X1};
        \node (aEJR+x2) [standard2, below of=aEJR+x] {$\alpha$-budget EJR+X2};
        \node (aetc1) [etc, below of=aEJR+x2] {...};
        \node (aEJR1) [standard2, right of=aEJR+x, xshift=2.5cm] {$\alpha$-budget EJR1};
        \node (aEJR2) [standard2, below of=aEJR1] {$\alpha$-budget EJR2};
        \node (aetc2) [etc, below of=aEJR2] {...};
        \node (aEJR) [standard2, above of=aEJR1] {$\alpha$-budget EJR};

        \draw [arrow2] (aEJR) -- (aEJR1);
        \draw [arrow2] (aEJR+x) -- (aEJR+x2);
        \draw [arrow2] (aEJR+x2) -- (aetc1);
        \draw [arrow2] (aEJR1) -- (aEJR2);
        \draw [arrow2] (aEJR2) -- (aetc2);
        \draw [arrow2] (aEJR+x) -- (aEJR1);
        \draw [arrow2] (aEJR+x2) -- (aEJR2);

        \draw [arrow12] (EJR2.north east) to [out=45,in=135,looseness=0.5] (aEJR2.north west);
        \draw [arrow12] (EJR+x.north east) to [out=45,in=135,looseness=0.65] (aEJR+x.north west);
        \draw [arrow12] (EJR1.north east) to [out=45,in=135,looseness=0.5] (aEJR1.north west);
        \draw [arrow12] (EJR+x2.north east) to [out=45,in=135,looseness=0.65] (aEJR+x2.north west);
        \draw [arrow12] (EJR.north east) to [out=45,in=135,looseness=0.5] (aEJR.north west);        
    \end{tikzpicture}
    }
    \caption{Relationships between PB proportionality notions, for $\alpha\leq 1$. We refer to "EJR+ up to any $k$ projects" by the abbreviation \textit{EJR+Xk} (so that EJR+X1 corresponds to EJR+X in \Cref{sec:experiments}). Nodes with a double border correspond to axioms that have been proposed in this paper.}
    \label{fig:proportionality_diagram}
\end{figure}

We restate stronger versions of the results that make up \Cref{thm:violations} using the weaker EJR-based proportionality notions defined above. Let $P^*=\alg{MES}^{\pay{M}}(B_{\alg{MES}},P_0)$ be the outcome of \alg{MES} with available budget share $\alpha=\frac{B_{\alg{MES}}-c(P_0)}{B}$. $P^*$ does not necessarily satisfy the proportionality measure corresponding to that pre-allocation method in \Cref{tab:negative_results}, with respect to the available budget share $\alpha$. Note that each of the counterexamples we construct in this section produce an exhaustive outcome, and thus these violations are not a consequence of \alg{MES} not spending enough of its available budget.

\begin{table}[h!]
\small
\centering
\renewcommand{\arraystretch}{1.5}
\begin{tabular}{ l p{7cm} p{3.9cm} }
\toprule
 \textbf{Method \pay{M}} & \textbf{Guarantees} & \textbf{Violated Properties} \\
 \midrule
 \pay{Null} & $\alpha^{\pay{Null}}$-budget EJR+ up to any project (\Cref{thm:null_EJR}) & $\alpha$-budget EJR (\Cref{prop:null_value_based_not_EJR})   \\ \hdashline
 \pay{Value-Based} & $\alpha^{\pay{Value-Based}}$-budget EJR+ up to any project (\Cref{thm:value_based_EJR}) & $\alpha$-budget EJR (\Cref{prop:null_value_based_not_EJR})   \\ \hdashline
 \pay{Equal-Split} & $\alpha^{\pay{Equal-Split}}$-budget EJR+ up to any 2 projects \newline (\Cref{thm:equal_split_EJR})$^\dagger$ & $\alpha$-budget EJR1 (\Cref{prop:equal_split_not_EJR})  \newline   $\alpha$-budget EJRk (\Cref{prop:equal_split_not_EJRk})$^\ddag$ \\ \hdashline
 \pay{MES-Style} & -- & $\alpha$-budget EJRk (\Cref{prop:mes_style_fails_EJRk}) \\ 
 \bottomrule
\end{tabular}
\caption{Proportionality guarantees and violations for the outcome $P^*=\alg{MES}^{\pay{M}}(B_{\alg{MES}},P_0)$. The marker "$^\dagger$" indicates that the corresponding results hold for the special case in which $P_0$ is selected by \alg{Greedy}. Results in the "Violated Properties" column hold even if $P_0$ selected by \alg{Greedy}, except the result indicated by "$^\ddag$" which only holds without this restriction.
}
\label{tab:negative_results}
\end{table}

\begin{proposition}\label{prop:null_value_based_not_EJR}
    Suppose $R\in \mix$ is MES with any pre-allocation method and available budget share $\alpha$. Then the outcome of $\mix$ does not necessarily satisfy $\alpha$-budget EJR.
\end{proposition}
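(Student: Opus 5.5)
The plan is to exhibit a counterexample, and the simplest route is to make $P_0=\emptyset$. With no pre-selected projects, every pre-allocation method (\pay{Null}, \pay{MES-Style}, \pay{Equal-Split}, \pay{Value-Based}) produces $\pi_i=0$ for all voters, so every voter receives $b_i=\alpha B/n$. In this situation $\alg{MES}^{\pay{M}}$ coincides with plain \alg{MES} run on budget $\alpha B$. It therefore suffices to recall that \alg{MES} in the PB setting with cost satisfaction fails EJR (it only guarantees EJR1 / EJR+ up to any project). We then rescale the instance so that the rule budget is $\alpha B$ while the instance budget is $B$. Since EJR is a monotone property evaluated at budget $\alpha B$ (\Cref{def:alpha-budget}), an EJR violation of \alg{MES} on an instance with budget $B'$ becomes an $\alpha$-budget EJR violation once we set $B=B'/\alpha$ and let \alg{MES} be the first rule of $\mix$ with $B_1=\alpha B=B'$.

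Concretely, I will build a small instance directly. Take $n=2$ voters, budget $B'=2$ (fair share $1$ each), and the following projects:
\begin{itemize}
\item $q$, approved by both voters, with cost $2$;
\item $r_1$, approved only by voter $1$, with cost $1$;
\item $r_2$, approved only by voter $2$, with cost $1$.
\end{itemize}
A tie in $\rho$ must be resolved so that \alg{MES} picks a small project first. With cost satisfaction, $q$ has $\rho=1/2$, while each $r_j$ requires payment $1$ for satisfaction $1$, giving $\rho=1$. So \alg{MES} picks $q$ first, which would satisfy EJR.

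I therefore need a different structure: one where \alg{MES} spends a group's money on projects giving low satisfaction per unit, and the group's cohesive set $T$ is then unaffordable. The known style of counterexample uses cohesive groups where \alg{MES}'s greedy selection of a cheap, widely shared project exhausts budgets, leaving each member with satisfaction just below $c(T)$ while $T\not\subseteq P^*$. My plan is to search among three or four voters. Let group $N'=\{1,2\}$ be cohesive over a single expensive project $t$ with $c(t)=|N'|B'/n$. Add a project $s$ approved by voters $1,2,3$ with cost slightly above $B'/n$, which \alg{MES} buys first at a lower $\rho$. After buying $s$, voters $1$ and $2$ can no longer afford $t$, yet their satisfaction $c(s)<c(t)$, so EJR fails while EJR1 holds.

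I expect the main obstacle to be tuning costs so that $s$ is strictly preferred by the $\rho$-rule (or via tie-breaking), $t$ remains unaffordable afterwards, and no other cohesive group interferes. Since the proposition only requires that the outcome does not \emph{necessarily} satisfy $\alpha$-budget EJR, adversarial tie-breaking is allowed. If needed, we may also invoke the standard fact from \citet{PPS21a} that the EJR-satisfying outcome is NP-hard to find while \alg{MES} is polynomial, so \alg{MES} cannot always satisfy EJR unless P$=$NP. I will prefer an explicit instance, though, and then apply the scaling argument above to obtain the claim for every pre-allocation method and every $\alpha$.
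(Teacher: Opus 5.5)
Your argument is essentially the paper's: with $P_0=\emptyset$ the constraints $\pi_i\ge 0$ and $\sum_i\pi_i\le c(P_0)=0$ force every pre-allocation method to give $b_i=\alpha B/n$, so $\alg{MES}^{\pay{M}}$ is plain \alg{MES} on budget $\alpha B$, and a known EJR failure of \alg{MES} (the paper simply cites \citet{PPS21a} for this) rescales to an $\alpha$-budget EJR violation. Your second sketch already gives such an instance with no further tuning (e.g.\ $n=3$, $B'=3$, $t$ of cost $2$ approved exactly by $\{1,2\}$, $s$ of cost $1+\varepsilon$ approved by $\{1,2,3\}$: \alg{MES} buys $s$ at $\rho=1/3<1/2$, after which $t$ is unaffordable and voters $1,2$ have satisfaction $1+\varepsilon<2=c(t)$); you should drop the P$\neq$NP fallback, which only gives a conditional claim, and take $\mix$ to consist of \alg{MES} alone so that no later rule can repair the violation.
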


This follows directly from the fact that the outcome of MES doesn't satisfy EJR \citep{PPS21a}, by considering $R(\alpha B,\emptyset)$.

\begin{proposition}\label{prop:equal_split_not_EJRk}
    Consider a mixed voting rule $\mix$ such that $\alg{MES}^\pay{Equal-Split} \in \mix$ with available budget share $\alpha$. Then, for any two arbitrary positive integers $l, k\in \mathbb{N}^+$, the outcome of $\mix$ does not necessarily satisfy $\frac{\alpha}{l}$-budget EJR up to $k$ projects. 
\end{proposition}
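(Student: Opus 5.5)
We prove \Cref{prop:equal_split_not_EJRk} by an explicit family of instances, parameterized by $l$ and $k$. Since the statement only asks that the outcome \emph{not necessarily} satisfy the property, it suffices to exhibit one instance, one pre-selected set $P_0$ and one MES budget $B_{\alg{MES}}$ such that $P^*=\alg{MES}^{\pay{Equal-Split}}(B_{\alg{MES}},P_0)$ violates $\frac{\alpha}{l}$-budget EJRk. We do not need $P_0$ to come from \alg{Greedy} (cf.\ \Cref{tab:negative_results}), so $P_0$ can be chosen adversarially.

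The mechanism is the weakness already flagged before \Cref{thm:violations}: \pay{Equal-Split} can charge voters an arbitrarily high price per unit of satisfaction. The plan is to let a small group $N'$ pay heavily for $P_0$ while getting almost no satisfaction from it.

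\textbf{Construction.} Take one voter group $N'$ of size $s$ and a large disjoint group $N''$. Let $P_0=\{q_1,\dots,q_r\}$, where each $q_j$ is approved only by $N'$ and is cheap in absolute terms. The number $r$ of such projects is large, so that the \pay{Equal-Split} payments $\pi_i=\sum_j c(q_j)/s$ of every $i\in N'$ exceed the endowment level reached by the rebalancing step. Then every voter of $N'$ gets $b_i=0$, while the whole available budget $\alpha B$ goes to $N''$.

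There is tension here: voters in $N'$ then derive satisfaction $c(P_0)$ from $P_0$, which counts toward EJRk. To defeat this, I would use a set $T$ of $2k$ equal-cost projects (or more) approved by all of $N'$, with $c(T)=\frac{|N'|}{n}\frac{\alpha}{l}B$. The pre-selected projects must then have total satisfaction for $N'$ much smaller than $c(T)$, while still pushing $\pi_i$ above the water level.

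This forces the water level itself to be tiny. Choose $n$ huge, with $N''$ approving only a few projects whose MES cost absorbs the whole $\alpha B$, so that MES spends all of $N''$'s budget exhaustively on projects outside $A_i$ for $i\in N'$. Having $N''$ also support $T$-projects is not allowed, since that would let them fund $T$. Concretely, the $N''$ voters approve projects $r_1,\dots$ costing exactly their pooled budgets.

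\textbf{Parameter fitting.} The key inequalities are:
\begin{itemize}
\item $\frac{c(P_0)}{s}\ge\frac{\alpha^{\pay{Equal-Split}}B}{n}$;
\item $c(P_0)+kc(t)\le c(T)$ for $t\in T$, so that no voter in $N'$ can be lifted above $c(T)$ by $k$ extra projects (all missing projects available to them are in $T$);
\item $|T\cap P^*|=0\le |T|-k$.
\end{itemize}
Since $\alpha^{\pay{Equal-Split}}\le\frac{B_{\alg{MES}}}{B}$, it suffices to make $\frac{c(P_0)}{s}$ comparable to $\frac{c(P_0)+\alpha B}{n}$. With $s/n$ small this is compatible with $c(P_0)\ll \frac{s}{n}\frac{\alpha}{l}B$ only if I adjust the ratios carefully. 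I expect this numeric balancing to be the main obstacle.

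If it fails, I would first try making each $q_j$ approved by $N'$ together with a few extra voters $N'''$ who also pay. Alternatively, I would exploit the fact that satisfaction is cost-based only for approved projects, and split $N'$ into subgroups with disjoint $P_0$-projects. Each voter then pays fully for its own cheap share but gains little, while all subgroups jointly remain $T$-cohesive.

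Finally, I would run MES explicitly to confirm that $N'$ has zero budget, that $T\cap P^*=\emptyset$, and that the outcome is exhaustive.
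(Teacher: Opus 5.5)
Your main construction cannot work, and the obstruction is structural rather than a matter of balancing parameters. If every $q_j\in P_0$ has $N_{q_j}=N'$, then \pay{Equal-Split} charges each voter of $N'$ exactly $1/|N'|$ per unit of satisfaction for $P_0$. That is the same rate at which $N'$ would pay for a project of $T$.

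This is exactly the situation handled by Case~1 of the proof of \Cref{thm:value_based_EJR}. That argument uses only two facts: $\pi_i+b_i\ge \alpha^{\pay{M}}B/n$, and the group's pre-allocation spending is at rate at most $1/|N'|$. So the outcome satisfies $\alpha^{\pay{Equal-Split}}$-budget EJR+ up to any project for every group contained in $N'$. Since $\alpha^{\pay{Equal-Split}}\ge\alpha\ge\alpha/l$, no such group can witness a $\frac{\alpha}{l}$-budget EJRk violation.

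Your own inequalities show the same thing. Forcing $b_i=0$ on $N'$ requires $c(P_0)/s\ge \alpha^{\pay{Equal-Split}}B/n\ge \alpha B/n$; the binding bound is the lower bound $\alpha$, not the upper bound $B_{\alg{MES}}/B$ you used. The violation requires $c(P_0)+k\,c(t)\le c(T)\le\frac{\alpha}{l}\frac{s}{n}B$, i.e.\ $c(P_0)/s<\frac{\alpha B}{ln}$. These are incompatible for every $l\ge 1$. Your first fallback, adding extra paying supporters $N'''$ to each $q_j$, lowers the per-unit price further and only makes things worse.

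The missing idea is the one you mention only in passing as a second fallback. The pre-selected projects must have \emph{fewer} supporters than the cohesive group, so that \pay{Equal-Split} makes voters of $N'$ pay a high price per unit of satisfaction while their satisfaction stays small. If you split $N'$ into $g$ blocks, each with its own private pre-selected projects, and still want $b_i=0$ on $N'$, the same two inequalities require $g>l$, with slack for the $k$ extra projects.

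The paper uses singleton blocks:
\begin{itemize}
\item $n=8l$ and $B=n$;
\item $P_0=\{p_1,\dots,p_{n/2}\}$ with $c(p_i)=1$ and $N_{p_i}=\{i\}$;
\item $T=\{p'_1,\dots,p'_{2k}\}$ with $c(p'_j)=\frac{n}{8kl}=\frac1k$ and $N_{p'_j}=N'=\{1,\dots,\frac n2\}$;
\item one project $p''$ of cost $\frac n2$ approved by the other half.
\end{itemize}
Then $\alpha=\frac12$, and rebalancing gives $\pi_i=1$, $b_i=0$ on $N'$ and $\pi_i=0$, $b_i=1$ elsewhere. \alg{MES} buys only $p''$, so $|T\cap P^*|=0\le|T|-k$. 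Moreover $c(T)=2=\frac{\alpha}{l}\frac{|N'|B}{n}$, and every $i\in N'$ reaches at most $1+k\cdot\frac1k=2=c(T)$, which is not strictly greater. Until you commit to such a construction and verify it, your proposal does not prove the statement.
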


\begin{proof}
    Let $l, k \in \mathbb{N}^+$ be two positive integers. We construct a PB instance $I$, which shows that the outcome of $\mix$ does not satisfy $\frac{\alpha}{l}$-budget EJR up to $k$ projects.
    
    \begin{table}
        \centering
        \begin{tabular}{l c c c c c c c c c}
            \toprule
            Project & $p_1$ & $p_2$ & \dots & $p_{\frac{n}{2}}$ & $p_1'$ & $p_2'$ & \dots & $p_{2k}'$ & $p''$  \\
            \midrule
            Cost  & 1 & 1 & \dots & 1 & $\sfrac{n}{8kl}$ & $\sfrac{n}{8kl}$ & \dots & $\sfrac{n}{8kl}$ & $\sfrac{n}{2}$ \\
            \midrule
            Number of approvals & $1$ & $1$ & $\dots$ & $1$ & $\sfrac{n}{2}$ & $\sfrac{n}{2}$ & $\dots$ & $\sfrac{n}{2}$ & $\sfrac{n}{2}$\\
            \midrule
            $A_1$ & \x &  &  &  & \x & \x & $\cdots$ & \x &    \\
            $A_2$ &  & \x &  &  & \x & \x & $\cdots$ & \x &    \\
            \phantom{x}$\vdots$ &  &  & $\ddots$ & & $\vdots$ & $\vdots$ & $\ddots$ & $\vdots$ \\
            $A_{\frac{n}{2}}$ &  &  &  & \x & \x & \x & $\cdots$ & \x &    \\
            $A_{\frac{n}{2}+1}$ &  &  &  &  &  &  &  &  & \x \\
            $A_{\frac{n}{2}+2}$ &  &  &  &  &  &  &  &  & \x \\
            \phantom{x}$\vdots$ &  &  &  &  &  &  &  &  & $\vdots$ \\
            $A_{n}$ &  &  &  &  &  &  &  &  & \x \\
            \bottomrule
        \end{tabular}
                \caption{Example instance with a budget of $n$ for the proof of \Cref{prop:equal_split_not_EJRk}.} 
        \label{tab:equal_split_not_EJRk}
    \end{table}
    
        Consider the PB instance $I = (B, P, A, c)$ with an even number of voters $n = 8l$ and instance budget $B=n$. Let $P$ contain the following projects:
        \begin{itemize}
            \item $p_i$ for $1\leq i \leq \frac{n}{2}$, with $c(p_i)=1$ and $N_{p_i}=\{i\}$
            \item $p'_j$ for $1\leq j\leq 2k$ with $c(p'_j)=\frac{n}{8k l }$ and $N_{p'_j}=\{1,\dots,\frac{n}{2}\}$
            \item $p''$ with $c(p'')=\frac{n}{2}$ and $N_{p''}=\{\frac{n}{2}+1,\dots,n\}$
        \end{itemize}
        Note, that the approval sets in $A$ are implicitly defined through the approving voter sets $N_p$ for all projects $p \in P$.
        We let $P_0=\{p_1,\dots,p_{\frac{n}{2}}\}$ and consider $\alg{MES}^{\pay{Equal-Split}}(B,P_0)$ which has an available budget share $\alpha=0.5$. The \pay{Equal-Split} method outputs the following voter payments and budgets:
        \begin{itemize}
            \item $\pi_i=1$ and $b_i=0$ for $1\leq i\leq \frac{n}{2}$ and
            \item $\pi_i=0$ and $b_i=1$ for $\frac{n}{2}< i\leq n$.
        \end{itemize}
    
        $\alg{MES}^{\pay{Equal-Split}}$ selects $p''$ and terminates with outcome $P^*=\{p_1,\dots,p_{\frac{n}{2}},p''\}$ with $c(P^*)=n$. 
        
        Now, we choose $T=\{p'_1,\dots p'_{2k}\}$ and $N'=\{1,\dots,\frac{n}{2}\}$ with $c(T)=\frac{n}{4l} = \frac{\frac{0.5}{l} \frac{n}{2} n}{n} = \frac{\frac{\alpha}{l} |N'|B}{n}$. Thus, $N'$ is $\frac{\alpha}{l}$-budget $T$-cohesive (compare \Cref{def:cohesiveness}). However, we have $|T \cap P^\ast| = 0 < k = |T| - k$ and for any voter $i \in N'$ and any set of projects $P' \subseteq A_i \cap (P \setminus
        P^\ast) = T$ with $|P'| \leq k$, we have
        $$
            c(A_i \cap P^*) + c(P') \leq c(p_i) + k\frac{n}{8kl} = 1 + \frac{n}{8l} = 2 = \frac{n}{4l} = c(T).
        $$
        Thus, the outcome of $\mix$ violates $\frac{\alpha}{l}$-budget EJR up to $k$ projects (compare \Cref{def:EJRk}).
\end{proof}

\begin{proposition}\label{prop:mes_style_fails_EJRk}
    Consider a mixed voting rule $\mix$ such that $\alg{MES}^\pay{MES-Style} \in \mix$ with available budget share $\alpha$. Then, for any arbitrary positive integer $k\in \mathbb{N}^+$, the outcome of $\mix$ does not necessarily satisfy $\alpha$-budget EJR up to $k$ projects. This holds even when \alg{MES} is the second rule in $\mix$ and the first rule in $\mix$ is \alg{Greedy} (with tie-breaking in favor of large projects).
\end{proposition}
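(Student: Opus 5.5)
The plan is to give an explicit family of instances, one for each $k$. In each instance, $\alg{Greedy}$ (with tie-breaking in favor of large projects) fills a first rule budget $B_1$ with a set $P_0$, and $\alg{MES}^{\pay{MES-Style}}$ then runs with $B_2=B$. The mechanism to exploit is the weakness noted before \Cref{thm:violations}: \pay{MES-Style} charges voters relative to the \emph{full} fair share $\frac{B_{\alg{MES}}}{n}$, and it can do so for projects from which they get very little satisfaction per unit paid. The target is a group $N'$ in which every member reaches $\pi_i=\frac{B_{\alg{MES}}}{n}$ during pre-allocation. The rebalancing step then gives them $b_i=0$, so $\alg{MES}$ never funds anything for them. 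At the same time $c(A_i\cap P^*)$ should stay tiny, while $N'$ is $\alpha$-budget $T$-cohesive for a set $T$ of $2k$ cheap, commonly approved projects.

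\emph{Gadget.} I would use two halves $X$ and $Y$ of the voters, with $B=B_{\alg{MES}}=n$, so each initial \pay{MES-Style} budget is $1$.
\begin{enumerate}[label=(\roman*)]
\item Give $Y$ a popular project $r$ of cost $\frac{n}{2}$. Add a gadget so that $r$ comes first in the order by number of supporters and drains all of $Y$'s budgets. For example, let $r$ be approved by $Y$ together with a set of dummy voters who have already been drained; the exact gadget is to be fixed.
\item For each $x_i\in X$, add a project $s_i$ of cost $1$ approved by $x_i$ and all of $Y$. These projects have many supporters, so $\alg{Greedy}$ picks them. Since $Y$'s budgets are already $0$ when the $s_i$ are funded, \pay{MES-Style} makes $x_i$ pay the whole unit, giving $\pi_{x_i}=1$ and $b_{x_i}=0$. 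Voter $x_i$ gets satisfaction only $1$ from $P_0$.
\item Give $X$ a block $T=\{p'_1,\dots,p'_{2k}\}$ of projects, each of cost $\frac{n}{8k}\cdot\alpha$ up to constants, as in the proof of \Cref{prop:equal_split_not_EJRk}. They are approved by all of $X$ but have fewer supporters than the $s_i$ and $r$, so $\alpha$-budget $T$-cohesiveness of $X$ holds.
\item Choose $B_1=c(P_0)$ so that $\alg{Greedy}$ stops exactly at $P_0=\{r,s_1,\dots,s_{n/2}\}$, with large-project tie-breaking forcing $r$ in. This makes the available budget share $\alpha=\frac{B-c(P_0)}{B}$ a fixed positive constant.
\end{enumerate}

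\emph{Verification.}
\begin{itemize}
\item First I would verify the pre-allocation payments. This has to hold for every order of $P_0$, since the statement (and the remark in \Cref{method:mes-style}) is order-independent; if that fails, I would restrict to the supporter-count order used in the paper.
\item Next, the rebalancing step gives all remaining budget $\alpha B$ to the voters with $\pi_i<1$, namely $Y$ and the dummies, and nothing to $X$.
\item Then $\alg{MES}$ spends the rest on projects approved only by $Y$. I would add such projects so that the outcome is exhaustive, which rules out attributing the violation to underspending. No project of $T$ is bought, because $X$'s supporters have zero budget.
\item Finally I would check EJRk as in the proof of \Cref{prop:equal_split_not_EJRk}: $|T\cap P^*|=0\le |T|-k$, and for every $x_i$ and every $P'\subseteq T$ with $|P'|\le k$ we get $c(A_i\cap P^*)+c(P')\le 1+k\cdot c(p'_1)\le c(T)$, by scaling $n$ (say $n$ a large multiple of $k$).
\end{itemize}

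\emph{Main obstacle.} The hard part is making $Y$'s budget exhaustion in step (i) occur \emph{before} the $s_i$ are funded. Under the supporter-count order the $s_i$ have $\frac{n}{2}+1$ supporters and would be processed first. Funded first, each $s_i$ is split among $\frac{n}{2}+1$ voters, so $x_i$ pays almost nothing. My first attempt is to give $r$ more supporters than any $s_i$, for example $Y$ plus a disjoint dummy block that is itself drained by $r$, while keeping $r$ within the reach of $\alg{Greedy}$ and of the budget $B_1$. If order-independence cannot be achieved, I would fall back to allowing more costly $s_i$ approved by a shrinking set of $Y$-voters whose budgets run out along the way. In either case I would tune the cost parameters so that the $X$ voters alone cover $s_i$.
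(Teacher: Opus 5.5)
Your core mechanism matches the paper's: drain the co-supporters of the pre-selected projects, so that the remaining ones are charged in full to a lone supporter. That supporter ends with $\pi_i=\frac{B_{\alg{MES}}}{n}$, $b_i=0$ and satisfaction $1$. But the gadget as written does not work. First, the budget accounting is inconsistent. In \pay{MES-Style} every voter starts with $\frac{B_{\alg{MES}}}{n}$, and the payments total at most $c(P_0)=B_{\alg{MES}}-\alpha B$. So for $\alpha>0$ some voters must finish with $\pi_i<\frac{B_{\alg{MES}}}{n}$, and only they receive the $\alpha B$ in the rebalancing step. In your design $X$, $Y$ and the dummies are all drained. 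Indeed your parameters give $c(P_0)=\frac n2+\frac n2=n=B$, i.e.\ $\alpha=0$, so no group can be $\alpha$-budget $T$-cohesive. Your verification then gives the remaining budget to ``$Y$ and the dummies'', which contradicts step (i). You need an additional undrained group with its own projects. In the paper these are voters $51,\dots,61$ with $p'$, and the $40$ non-victims, who receive $b_i=\frac{39}{40}$ and spend it on type-3 projects. This group must be sized so that $\alpha|X|\frac{B}{n}\ge 2$, which is what $1+k\,c(p'_1)\le c(T)$ with $|T|=2k$ requires.

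Second, the order issue you flag is a real failure, not a technicality. With a separate draining project $r$, any order that funds some $s_i$ before $r$ lets $x_i$ pay only $\frac{1}{|Y|+1}$ for it, so $x_i$ is no longer a victim. The paper, however, asserts the result for every order. For the supporter-count order alone you could repair it: let $r$ be approved by $Y\cup D$ with $c(r)=|Y|+|D|$, so that $r$ has more supporters than any $s_i$ and drains $Y\cup D$ itself. That proves less, though.

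The missing idea is symmetry. The paper has no separate $r$. Each $p_j\in P_0$ costs $1$ and is approved by $j$ and the common block $\{62,\dots,100\}$. Whatever the order, the first $40$ funded projects exhaust the block, and the last $10$ are paid in full by their single supporters $N^*$. Because $N^*$ depends on the order, the instance contains a cohesive block for every $10$-subset (type 2: $5\lambda$ projects of cost $\frac{1}{\lambda}$, with $\lambda=\lceil k/4\rceil$). It also contains a spending block (type 3) for every $40$-subset. This way the same violation arises for every order, with $\alpha=0.5$ and an exhaustive outcome.
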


\begin{proof}
    Let $k\in \mathbb{N}^+$ be an arbitrary positive integer. We construct a PB instance $I$, which shows that the outcome of $\mix$ does not satisfy $\alpha$-budget EJR up to $k$ projects.
    
    Let $I = (B, P, A, c)$ be a PB instance with $n = 100$ voters and a budget $B = 100$. Let parameter $\lambda \in \mathbb{N}$ be defined as $\lambda = \lceil \frac{k}{4} \rceil$.
    Let $P$ contain the following subsets of projects:
    \begin{itemize}
        \item \textbf{Type 1}: For $1\leq j \leq 50$ define $p_j$ with cost $c(p_j)=1$ and 40 approving voters $N_{p_j}=\{j\}\cup \{62,\dots,100\}$.
        \item \textbf{Type 2}: For every unique 10-person subset of voters $\bar{N} \subset \{1,\dots,50\}$ define $5 \lambda$ projects $p^{\bar{N}}_j$ with $1\leq j \leq 5 \lambda$, with cost $c(p^{\bar{N}}_j)=\frac{1}{\lambda} > 0$, approved by the $10$ voters in $\bar{N}$, i.e., $N_{p^{\bar{N}}_j} = \bar{N}$.
        \item \textbf{Type 3}: For every unique 40-person subset of voters $\hat{N} \subset \{1,\dots,50\}$ define $40 \lambda$ projects $p^{\hat{N}}_j$ with $1 \leq j \leq 40 \lambda$ with cost $c(p^{\hat{N}}_j) = \frac{39}{40 \lambda}$, approved by the $40$ voters in $\hat{N}$, i.e., $N_{p^{\hat{N}}_j} = \hat{N}$.
        \item \textbf{Type 4}: Define one project $p'$ with cost $c(p') = 11$, approved by $11$ voters $N_{p'} = \{51,\dots,61\}$.
    \end{itemize}
    Note, that the approval sets in $A$ are implicitly defined through the approving voter sets $N_p$ for all projects $p \in P$.

    \begin{table}[h]
        \centering
        \begin{tabular}{l c c c c c c c c}
            \toprule
            Project & $p_1$ & $p_2$ & \dots & $p_{50}$ & Type 2 & Type 3 & $p'$\\
            \midrule
            Cost & $1$ & $1$ & \dots & $1$ & $1/\lambda$ & $\sfrac{39}{40 \lambda}$ & $11$ \\
            \midrule
            Quantity &  &  &  &  & $\binom{40}{10} \cdot 5 \lambda$ & $\binom{50}{40} \cdot 40 \lambda$ & \\
            \midrule
            Number of approvals & $40$ & $40$ & $\dots$ & $40$ & $10$ & $40$ & $11$\\
            \midrule
            $A_1$     & \x &   &          &   & (\x) & (\x) \\
            $A_2$     &   & \x &          &   & (\x) & (\x) \\
            \phantom{x}$\vdots$  &   &   & $\ddots$ &   & $\vdots$ & $\vdots$  \\
            $A_{50}$  &   &   &          & \x & (\x) & (\x) \\
            $A_{51}$  &   &   &   &   &   &   & \x \\
            $A_{52}$  &   &   &   &   &   &   & \x \\
            \phantom{x}$\vdots$  &   &   &   &   &   &   & $\vdots$ \\
            $A_{61}$  &   &   &   &   &   &   & \x \\
            $A_{62}$  & \x & \x & $\cdots$ & \x \\
            $A_{63}$  & \x & \x & $\cdots$ & \x \\
            \phantom{x}$\vdots$ & $\vdots$ & $\vdots$ & $\ddots$ & $\vdots$ \\
            $A_{100}$  & \x & \x & $\cdots$ & \x \\
            \bottomrule
        \end{tabular}
        \caption{Example instance with a budget of $b = 100$ for the proof of \Cref{prop:mes_style_fails_EJRk}. Each project of type 2 is approved by $10$ voters in $\{1,\dots,40\}$ and each project of type 3 is approved by $40$ voters in $\{1,\dots,50\}$.}
        \label{tab:mes_style_fails_EJR}
    \end{table}

    Let $P_0=\{p_1,\dots,p_{50}\}$, which means that \alg{MES}'s available budget share is $\alpha=0.5$. Note, that $P_0$ is exactly the set of projects that would be chosen by $\alg{Greedy}(50,\emptyset)$ in this instance (with tie-breaking in favor of large projects). Consider the \pay{MES-Style} pre-allocation method, with some arbitrary order of $P_0$. The method funds the first 40 projects in this ordering equally, with supporters paying $\frac{1}{40}$ per project. At this point, all voters in $\{62,\dots,100\}$ run out of money and the remaining projects are funded solely by their supporters from $i\in \{1,\dots, 50\}$. Call the set of these $10$ voters $N^*$.

    The \pay{MES-Style} pre-allocation method selects voter payments and induces voter budgets as follows:
    \begin{itemize}
        \item $\pi_i=1$; $b_i=0$ for $i \in N^* \cup \{62,\dots,100\}$,
        \item $\pi_i=\frac{1}{40}, b_i=\frac{39}{40}$ for $i\in \{1,\dots,50\}\setminus N^*$, and
        \item $\pi_i=0$, $b_i=1$ for $i \in \{51,\dots,61\}$.
    \end{itemize}

    This yields a minimum voter budget share of $\alpha_v=1$. \alg{MES} then selects the $40 \lambda$ type 3 projects supported by $\{1,\dots,50\}\setminus N^*$, and then the Type 4 project $p'$, yielding an output $P^*$ with $c(P^*)=100$.

    Each voter $i\in N^*$ obtains a (cost) satisfaction of $\mu_i(P^*) = c(A_i \cap P^*) = c(p_i) = 1$. However, there exist $5 \lambda$ commonly approved unchosen projects $T=\{p^{N^*}_j\}_{1\leq j\leq 5 \lambda}\subseteq \bigcap _{i\in N^*}A_i $ with $c(T)=5 \lambda\frac{1}{\lambda} = 5 = \frac{\alpha |N^*|B}{n}$. Thus, $N^*$ is $\alpha$-budget $T$-cohesive (compare \Cref{def:cohesiveness}).
    
    However, we have $|T \cap P^\ast| = 0 < 5 \frac{k}{4} - k \le 5 \lambda - k = |T| - k$ and for any voter $i \in N^*$ and any set of projects $P' \subseteq A_i \cap (P \setminus
    P^\ast)$ with $|P'| \leq k$, we have
    $$
        c(A_i \cap P^*) + c(P') \le 1 + k \cdot \frac{1}{\lambda} \le 1 + k \cdot \frac{4}{k} = 5 = c(T).
    $$
    Thus, the outcome of $\mix$ violates $\alpha$-budget EJR up to $k$ projects (compare \Cref{def:EJRk}).
\end{proof}

This example can be tweaked to show that the $\frac{\alpha}{l}$-budget EJR up to $k$ projects is not satisfied, for any arbitrarily high $l\in \mathbb{N^+}$ by choosing an appropriately large number of voters, similarly to the proof of \Cref{prop:equal_split_not_EJRk}.

\Cref{prop:mes_style_fails_EJRk} demonstrates that in order to achieve any kind of proportionality guarantees, it might be necessary to give a budget of less than their fair share of the $\alg{MES}$ budget, $\frac{B_{\alg{MES}}}{n}$, to initially empty-handed voters (those who approve no projects in $P_0$), even if the cost of $P_0$ doesn't exceed the endowments of non-empty-handed voters. This might seem unfair, but it is in this case preferable to charging the cost of an entire project to one voter.

\begin{proposition}\label{prop:equal_split_not_EJR}
    Consider a mixed voting rule $\mix = [R_j]_{1\leq j\leq m}$ with $R_1 = \alg{Greedy}$ and $R_2 = \alg{MES}^\pay{Equal-Split}$ with available budget share $\alpha$. Then the outcome of $\mix$ does not necessarily satisfy $\alpha$-budget EJR up to one project.
\end{proposition}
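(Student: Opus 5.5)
The plan is to build an explicit counterexample in which the exhaustive \alg{Greedy} rule, near the end of its run, funds \emph{private} projects (one supporter each) for the members of a group $N'$, and then \pay{Equal-Split} charges each member the full cost. The mechanism is that \pay{Value-Based} caps payments per unit satisfaction at $1/v^*$, whereas \pay{Equal-Split} charges $1$ per unit here. This drains the endowments of $N'$ in the rebalancing step, so that \alg{MES} cannot afford the projects that $N'$ commonly approves. We need $T$ with $|T|\ge 2$, since a single-project $T$ never witnesses an EJR1 violation: in that case $c(A_i\cap P^*)+c(t)>c(T)$ holds trivially.

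\textbf{Construction.} We take $B=n$ and split the voters into a large group $M$ and a small group $N'$ of size $s$.
\begin{itemize}
\item Voters in $M$ approve a family of popular projects; \alg{Greedy} with $B_1=\alpha_1 B$ buys these first, leaving a residual greedy budget $r$.
\item Voters in $N'$ jointly approve $T=\{t_1,t_2\}$, each of cost $x$ with $r<x$, so neither fits when \alg{Greedy} reaches them despite having $s$ supporters.
\item Each $i\in N'$ also approves a private project $p_i$ of cost $1$ (one supporter), with $s\le r$, so \alg{Greedy} exhaustively adds all $p_i$.
\end{itemize}
Thus $P_0$ consists of the popular projects plus $\{p_i\}_{i\in N'}$.

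\textbf{Computing the payments.} Under \pay{Equal-Split}, each $i\in N'$ pays $\pi_i=1$, and voters in $M$ pay their shares of the popular projects. I would choose sizes so that the rebalancing water level $L$ (the minimum endowment) is set by the voters of $M$, or by voters approving nothing. Then the budgets $b_i=L-1$ for $i\in N'$ are small. We need:
\begin{itemize}
\item $s(L-1)<x$, so $N'$ cannot fund $t_1$ or $t_2$ in \alg{MES};
\item \alg{MES} spends the remaining money on projects approved only by $M$ or by some third group, so that neither $t_j$ is chosen.
\end{itemize}

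\textbf{Verifying the violation.} Cohesiveness requires $2x\le \alpha s B/n=\alpha s$, where $\alpha=(B_2-c(P_0))/B$ is \alg{MES}'s available budget share. The violation of $\alpha$-budget EJR1 then needs, for every $i\in N'$, that $c(A_i\cap P^*)+x\le 2x$, that is $1\le x$. This holds as long as $x\ge 1$.

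\textbf{Main obstacle.} The hard step is making all constraints simultaneously consistent:
\begin{itemize}
\item $1\le s\le r<x$;
\item $2x\le \alpha s$;
\item $s(L-1)<x$ with $L\ge\alpha$.
\end{itemize}
Since $L\ge\alpha$ (\Cref{claim:alpha_v}), the last condition forces $s(\alpha-1)<x$ at the very least. So I would take $\alpha$ just above $1$ in endowment terms by making $B_2=B$ and $c(P_0)$ small relative to $n$.

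Concretely, I would first try $s$ large and $x=\alpha s/2$. Then I would pick $|M|$ huge so that $L\approx\alpha+\varepsilon$, giving $s(L-1)\approx s(\alpha-1+\varepsilon)<\alpha s/2$ whenever $\alpha<2$ roughly. Finally I would tune $r$ to lie in $[s,x)$. The remaining check is that the money given to $M$ is absorbed by $M$-only projects, so that \alg{MES} terminates exhaustively without selecting $t_1$ or $t_2$.
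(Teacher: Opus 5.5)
\textbf{There is a genuine gap: the constraints you list are contradictory, so your construction cannot be completed.}

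\emph{Why the parameters are infeasible.} You need $s\le r$ (all $s$ private projects of cost $1$ fit into the residual of \alg{Greedy}), $r<x$ (neither $t_j$ fits), and $2x\le\alpha s$ (cohesiveness). Since $\alpha=(B_2-c(P_0))/B\le 1$, these give $2x\le\alpha s\le s\le r<x$, which is impossible. With your choice $x=\alpha s/2$ the interval $[s,x)$ is empty.

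Your remedy of pushing $\alpha$ (or the endowments) above $1$ is unavailable. It also targets the wrong constraint. With $B=n$ the water level satisfies $L\le B_{\alg{MES}}/B\le 1$, so members of $N'$ who pay $1$ already receive $b_i=0$. The real obstruction is on the \alg{Greedy} side.

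\emph{Why rescaling does not help.} Suppose the private costs are shrunk to some $c_0$, keeping your design in which members of $N'$ approve only their private project and $T$.
\begin{itemize}
\item \alg{Greedy} ranks each $t\in T$ (with $s\ge 2$ supporters) above the single-supporter private projects. Buying all private projects while skipping $T$ therefore forces $sc_0<\min_{t\in T}c(t)\le c(T)/|T|$.
\item Each member starts with $b_i=L-c_0\ge\alpha-c_0$, and $c(T)\le\alpha s$. So after \alg{MES} has bought any $T_1\subseteq T$, the group $N'$ still holds at least $s(\alpha-c_0)-c(T_1)>c(T\setminus T_1)-\min_{t\in T}c(t)$.
\item This amount covers the cheapest remaining project of $T$ whenever $|T\setminus T_1|\ge 2$. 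Hence \alg{MES} stops only when at most one $t\in T$ is unselected.
\item At that point $c(A_i\cap P^*)+c(t)=c_0+c(T)>c(T)$ for every $i\in N'$, so there is no EJR1 violation.
\end{itemize}
An overcharge on private projects that \alg{Greedy} can afford with a residual smaller than every project of $T$ is simply too small to block $T$ on its own.

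\emph{The missing idea.} You need a jointly approved project outside $T$ that absorbs most of the \alg{MES} money of $N'$ while counting toward their satisfaction. Then the small \pay{Equal-Split} overcharge only has to remove the last bit of slack. The paper's instance does exactly this:
\begin{itemize}
\item \alg{Greedy} (budget $50$) first buys the popular project $p$ (cost $48.6$). Its residual $1.4$ affords all ten private projects of cost $0.14$. It affords neither $p'$ (cost $2.85$, approved by $\{1,\dots,10\}$) nor any $p''_j$ (cost $1.5$, approved by $N'=\{1,\dots,9\}$).
\item \alg{MES} then buys $p'$ first, because the extra supporter $10$ gives it $\rho=1/10<1/9$.
\item This leaves $N'$ with $9\cdot 0.1525=1.3725<1.5$. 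It raises each member's satisfaction to $2.99$, just below $c(T)-c(p''_j)=3$ for $T=\{p''_1,p''_2,p''_3\}$.
\item Hence $2.99+1.5<4.5=c(T)$, which violates $0.5$-budget EJR1.
\end{itemize}
Under \pay{Value-Based} (here $v^*=10$) the members would pay only $0.014$ for their private projects and could still afford some $p''_j$. So the $0.14$ overcharge is exactly what tips the balance.
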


\begin{proof}

    We construct a PB instance $I$, which shows that the outcome of $\mix$ does not satisfy $\alpha$-budget EJR1.
    
    Let $I = (B, P, A, c)$ be a PB instance with $n=100$ voters and budget $B=100$. 
    Let $P$ contain the following projects:
    
    \begin{itemize}
        \item $p$ with cost $c(p)=48.6$ and 11 approving voters $N_{p}=\{90,\dots, 100\}$
        \item $p_j$ for $1\leq j \leq 10$ with cost $c(p_j)=0.14$ and 1 approving voter $N_{p_j}=\{j\}$.
        \item $p'$ with cost $c(p')=2.85$, and 10 approving voters $N_{p'}=\{1,\dots,10\}$
        \item $p''_j$ for $1\leq j \leq 3$ with cost $c(p''_j)=1.5$, and 9 approving voters $N_{p''_j}=\{1,\dots,9\}$
        \item $\hat{p}_j$ for $1\leq j \leq 19$ with cost $c(\hat{p}_j)=2.31$ and 4 approving voters $N_{\hat{p}_j}=\{4j+7,\dots, 4j+10\}$
        \item $\hat{p}_{20}$ with cost $c(\hat{p}_{20})=1.88$ and 4 approving voters $N_{\hat{p}_{20}}=\{10,87,88,89\}$

    \end{itemize}
    
    Note, that the approval sets in $A$ are implicitly defined through the supporter sets $N_p$ for all projects $p \in P$.

    Consider the output of $\alg{MES}^\pay{Equal-Split}(100,\alg{Greedy}(50,\emptyset))$.
    \alg{Greedy} first chooses $p$, and then chooses $\{p_1,...,p_{10}\}$ in some order, as it can no longer afford any other projects in $P$. $c(\{p,p_1,\dots,p_{10}\})=50$, which provides $\alg{MES}$ with an available budget share of $\alpha=0.5$.
    
    The \pay{Equal-Split} pre-allocation method assigns budgets as follows:
    
    \begin{itemize}
        \item $\pi_i=0.14$ and $0.4375\leq b_i\leq 0.4376$ for $1\leq i \leq 10$, 
        \item $\pi_i=0$ and $0.5775\leq b_i\leq 0.5776$ for $11 \le i \le 89$, and
        \item $\pi_i=\frac{48.6}{11}$ and $b_i=0$ for $90\leq i \leq 100$.
    \end{itemize}
    
    \alg{MES} then selects $p'$, which reduces the budgets of each voter $i\in \{1,\dots,10\}$ to $b_i\approx 0.1525$. Importantly, this means that the voters in $\{1,\dots,9\}$ can no longer afford any project from $\{p''_1,p''_2,p''_3\}$. \alg{MES} then selects $\{\hat{p}_1,\dots, \hat{p}_{20}\}$ and terminates with an exhaustive outcome $P^*=\{p,p_1,\dots,p_{10},p',\hat{p}_1,\dots,\hat{p}_{20}\}$, having spent $48.62$ units of its available budget.

    Consider $T=\{p''_1,p''_2,p''_3\}$. Voter group $N'=\{1,\dots,9\}$ is $0.5$-budget $T$-cohesive as $0.5 \frac{|N'|}{n}B=4.5=c(T)$, and each voter $i\in N'$ has $c(A_i\cap P^*)=c(p')+c(p_i)=2.99$, and thus for any project $p\in P\setminus P^*=\{p''_1,p''_2,p''_3\}$, $c(A_i\cap P^*)+c(p)= 2.99+1.5<c(T)$.
    
    Thus $P^*$ does not satisfy $\alpha$-budget EJR up to one project (compare \Cref{def:EJR1}).
\end{proof}

\section{Voter-Specific Threshold Values for the Value-Based Pre-Allocation Method} \label{app:value_based}

We can weakly improve upon the minimum voter budget share of the \pay{Value-Based} method by defining a set of per-voter threshold values instead of a universal one, without sacrificing the guarantees in \Cref{thm:value_based_EJR,thm:value_based_strong_EJR}. In practice this improvement is minor, as the resulting minimum voter budget share will be bounded from above by $\alpha^{\pay{Equal-Split}}$ --- see \Cref{fn:alpha-values} for details on the empirical magnitude of this bound. We can define the per-voter threshold values as follows.

    $$v^*_i=\max_{p\in A_i\cap (P\setminus P_0)} \{v(p) \mid p \text{ satisfies } c(U(p))+c(p)\leq B_{\alg{MES}}\}$$

Note that $v^*_i\leq v^*$ for every voter. We can then tweak the proof of \Cref{thm:value_based_strong_EJR} to use $v(p)\leq \min_{i\in N'}\{v_i^*\}$ for Case 1, and $v(p)> \min_{i\in N'}\{v_i^*\}$ for Case 2, obtaining the same \emph{Strong EJR+ up to any project} proportionality guarantee, but with respect to a weakly higher minimum voter budget share. Note that this does not counteract our tightness results in \Cref{prop:tightness}, as using voter-specific threshold values represent a higher level of information about the instance than we assumed.

\section{MES with Budget Increase} \label{app:budget_increase}

In \Cref{sec:experiments} we briefly outlined a generalization of MES with budget increase to our setting, using what we will now call the \emph{ex-ante} variant. Here we introduce it more formally, consider alternative ways in which we could have generalized this, and show that the ex-ante variant is the only variant that produced a minimum voter budget share with respect to which our proportionality guarantees for the \pay{Value-Based} method still hold.

Letting the budget increment be $\beta$,\footnote{Equivalently we could let the budget increment be $\beta/n$ per voter.} we can calculate the voter budget profile after $x$ budget increase steps, $(b_i^x)_{i\in N}$, by altering the input for our pre-allocation method in \Cref{def:rebalancing_process} to an MES budget share of $\alpha^x =\alpha+\frac{x\beta}{B}$. We write $\alg{MES}^{\pay{M}(x)}$ for \alg{MES} with pre-allocation method \pay{M} and $x$ budget increase steps and $\alpha^{\pay{M}+}$ for the minimum voter budget share produced as a result of this method. \alg{MES} with pre-allocation method \pay{M} and budget increase, $\alg{MES}^{\pay{M}+}$, can then be formulated as follows: find the smallest $x^*\in \naturals$ for which either the outcome of $\alg{MES}^{\pay{M}(x^*)}$ is exhaustive, or the outcome of $\alg{MES}^{\pay{M}(x^*+1)}$ is infeasible. Then, the outcome of $\alg{MES}^{\pay{M}+}$ is the outcome of $\alg{MES}^{\pay{M}(x^*)}$. 

\begin{restatable}{corollary}{budgetincrease}\label{prop:budget_increase_proportionality}
    The proportionality guarantees from \Cref{thm:null_EJR,thm:value_based_EJR,thm:equal_split_EJR} hold for $\alg{MES}^{\pay{M}(x)}$ with respect to $\alpha^{\pay{M}+}$, and $\alpha+\frac{x\beta}{B}\leq \alpha^{\pay{M}+}\leq \frac{B_{MES}+x\beta}{B}$.
\end{restatable}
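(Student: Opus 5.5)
The plan is to view $\alg{MES}^{\pay{M}(x)}$ as nothing more than $\alg{MES}^{\pay{M}}$ run with an enlarged rule budget. After $x$ increase steps, the pre-allocation method is fed the MES budget share $\alpha^x=\alpha+\frac{x\beta}{B}$. Equivalently, it runs with rule budget $B'_{\alg{MES}}=B_{\alg{MES}}+x\beta$ on the same pre-selected set $P_0$, since $B'_{\alg{MES}}-c(P_0)=\alpha^x B$. In every respect used by the proofs, the run is identical to $\alg{MES}^{\pay{M}}(B'_{\alg{MES}},P_0)$. The voter payments $(\pi_i)$, the rebalanced budgets $(b_i)$, the threshold value $v^*$, and the \alg{MES} execution all depend only on $I$, $P_0$ and $B'_{\alg{MES}}$. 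The only discrepancy is that $B'_{\alg{MES}}$ may exceed $B$, or the rule budget actually allotted in $\mix$. So the first step is to verify that the proofs of \Cref{thm:null_EJR,thm:value_based_EJR,thm:equal_split_EJR} never use $B_{\alg{MES}}\le B$, nor feasibility of the outcome with respect to the allotted rule budget.

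Next I would go through each proof and substitute $B'_{\alg{MES}}$ and $\alpha^{\pay{M}+}$ (the minimum voter budget share of the final iteration) for $B_{\alg{MES}}$ and $\alpha^{\pay{M}}$.

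\Cref{thm:null_EJR} only uses that every voter starts with $\frac{\alpha^x B}{n}$ and that \alg{MES} terminates with $p$ unaffordable. Here $\alpha^x=\alpha^{\pay{Null}+}$ by \Cref{prop:min_voter_budget_share}.

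For \Cref{thm:value_based_EJR}, Case 1 uses only $\pi_i+b_i\ge\frac{\alpha_vB}{n}$ and the per-unit payment bound $\frac{1}{v^*}$. Case 2 uses the definition of $v^*$ with respect to $B'_{\alg{MES}}$, which yields $P'_0$ with $c(P'_0)+c(p)>B'_{\alg{MES}}$. It then reruns the pre-allocation with $P'_0$ and the same $B'_{\alg{MES}}$. Inequalities (1)--(4) go through verbatim, with $\alpha' B=B'_{\alg{MES}}-c(P'_0)$.

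\Cref{thm:equal_split_EJR} uses the \alg{Greedy} set $P'_0$ with $c(P'_0)+c(p)>B_1$. This fact concerns the first rule only and is unaffected by the increase.

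The main point needing care is Case 2 of \Cref{thm:value_based_EJR}, because the threshold value there must be recomputed with respect to $B'_{\alg{MES}}$. I will state explicitly that \pay{Value-Based} applied at iteration $x$ uses $B'_{\alg{MES}}$ in the definition of $v^*$; this is the ex-ante variant. The proof then closes because both the pre-allocation and the contradiction in (4) refer to the same budget. A variant that computed $v^*$ with the original $B_{\alg{MES}}$ could break (4), and I expect this is why only the ex-ante variant works.

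Finally, the bounds $\alpha+\frac{x\beta}{B}\le\alpha^{\pay{M}+}\le\frac{B_{\alg{MES}}+x\beta}{B}$ follow by applying \Cref{claim:alpha_v} to the run with rule budget $B'_{\alg{MES}}$ and available share $\alpha^x$. Its proof only uses $\sum_i b_i=\alpha^xB$ and $\sum_i\pi_i\le c(P_0)$, so it is valid even when $B'_{\alg{MES}}>B$.
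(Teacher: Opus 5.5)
Your proposal is correct and follows the paper's proof, which simply reruns the proofs of \Cref{thm:null_EJR,thm:value_based_EJR,thm:equal_split_EJR} with the \alg{MES} budget share replaced by $\alpha^x=\alpha+\frac{x\beta}{B}$ and the \alg{MES} budget replaced by $B_{\alg{MES}}+x\beta$. Your explicit checks spell out what the paper leaves implicit: that $v^*$ must be recomputed with the enlarged budget (the ex-ante variant), that the \alg{Greedy} fact used in \Cref{thm:equal_split_EJR} is unaffected, and that the bounds follow from \Cref{claim:alpha_v} applied to the enlarged run.
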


\begin{proof}
    This follows directly from the proofs of \Cref{thm:null_EJR,thm:value_based_EJR,thm:equal_split_EJR} by altering the \alg{MES} budget share to $\alpha^x =\alpha+\frac{x\beta}{B}$, and correspondingly altering the \alg{MES} budget to $B_{MES}+x\beta$.
\end{proof}

Therefore, the proportionality guarantees from \Cref{thm:null_EJR,thm:value_based_EJR,thm:equal_split_EJR} also hold for $\alg{MES}^{\pay{M}+}$ with respect to $\alpha^{\pay{M}+}=\alpha^{\pay{M}(x^*)}$, with $x^*$ as defined above.

We now consider two other ways to generalize \alg{MES} with budget increase to the mixed voting rule setting. The \emph{intermediate} and \emph{ex-post} variants corresponding to introducing the budget increase at a later stage in the pre-allocation method process described in \Cref{def:rebalancing_process}.

\begin{itemize}
    \item \underline{Ex-ante}: Run the pre-allocation method with an MES budget share of $\alpha^x =\alpha+\frac{x\beta}{B}$, outputting voter budgets $(b_i^x)_{i\in N}$. This is the variant we use in our definition above.
    \item \underline{Intermediate}: Calculate $(\pi_i)_{i\in N}$ as in stage (1) of \Cref{def:rebalancing_process} and then set $B^x=\alpha B+x\beta$. Calculate $(b_i^x)_{i\in N}$ as in stage (2), but replace the constraint $\sum_{i\in N} b_i=\alpha B$ with $\sum_{i\in N} b_i^x=B^x$. 
    \item \underline{Ex-post}: Calculate $(b_i)_{i\in N}$ as in \Cref{def:rebalancing_process}. Then, set $b_i^x=b_i+\frac{x\beta}{n}$ for each voter $i$.
\end{itemize}

We find that the ex-ante variant is the only one that allows the \pay{Value-Based} method to retain its proportionality guarantee.

\begin{proposition}
    The proportionality guarantee from \Cref{thm:value_based_EJR} does not hold for $\alg{MES}^{\pay{Value-Based}(x)}$ with respect to $\alpha^{\pay{Value-Based}(x)}$, when using the intermediate or ex-post variants.
\end{proposition}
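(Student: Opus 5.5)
The plan is to construct one explicit counterexample that works for both the intermediate and the ex-post variant. The idea is to exploit the one place where the proof of \Cref{thm:value_based_EJR} uses the budget $B_{\alg{MES}}$: the threshold value $v^*$. In both variants, the voter payments $(\pi_i)_{i\in N}$ are computed once, with $v^*$ determined by the original budget $B_{\alg{MES}}$. The voter budgets are then inflated, via $\sum_i b_i^x=\alpha B+x\beta$ or via $b_i+\frac{x\beta}{n}$. So the reported $\alpha^{\pay{Value-Based}(x)}$ grows with $x$, while $v^*$ stays \emph{stale}. With the larger budget, a more valuable project $q\notin P_0$ becomes affordable, so the ``correct'' threshold $v^*_x$ would exceed $v^*$. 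Under the stale threshold, supporters of a low-value pre-selected project $p_0$ pay $\frac{c(p_0)}{\max\{v(p_0),v^*\}}$ per unit, which is more than $\frac{1}{|N_q|}$. Neither case of the proof of \Cref{thm:value_based_EJR} then applies: Case~1 fails because pre-allocation spending is no longer bounded by $\frac{1}{|N'|}$ per unit satisfaction, and Case~2 fails because $v(q)>v^*$ no longer certifies that $q$ was unaffordable.

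The concrete instance will have four ingredients.
\begin{enumerate}
\item A group $N'$ of $g$ voters who all approve a pre-selected project $p_0$ with small value $v(p_0)=g$ and fairly large cost. They also jointly approve a single unselected project $q$ with $N_q=N'$ (or $N_q\supseteq N'$ with $|N_q|$ slightly larger).
\item Choose $c(q)$ so that $c(P_0)+c(q)>B_{\alg{MES}}$ but $c(P_0)+c(q)\le B_{\alg{MES}}+x\beta$.
\item Add a filler project $r$ of value $v(r)<g$, approved by the remaining voters, that is affordable under $B_{\alg{MES}}$. This pins the stale threshold at $v^*=v(r)$, so $p_0$ is paid at the stale rate $\frac{c(p_0)}{v(r)}$ and $N'$ overpays for it.
\item Add further projects for the other voters that \alg{MES} prefers, so that the inflated budget is exhausted elsewhere and the final outcome is exhaustive or the next increment is infeasible. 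This makes $x^*$ the step we want.
\end{enumerate}

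The order of verification will be as follows. First, compute $v^*$ and $(\pi_i)$. Second, compute $(b_i^x)$ under each variant and read off $\alpha^{\pay{Value-Based}(x)}$, checking that it is attained by a voter outside $N'$ or at least is large. Third, run \alg{MES} and check that $N'$'s leftover budget is below $c(q)$, since the members of $N'$ prepaid $p_0$ at the stale rate, so $q\notin P^*$. Fourth, exhibit the violation: for every $i\in N'$,
\[
c(A_i\cap P^*)+c(q)=c(p_0)+c(q)\le \alpha^{\pay{Value-Based}(x)}\tfrac{|N'|B}{n}.
\]
For the ex-post variant, the uniform top-up $\frac{x\beta}{n}$ lets us tune $\alpha^{\pay{Value-Based}(x)}$ easily. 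For the intermediate variant, rebalancing pours the extra money onto the lowest endowments first, so I would make the voters of $N'$ already the lowest after the top-up, keeping the two variants aligned. If one instance does not serve both variants, I will give two small instances.

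The main obstacle is the numerical balancing in the third and fourth steps. The increment must be large enough for $c(q)$ to fit within $B_{\alg{MES}}+x\beta$ and to raise $\alpha^{\pay{Value-Based}(x)}$ above $\frac{n(c(p_0)+c(q))}{|N'|B}$. Yet $N'$ must stay unable to afford $q$, and the other voters must not let \alg{MES} stop at a different $x^*$. I would start with tiny parameters ($n\approx 4$, $g=2$, a single increment $x=1$) and adjust the costs by hand before scaling up if needed.
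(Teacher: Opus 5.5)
\textbf{There is a genuine gap: the instance you sketch cannot produce a violation.} Your diagnosis of the mechanism is right and matches the paper's. In the intermediate and ex-post variants, the payments $\pi_i$ are computed with the threshold $v^*$ for the original budget, while the endowments, and hence the reported minimum voter budget share $\alpha^{(x)}$, grow with $x$.

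The problem is ingredient~1. Every member of $N'$ approves $p_0$, so $v(p_0)=|N_{p_0}|\ge |N'|=g$. Under Value-Based each member then pays $c(p_0)/\max\{v(p_0),v^*\}\le c(p_0)/g$, which is at most $1/|N'|$ per unit of satisfaction.

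Your claim in ingredient~3 that $p_0$ is paid at $c(p_0)/v(r)$ is a miscalculation. With $v(r)<g$ we have $\max\{g,v(r)\}=g$, and Value-Based never charges more than the equal split. So the one hypothesis Case~1 of the proof of \Cref{thm:value_based_EJR} really needs still holds: pre-allocation spending by members of the violating group is at most $1/|N'|$ per unit. The rest of that argument only uses $\pi_i+b_i\ge \alpha^{(x)}B/n$, which holds by definition in both variants.

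Concretely, suppose your step-4 inequality $c(p_0)+c(q)\le \alpha^{(x)}gB/n$ held. Then each $i\in N'$ would enter MES with at least $c(q)/g$ left, and the argument of \Cref{thm:null_EJR} forces $q\in P^*$. Enlarging the group to $N_q\supsetneq N'$ does not rescue the design by itself. The group's total prepayment is at most $c(p_0)$, which is exactly one member's satisfaction from $p_0$. So unless MES separately drains the group, it still retains at least $c(q)$.

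\textbf{The missing idea.} The violating group's pre-allocation money must go to \emph{different} low-value pre-selected projects for different members. Then total group spending exceeds each member's satisfaction, i.e., members pay at a rate above $1/|N'|$ per unit.

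The paper does exactly this with the following instance:
\begin{itemize}
\item $n=100$, $B=10000$, and $P_0=\{p_0',p_1,\dots,p_{100}\}$, where $p_0'$ costs $5000$ and is approved by all voters, and each $p_j$ costs $50$ and is approved only by voter $j$.
\item The unselected $p_1'$ costs $5100$ and is approved by all. It fails $c(p_0')+c(p_1')\le 10000$, so the stale threshold is $v^*=0$. No filler project is needed, and every voter pays the full $50$ for her singleton project.
\item With $\beta=1000$ and $x=1$, both variants give every voter endowment $110$, so $\alpha^{(1)}=1.1$.
\item The group $N$ has only $1000<5100$ left for $p_1'$, while $5050+5100<11000$, which is the violation.
\end{itemize}

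Finally, the proposition concerns MES with a fixed number $x$ of increase steps. Your ingredient~4, engineering the stopping step $x^*$, is therefore unnecessary.
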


\begin{proof}
    Consider a PB instance $I$ with $n=100$ voters, $B=10000$ and project set $P$ containing the following projects.

    \begin{itemize}
        \item $p_0'$ with cost $c(p_0')=5000$ and $N_{p_0'}=N$.
        \item $p_1'$ with cost $c(p_1')=5100$ and $N_{p_1'}=N$.
        \item $p_j$ for $1\leq j\leq 100$ with cost $c(p_j)=50$ and 1 approving voter $N_{p_j}=\{j\}$.
    \end{itemize}

    Let $P_0=\{p_0',p_1,\dots,p_{100}\}$, and consider a budget increase step $\beta=1000$.\footnote{A smaller budget increase step would also work for our argument.}
        
    $\alpha^{\pay{Value-Based}(0)}_{Ex-post}=\alpha^{\pay{Value-Based}(0)}_{Intermediate}=\alpha^{\pay{Value-Based}(0)}_{Ex-ante}=1$, as the threshold value is $v^*=0$ for all three variants.
    
    $\alpha^{\pay{Value-Based}(1)}_{Ex-post}=\alpha^{\pay{Value-Based}(1)}_{Intermediate}=1.1$, as the threshold value does not change from $v^*=0$. However, $\alpha^{\pay{Value-Based}(1)}_{Ex-ante}=0.605$, as $c(p'_0)+c(p'_1)\leq 11000$, so the threshold value increases to $v^*=100$.

    $P_0=\{p_0',p_1,\dots,p_{100}\}$ is still exhaustive for a budget of $11000$, and thus the outcome of $\alpha^{\pay{Value-Based}(1)}$ is $P^*=P_0$. $P^*$ satisfies $0.605$-EJR+ up to any project. However, $p_1'$ certifies an $1.1$-EJR+ up to any project violation for any voter in voter group $N$, as $c(A_i\cap P^*)+c(p_1')=10150<11000=1.1B$. Thus, the proportionality guarantee from \Cref{thm:value_based_EJR} does not hold for $\alg{MES}^{\pay{Value-Based}(x)}$ with respect to $\alpha^{\pay{Value-Based}(x)}$, when using the intermediate or ex-post variants.
\end{proof}

\section{Adapting Other Rules} \label{app:other_rules_theory}

In this section we discuss how to adapt rules other than \alg{Greedy} and \alg{MES} to our mixed rules framework. These adaptations allow us to compare different voting rules empirically, and are presented without theoretical guarantees; a more in depth analysis of this is left for future work.

\paragraph{Sequential Phragmén}
A well-known rule aiming at proportional representation is \textit{Sequential Phragmén} (\seqPhrag) \citep{BFJL24a,LCG22a}. \seqPhrag initially assigns each voter a budget of $b_i = 0$ and a load $\ell_i = 0$. It then continuously increases all voter budgets, until there is a project that can be afforded by its supporting voters, i.e., finds the minimum value $t$, such that for some project $p$ we have $\sum_{i \in N_p} b_i - \ell_i = c(p)$. Project $p$ is then added to the set of selected projects and the loads of all paying voters $i \in N_p$ get updated to $\ell_i = b_i$. This process continues until an exhaustive outcome is reached or none of the unselected projects is supported by any voter.

To account for a set of pre-selected projects, we can compute the initial loads of the voters using the first step of the pre-allocation methods defined in \Cref{sec:adapting_mes}. The rebalancing step is not necessary, since \seqPhrag does not require each voter's load to be bounded by $\frac{B}{n}$.
Since, during the calculation of \seqPhrag there can now be voters with loads higher than $t$, we adapt the affordability condition slightly in order to not count these voters negatively into the budget of the supporters of a project. Namely, in each iteration \seqPhrag needs to find the minimum value for $t$, such that for some project $p$ we have $\sum_{i \in N_p} \max(0, b_i - \ell_i) = c(p)$. Intuitively, this means that the budget of a voter $i$ that already spent $\ell_i$ on pre-selected projects will not increase until time $t = \ell_i$. For any time $t < \ell_i$, this voter can not contribute to buying any project.

\paragraph{Method of Equal Shares with Bounded Overspending}
In a recent paper, \citet{GPS+24a} propose a new rule that is based on \alg{MES}, but produces an almost exhaustive outcome using a rounding procedure that allows voters to spend more than their remaining budget. They call this rule \textit{Method of Equal Shares with Bounded Overspending} (\alg{BOS}). Similarly to \alg{MES}, every voter starts with a budget of $\frac{B}{n}$. An unselected project $p$ is called $(\alpha, \rho)$-affordable if $\alpha \cdot c(p) = \sum_{i\in N_p} \min(b_i, \alpha \cdot \rho \cdot c(p))$. In each iteration \alg{BOS} adds the $(\alpha, \rho)$-affordable project that minimizes $\rho/\alpha$ among all projects that fit into the remaining budget and the voter budgets of all supporting voters $i \in N_p$ are updated to $b_i = \max(0, b_i - \rho \cdot c(p))$. Note, that unlike for \alg{MES}, the remaining budget does not have to match the sum of voter budgets. An intuitive interpretation of this process is that voters are ``offered'' to pay only a fraction $\alpha$ of any project, if this allows them to spread the cost more evenly. When the project is selected, each voter is still charged their full share of this projects cost. If a voter is charged more than they have, their budget is set to zero and the remaining cost is discarded. \alg{BOS} terminates when there are no more unselected projects that fit into the remaining budget, or there are no voters with positive budget who support an unselected project.
For a more detailed explanation of \alg{BOS}, we refer to \citet{GPS+24a}.

We can adapt \alg{BOS} to account for a set of pre-selected projects in the same way as \alg{MES}; by computing initial voter budgets using one of the pre-allocation methods discussed in \Cref{sec:adapting_mes}.

\paragraph{Greedy Chamberlin-Courant}
The \textit{Greedy Chamberlin-Courant} (\alg{GreedyCC}) greedily minimizes the fraction of voters with zero satisfaction, by iteratively selecting an affordable project with the highest number of unsatisfied supporters per unit cost. More formally, given partial outcome $P'\subset P$, \alg{GreedyCC} selects project $p\in P\setminus P'$ maximizing $\lvert \{i\in N_p\mid A_i \cap P' = \emptyset\}\rvert /c(p)$. The rule stops if there are no more affordable projects or all voters have positive satisfaction. Notably, it is the only rule we consider whose objective is neither efficiency nor proportionality. 
Similar to \alg{Greedy}, \alg{GreedyCC} also does not need any adjustments to work with a set of pre-selected projects.\\

We show empirical results for mixing these rules with \alg{Greedy} in \Cref{sec:other_rules} and \Cref{app:other_methods_results}. 

\section{Is \pay{Value-Based} optimal?}\label{app:optimality}
We can use this new proportionality notion to argue that \pay{Value-Based} is, in some sense, optimal. However, we cannot show the optimality of a pre-allocation method for a \emph{specific instance}. In fact, given a particular instance and a set of pre-selected projects, the best way to maximize the $\alpha$ for which `$\alpha$-budget Strong-EJR+ up to one project' is satisfied would be to solve the optimization problem directly. Likewise, we know from \Cref{prop:in_general_baseline} that for \emph{all instances} we cannot improve upon the empirical baseline. Our strategy will instead be to consider the information that a pre-allocation method uses to come up with an initial \alg{MES} budget profile, and argue that, as long as that information is kept constant, we cannot devise a pre-allocation method that would guarantee a higher level of proportionality. In other words, we derive optimality for \emph{classes of instances}. We return to the constrained setting of PB with cost satisfaction for this proof.

\mut{I would personally not state this proposition here and instead just give an explanation of what it says, i.e., just explain in what sense Value-Based is optimal. Depending on the length of it, I could even see this going into the conclusion.}

\begin{restatable}{proposition}{tightness}\label{prop:tightness}    
    Fix a set of voters $N$, a project $p$ and its supporter set $N_p$, threshold value $v^*\leq n-|N_p|$, and \alg{MES} budget $B_{\alg{MES}}\geq c(p)$. 
    Consider the set of instances $\mathcal{I}$ with budget $B$, project set $P\supset P_0=\{p\}$, voter set $N$, with approval profile $A$ such that $N_p=\{i\in N: p\in A_i\}$ and threshold value of $v^*$ under \alg{MES} budget $B_{\alg{MES}}$. Given these assumptions, let the minimum voter budget share produced by the \pay{Value-Based} method be $\alpha_v$. Note that the value of $\alpha_v$ is the same for every instance in $\mathcal{I}$, as our assumptions contain all of the information required to compute the outcome of the \pay{Value-Based} method.
    
    Consider a pre-allocation method \pay{M} that only uses the information about the instance $I$ and pre-selected set $P_0$ that we fix. If \pay{M} guarantees that the outcome of $\alg{MES}^\pay{M}$ achieves $\alpha_v$-Strong EJR+ for every instance $I\in \mathcal{I}$, then must produce the same voter budget profile as the \pay{Value-Based} method.
\end{restatable}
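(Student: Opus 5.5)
The plan is a contrapositive, adversarial argument. Suppose \pay{M} uses only the fixed data $(N, p, N_p, v^*, B_{\alg{MES}})$ and outputs a budget profile $(b^{\pay{M}}_i)_{i\in N}$ that differs from the \pay{Value-Based} profile $(b_i)_{i\in N}$. Both profiles sum to $\alpha B = B_{\alg{MES}}-c(p)$, so there is a voter $j$ with $b^{\pay{M}}_j < b_j$. In particular $b_j>0$, so in the rebalancing step $j$ sits exactly at the water level, i.e.\ $\pi_j + b_j = \alpha_v \frac{B}{n}$. Here
\[
\pi_j = \frac{c(p)}{\max\{|N_p|,v^*\}} \ \text{ if } j\in N_p, \qquad \pi_j=0 \ \text{ otherwise}.
\]
I will build a single instance $I\in\mathcal{I}$ on which $\alg{MES}^{\pay{M}}$ violates $\alpha_v$-budget Strong EJR+ up to any project, witnessed by a group containing $j$. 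Since \pay{M} cannot distinguish $I$ from the other instances of $\mathcal{I}$, this contradicts its guarantee.

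The first step is to fix the background of $I$ so that it lies in $\mathcal{I}$. I add one ``anchor'' project $q^*$ whose value is exactly $v^*$, supported by $v^*$ voters outside $N_p$ (possible since $v^*\le n-|N_p|$). I choose its cost so that $c(U(q^*))+c(q^*)\le B_{\alg{MES}}$, while every other added project has value at most $v^*$. This pins the threshold value at $v^*$. The degenerate case $v^*=0$ is handled by omitting the anchor. The anchor's cost must be chosen large enough that its supporters under \pay{M} cannot afford it. They can then never interfere with $j$'s group, so $q^*$ stays unselected and plays no further role.

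The second step is the witness, which I split into two cases. If $j\notin N_p$, add a project $q$ approved only by $j$, with cost $c(q)$ satisfying $b^{\pay{M}}_j < c(q) \le b_j = \alpha_v\frac{B}{n}$. Its value is $1\le v^*$, so the threshold is unaffected. Under \pay{M}, voter $j$ alone cannot buy $q$, and since $j$ approves nothing else, $q\notin P^*$. For $N'=\{j\}$ the Strong EJR+ sum is exactly $c(q)\le \alpha_v\frac{B}{n}$, which is a violation.

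If $j\in N_p$, I take $N'=\{j\}$ and a project $q$ approved by $j$ and chosen so that
\[
\min\{c(p),c(q)\} \ \text{ (the weight of } p \text{ for } j\text{)} \ \ge \ \pi_j .
\]
This holds automatically when $c(q)\ge c(p)/\max\{|N_p|,v^*\}$. I further need $c(q)$ strictly above $b^{\pay{M}}_j$, and $\pi_j + c(q)\le \alpha_v\frac{B}{n}$ after evaluating the min. Because $b^{\pay{M}}_j<b_j=\alpha_v\frac{B}{n}-\pi_j$, such a $c(q)$ exists whenever the weight of $p$ equals $\pi_j$. If instead the $p$-term exceeds $\pi_j$ for a singleton, I enlarge $N'$ to a subset of $N_p$ and let $q$ be approved by that subset. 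The subset size is picked so that $\min\{c(p)/|N'\cap N_p|,\ c(q)/|N'|\}$ equals $c(p)/\max\{|N_p|,v^*\}$, and every voter of $N'$ is left with \pay{M}-budget too small to cover its share of $q$.

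I expect this second case to be the main obstacle. The delicate part is simultaneously making the $p$-term in Strong EJR+ tight at exactly $\pi_j$, keeping $q$'s value at most $v^*$ so the instance stays in $\mathcal{I}$, and ensuring \alg{MES} under \pay{M} cannot buy $q$ through other supporters. I would first try taking $N'$ of size $\min\{|N_p|,v^*\}$ containing $j$, chosen among the voters of $N_p$ with the smallest \pay{M}-budgets. The remaining checks are then routine: the added projects do not change the fixed data, and $P^*$ indeed omits $q$.
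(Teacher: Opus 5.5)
Your case $j\notin N_p$ is the same as the paper's: a singleton witness that $j$ alone cannot afford. The case $j\in N_p$, however, fails, and it fails because the Strong EJR+ weight is miscomputed. Under cost satisfaction $\mu(p')=c(p')$ and $\mu(q)=c(q)$, so the term of $p'$ is $c(p')\min\{1/|N'\cap N_{p'}|,\,1/|N'\cap N_q|\}=c(p')/\max\{|N'\cap N_{p'}|,|N'\cap N_q|\}$. Your $\min\{c(p),c(q)\}$ is the unit-satisfaction formula. For $N'=\{j\}$ the weight of $p$ is therefore all of $c(p)$. A violation would then need $c(p)+c(q)\le\pi_j+b_j$ with $c(q)>b^{\pay{M}}_j$, i.e.\ $c(p)-\pi_j<b_j-b^{\pay{M}}_j$. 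This fails for small deviations of \pay{M}, since $\pi_j=c(p)/\max\{|N_p|,v^*\}<c(p)$ in general. Your fallback does not repair this. For any $N'\subseteq N_p$ the $p$-term is $c(p)/|N'|\ge c(p)/|N_p|$, which is strictly above $\pi_j=c(p)/v^*$ whenever $v^*>|N_p|$. Your suggested size $|N'|=\min\{|N_p|,v^*\}$ gives a $p$-term of $c(p)/\min\{|N_p|,v^*\}>\pi_j$ unless $|N_p|=v^*$.

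Two ideas from the paper's proof are missing. First, Strong EJR+ only requires $N'\cap N_q\neq\emptyset$, so the group can be much larger than the witness's supporter set. For $|N_p|\ge v^*$ the paper takes $N'=N_p$ and a witness $p^*$ approved only by the deficit voter $i'$, with $c(p^*)=b^{\pay{M}}_{i'}+\varepsilon\le b_{i'}$. Every member's $p$-term is then $c(p)/|N_p|=\pi_{i'}$, only $i'$ adds $c(p^*)$, and $i'$ alone cannot buy $p^*$.

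Second, for $|N_p|<v^*$ the $p$-term drops to $c(p)/v^*$ only if $|N'\cap N_q|\ge v^*$. So the group must contain voters outside $N_p$, whose \pay{Value-Based} budget is larger by $c(p)/v^*$. The paper neutralizes this surplus with its anchor $\hat p$: it has cost $c(p)$ and is approved by $v^*$ voters outside $N_p$ that include $N'\setminus N_p$. This anchor pins the threshold. \alg{MES} buys it first at $c(p)/v^*$ per head, absorbing exactly that surplus. It also contributes exactly $c(p)/v^*$ to those voters' sums, mirroring $p$ for the voters in $N_p$. With $p^*$ approved by all of $N'$ and $c(p^*)=v^*\alpha_v\frac{B}{n}-c(p)$, every member's sum equals $(c(p)+c(p^*))/v^*=\alpha_v\frac{B}{n}$. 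The group is chosen so that, after paying for $\hat p$, its \pay{M}-budgets fall short of $c(p^*)$.

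Your anchor is designed to be unaffordable and to play no further role, which discards exactly this mechanism. It is also not always realizable. If $|N_p|\ge v^*=n-|N_p|$, its cost is capped at $\alpha B$, while its supporters (all voters outside $N_p$) may hold all of \pay{M}'s budget. The paper's cost-$\varepsilon$ anchor is harmless even if it is bought.

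Finally, omitting the anchor when $v^*=0$ does not work. Any witness with $c(q)\le\alpha B$ satisfies $c(p)+c(q)\le B_{\alg{MES}}$ and so raises the threshold. In fact, for $v^*=0$ no project outside $P_0$ is affordable under any pre-allocation, so the statement tacitly needs $v^*\ge 1$.
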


\begin{proof}

Consider an arbitrary instance $I\in \mathcal{I}$ from our constrained class. Let the voter payment profiles produced by \pay{M} and \pay{Value-Based} be $\{\pi_i\}_{i\in N}$ and $\{\pi_i^V\}_{i\in N}$, respectively. Similarly, let the voter budget profiles be $\{b_i\}_{i\in N}$ and $\{b_i^V\}_{i\in N}$.

First note that it must be the case that $\sum_{i\in N} b_i\leq \sum_{i\in N}b_i^V=B_{\alg{MES}}-c(p)$. Otherwise, it would be possible for \alg{MES} to run out of budget and prevent some voter $i'\notin N_p$ from acquiring an individually-approved project.

\medskip

We consider two cases: either $|N_p|\geq  v^*$, or $|N_p|< v^*$.

\paragraph{\textbf{Case 1.}} First suppose $|N_p|\geq v^*$. We can always find some voter $i'\in N$ with $b_{i'}+\varepsilon\leq b_{i'}^V$ for some $\varepsilon>0$ (else $b_i=b_i^V$ for all $i\in N$, which contradicts our assumption that \pay{M} produces a different budget profile). We then define two projects:

\begin{itemize}
    \item $p^*$, with $c(p^*)=b_{i'}+\varepsilon$ and $N_{p^*}=\{i'\}$
    \item $\hat{p}$, with $c(\hat{p})=\varepsilon$ and $|N_{\hat{p}}|=v^*$ with $N_{\hat{p}}\cap N_p=\emptyset$ (recalling our assumption that $v^*\leq n-|N_p|$).
\end{itemize}

Consider an instance from $I\in \mathcal{I}$ with project set $P=\{p,p^*,\hat{p}\}$. The output of $\alg{MES}^\pay{M}(I, B, P_0)$ is a subset of $\{p,\hat{p}\}$, as $i'$ cannot afford $p^*$.

Suppose $i'\notin N_p$, which means that $\pi^V_{i'}=0$, and thus $b_{i'}^V=\frac{\alpha_v B}{n}$. Then $p^*$ with $N'=\{i'\}$ constitutes an EJR+ violation (and therefore a Strong EJR+ violation), as $c(P^*\cap A_{i'})+c(p^*)\leq c(\hat{p})+c(p^*)<b_{i'}^V=\frac{\alpha_v B}{n}$. If instead $i'\in N_p$, we know that $\pi^V_{i'}=\frac{c(p)}{|N_p|}$ from the definition of the \pay{Value-Based} method, as $|N_p|\geq v^*$. Observe that $\pi_{i'}^V+b_{i'}^V=\frac{\alpha_v B}{n}$ as $b_{i'}^V>0$, from the definition of a pre-allocation method. Then $p^*$ with voter group $N'=N_p$ constitutes a Strong EJR+ violation, as we can show the following for any voter $i\in N_p$.

\begin{align*}
    \sum_{p'\in P\cap A_i} c(p') \cdot \min \bigg\{\frac{c(p')}{|N'\cap N_{p'}| \cdot  c(p')},\frac{c(p^*)}{|N'\cap N_{p^*}| \cdot c(p^*)}\bigg\}
    &=\sum_{p'\in P\cap A_i} \frac{c(p')}{\max\{|N_{p}\cap N_{p'}|,|N_{p}\cap N_{p^*}|\}} \\
    &=\sum_{p'\in P\cap A_i} \frac{c(p')}{\max\{|N_{p}\cap N_{p'}|,1\}} \\
    &\leq \frac{c(p)}{|N_p|}+\frac{b_{i'}+\varepsilon}{1} \\
    &\leq \frac{c(p)}{|N_p|}+b_{i'}^V \\
    &=\pi_{i'}^V+b_{i'}^V \\
    &=\frac{\alpha_v B}{n}
\end{align*}

\medskip

\paragraph{\textbf{Case 2.}}  Now suppose $|N_p|< v^*$. First observe that $b_i^V=\frac{\alpha_v B}{n}$ for $i\notin N_p$ and $b_i^V=\frac{\alpha_v B}{n}-\frac{c(p)}{v^*}$ for $i\in N_p$. We can again find some voter $i'\in N$ with $b_{i'}<b_{i'}^V$. The $i'\notin N_p$ case is identical to the above, so we will assume that for each $i\in N\setminus N_p$, $b_i\geq b_i^V$. 

We define two projects:

\begin{itemize}
     \item $p^*$ with $N_{p^*}$ and $c(p^*)$ defined as follows. Since some voter $i'\in N_p$ has $b_{i'}<b_{i'}^V$, we can select a set of voters $N_{p^*}$, such that $|N_{p^*}|=v^*$, $|N_{p^*}\cap  N_p|=|N_p|-1$, and $\sum_{i\in N_{p^*}} b_i <\sum_{i\in N_{p^*}} b_i^V$. Then, define 
     \begin{align*}
         c(p^*)&=\sum_{i\in N_{p^*}} b_i^V-(|N_{p^*}|-|N_p|+1)\frac{c(p)}{v^*}\\
         &=(|N_p|-1)\bigg(\frac{\alpha_v B}{n}-\frac{c(p)}{v^*}\bigg)+(|N_{p^*}|-|N_p|+1)\bigg(\frac{\alpha_v B}{n}-\frac{c(p)}{v^*}\bigg) \\
         &=v^*\bigg(\frac{\alpha_v B}{n}-\frac{c(p)}{v^*}\bigg).
     \end{align*} 
     \item $\hat{p}$, with $c(\hat{p})=c(p)$ and $N_{\hat{p}}$ selected in such a way that $|N_{\hat{p}}|=v^*$ such that $N_{\hat{p}}\cap N_p=\emptyset$ and $N_{p^*}\subseteq N_{\hat{p}}\cup N_p$ (i.e. to make sure each voter in $N_{p^*}$ is either in $N_p$ or $N_{\hat{p}}$, but not both). We can choose a suitable set due to our assumption that $v^*\leq n-|N_p|$.
\end{itemize}

Let our instance $I\in \mathcal{I}$ have project set $P=\{p,\hat{p},p^*\}$, and consider the output of $\alg{MES}^\pay{M}(I, B, P_0)$. The supporters of $p^*$ cannot initially afford it \emph{with equal payments}, recalling $b_{i'}<b_{i'}^V=\frac{\alpha_v B}{n}-\frac{c(p)}{v^*}$, so $\hat{p}$, which can be afforded with equal payments, is selected over $p^*$. When $\hat{p}$ is selected, all voters in $N_{\hat{p}}$ pay $\frac{c(p)}{v^*}$ each for it. Thus, the voters in $N_{p^*}$ have total remaining budget of $(\sum_{i\in N_{p^*}} b_i)-(|N_{p^*}|-|N_p|+1)\frac{c(p)}{v^*}$, and can no longer afford $p^*$.

Then $p^*$ with voter group $N'=N_{p^*}$ constitutes a Strong EJR+ up to any project violation as the following is true for every voter $i\in N_{p^*}$, recalling that each voter approves only one of $p$ and $\hat{p}$.

\begin{align*}
    \sum_{p'\in P\cap A_i} \frac{c(p')}{\max\{|N'\cap N_{p'}|,|N'\cap N_{p^*}|\}}
    &=\sum_{p'\in P\cap A_i} \frac{c(p')}{|N_{p^*}|} \\
    &= \frac{c(p)+c(p^*)}{|N_{p^*}|} \\
    &\leq \frac{c(p)+
    v^*(\frac{\alpha_v B}{n}-\frac{c(p)}{v^*})}{|N_{p^*}|} \\
    &=\frac{v^*\frac{\alpha_v B}{n}}{|N_{p^*}|} \\
    &=\frac{\alpha_v B}{n}
\end{align*}

\end{proof}

We leave an exploration of the optimality of the \pay{Value-Based} pre-allocation method for other satisfaction functions, and non-singleton sets of pre-selected projects, for future work.

\section{Further Experimental Results} \label{app:experimental}

In this section, we discuss some more detailed results from the experiments we conducted. We denote by $\mix^\pay{M}$ the mixed rule $[\alg{Greedy}, \alg{MES}^\pay{M}, \alg{Greedy}]$, mixing \alg{Greedy} with a budget share of $\alpha_\alg{G}$ and \alg{MES} with \alg{Greedy} completion. Similarly, $\mix^{\pay{M}+}$ denotes the rule mixing \alg{Greedy} and \alg{MES} with budget increase and \alg{Greedy} completion.

\subsection{Minimum Voter Budget Shares in Practice} \label{app:alpha_v_in_practice}

In \Cref{def:minimum_voter_budget_share} we introduce the concept of the \textit{minimum voter budget share} $\alpha^M$, based on which we show different proportionality guarantees for $\alg{MES}^M$ in \Cref{sec:proportionality}. In particular, we show that the outcome of $\alg{MES}^M(B', P_0)$ satisfies $\alpha^M$-budget EJR up to any project for $M \in \{\pay{Null, Value-Based}\}$ and $\alpha^M$-budget EJR up to any two projects for $M = \pay{Equal-Split}$ if $P_0$ was selected by \alg{Greedy}.
The relation between $\alpha^M$ for different pre-allocation methods $M$ is stated in \Cref{prop:min_voter_budget_share}. In particular, we know that $\alpha^\pay{Null} \le \alpha^\pay{Value-Based} \le \alpha^\pay{Equal-Split}$, where $\alpha$ is the available budget share of $\alg{MES}^M$. To get a feeling how far apart these values are in practice, we computed them on our data for the pre-allocation of $P_0=\alg{Greedy}(\alpha_\alg{G} B,\emptyset)$ for $\alpha_\alg{G} \in \{0, 0.1, 0.2, \dots, 1\}$.

\begin{figure}
    \begin{minipage}[t]{.48\textwidth}
        \centering
        \includegraphics[scale=0.60]{figures/EC/alpha_v_greedy_mes_greedy.png}
        \caption{Minimum voter budget share $\alpha^{\pay{M}}$ in practice for different pre-allocation methods and \alg{Greedy} budget shares $\alpha_\alg{G}$, averaged over all instances with at least 20 projects.}
        \label{fig:results_alpha_v_no_budget_increase}
    \end{minipage}\hfill
    \begin{minipage}[t]{.48\textwidth}
        \centering
        \includegraphics[scale=0.60]{figures/EC/alpha_v_greedy_mes_bi_greedy.png}
        \caption{Minimum voter budget shares $\alpha^{\pay{M}+}$ with respect to the increased budget used by $\mesincrease{M}$ for different pre-allocation methods and \alg{Greedy} budget shares $\alpha_\alg{G}$, averaged over all instances with at least 20 projects.}
        \label{fig:results_alpha_v}
    \end{minipage}
\end{figure}

\Cref{fig:results_alpha_v_no_budget_increase} shows the average minimum voter budget share $\alpha^M$ for each pre-allocation method $M$, and for different greedy shares of the budget, over all instances with at least $20$ projects. We observe that $\alpha^\pay{Null}$ is significantly lower than the minimum voter budget share produced by the other three methods, suggesting that $\mix^\pay{Value-Based}$ provides a much better proportionality guarantee than $\mix^\pay{Null}$ in practice. The proportionality guarantees of $\mix^\pay{Value-Based}$ and $\mix^\pay{Equal-Split}$ are incomparable, as \pay{Equal-Split} provides a weaker guarantee, but for a potentially larger fraction of the budget $\alpha^\pay{Equal-Split}$. However, \Cref{fig:results_alpha_v_no_budget_increase} shows that, in practice, $\alpha^\pay{Equal-Split}$ is not much larger than $\alpha^\pay{Value-Based}$.

In \Cref{app:budget_increase} we show that our proportionality guarantees also hold for \alg{MES} with budget increase, with respect to the minimum voter budget share $\alpha^{\pay{M}+}$. \Cref{fig:results_alpha_v} shows the values for $\alpha^{\pay{M}+}$ in practice. The general trend is the same as in \Cref{fig:results_alpha_v_no_budget_increase} with the minimum voter budget shares of $\pay{Equal-Split}$, $\pay{Value-Based}$ and $\pay{MES-Style}$ being close together and significantly higher than for $\pay{Null}$. Note that all values of $\alpha^{\pay{M}+}$ are larger than $\alpha^{\pay{M}}$ by a factor of at least $1.5$.

\subsection{Budget Spending of \alg{MES} in a Mixed Rule}\label{app:budget_spending}

\begin{figure}
    \begin{subfigure}[t]{.48\textwidth}
        \centering
        \includegraphics[scale=0.60]{figures/EC/greedy_mes_greedy_payments_value_based_spending.png}
        \caption{$\mix^{\pay{Value-Based}}$ (without budget increase)}
        \label{fig:budget_spending_no_budget_increase}
    \end{subfigure}\hfill
    \begin{subfigure}[t]{.48\textwidth}
        \centering
        \includegraphics[scale=0.60]{figures/EC/greedy_mes_bi_greedy_payments_value_based_spending.png}
        \caption{$\mix^{\pay{Value-Based}+}$ (with budget increase)}
        \label{fig:budget_spending}
    \end{subfigure}
    \caption{Budget spending of each individual rule in the mixed rules $\mix^\pay{Value-Based}$ (left) and $\mix^{\pay{Value-Based}+}$ (right), averaged over all instances with at least 20 projects.}
\end{figure}

A well known problem of the Method of Equal Shares is that it is not exhaustive, i.e., after its execution, there can be unchosen affordable projects. This occurs because \alg{MES} is constrained to only select projects that can be funded by their supporters. In practice this issue is often solved by using \alg{MES} with budget increase and/or a completion method, like \alg{Greedy}. 
In this section, we explain the empirical issues that arise when using \alg{MES} without budget increase in our mixed framework and show how introducing the budget increase can fix them. For a theoretical analysis, see \Cref{app:budget_increase}.

\Cref{fig:budget_spending_no_budget_increase} shows how much of the budget is spent by each separate rule in $\mix^{\pay{Value-Based}}$, averaged over all instances with at least $20$ projects. We can see, that while (the first) $\alg{Greedy}$ spends its available budget $\alpha_\alg{G}B$ almost entirely, this is not true for \alg{MES}. %
In fact, for large values of $\alpha_\alg{G}$, the spending of \alg{MES} is close to zero. In these cases, $\mix^M$ can almost be interpreted as mixing \alg{Greedy} with \alg{Greedy}, instead of \alg{Greedy} with \alg{MES}. We discuss mixing \alg{Greedy} with itself in the subsection below. In \Cref{fig:budget_spending} we show the spending of the mixed rule $\mix^{\pay{M}+}$ containing \alg{MES} with budget increase (in steps of 10). Notably, \alg{MES} now utilizes nearly its entire budget, resolving this issue.

\subsection{Further Empirical Evaluation}\label{app:more_experiments}

In this section, we evaluate our results against two empirical baselines. 
When interpolating between \alg{Greedy} and \alg{MES}, it is desirable to perform at least as well as the former with respect to representation and at least as well as the latter with respect to utilitarian welfare. In other words, we do not want the outcome of $\mix^{\pay{M}+}$ to be ``Pareto-dominated'' by either constituent rule. We refer to this requirement as the \textit{weak empirical baseline}, visualized in \Cref{fig:empirical_baselines}. 
All rules $\mix^{\pay{M}+}$ are dominated by \alg{MES} for $\alpha_\alg{G}$ up to $0.4$, because of the initial decrease of utilitarian welfare. As already stated in \Cref{sec:experiments}, we attribute this to \alg{Greedy} not being able to pick the most popular projects, because of its low budget constraint (see \Cref{app:splitting_budget} for a way to avoid this). For $\alpha_\alg{G}\ge 0.6$, the mixed rule $\mix^{\pay{M}+}$ exceeds the weak empirical baseline with any of the four pre-allocation methods.

We can also compare our mixed rules to a hypothetical \textit{randomized combination} of \alg{Greedy} and \alg{MES}, where we run \alg{Greedy} with some probability $p \in [0,1]$ and \alg{MES} with probability $1-p$. Ideally, we would want the outcome of our mixed rule $\mix^{\pay{M}+}$ to not be Pareto-dominated by any of these randomized rules. We consider this to be the \textit{strong empirical baseline}, also visualized in \Cref{fig:empirical_baselines}.
We can see that $\mix^{\pay{M}+}$ meets the strong empirical baseline for any payments method if $\alpha_\alg{G}\ge 0.6$.

\begin{figure}
    \centering
    \begin{minipage}[t]{.48\textwidth}
        \centering
        \includegraphics[scale=0.62]{figures/EC/compare_payments_functions_bi.png}
        \caption{Results for $\mix^{M+}([\alpha_\alg{G}B,B,B])$ for $M \in \{\pay{Null}, \pay{MES-Style}, \pay{Equal-Split}, \pay{Value-Based}\}$ and $\alpha_{G} \in \{0, 0.1, 0.2, \dots, 1.0\}$. Metrics are averaged over all instances with at least $20$ projects.}
        \label{fig:results_min_beta_EJR_restated}
    \end{minipage}\hfill
    \begin{minipage}[t]{.48\textwidth}
    \begin{tikzpicture}
            \def\xGreedy{8}
            \def\yGreedy{1}
            \def\xMES{1}
            \def\yMES{8}
            \def\xmax{11}
            \def\ymax{10}
            \small
            \begin{axis}[
                xscale = 0.87,
                yscale = 0.65,
                xmin = 0, xmax = \xmax+1,
                ymin = 0, ymax = \ymax+1,
                axis lines = left,
                xtick = \empty, ytick = \empty,
                clip = false,
                xlabel=Utilitarian welfare,
                ylabel=Proportionality,
            ]
        
                \addplot[color = black, thick, ->, dashed] coordinates {(\xMES, \yGreedy) (\xMES, \ymax)};
                \addplot[color = black, thick, ->, dashed] coordinates {(\xMES, \yGreedy) (\xmax, \yGreedy)};
                \addplot[color = black, thick, dashed] coordinates {(\xMES, \yMES) (\xGreedy, \yGreedy)};
                \filldraw [black] (\xGreedy, \yGreedy) ellipse (2.5pt and 3.25pt);
                \node (Greedy) [above right] at (\xGreedy, \yGreedy) {Greedy};
                \filldraw [black] (\xMES, \yMES) ellipse (2.5pt and 3.25pt);
                \node (MES) [above right] at (\xMES, \yMES) {MES};
                
                \node (Min) at (\xMES, \yGreedy) {};
                \node (Max) at (\xGreedy, \yMES) {};

                \fill[teal, opacity = 0.1] (\xMES, \yMES) -- (\xMES, \yGreedy) -- (\xGreedy, \yGreedy);
                \fill[violet, opacity = 0.1] (\xMES, \yMES) -- (\xGreedy, \yGreedy) -- (\xmax, \yGreedy) -- (\xmax, \ymax) -- (\xMES, \ymax);
        
                \node[text width=2cm,align=center, text=teal!100] at (barycentric cs:Greedy=1,MES=1,Min=2) {Weak  baseline};
                \node[text width=2cm,align=center, text=violet!100] at (barycentric cs:Greedy=1,MES=1,Max=2) {Strong  baseline};
            \end{axis}
        \end{tikzpicture}
        \caption{Empirical baselines for evaluating the performance of mixed rules.}
        \label{fig:empirical_baselines}
    \end{minipage}
\end{figure}

\subsection{Effects of Splitting the Budget}\label{app:splitting_budget}
In this section, we investigate the issues that arise from splitting the budget between two rules, particularly when the first rule in the mix is exhaustive. Using \alg{Greedy} as a running example, we demonstrate these effects empirically and examine how the resulting issues can be resolved.

\paragraph{How much does splitting the budget matter?} One might wonder whether reallocating a small amount of budget from one rule to another has a noticeable effect on the final outcome. In particular, \Cref{fig:results_min_beta_EJR_restated} shows that for values of $\alpha_G$ up to $0.4$, the two metrics we consider barely change. Furthermore, at the start of their execution, \alg{Greedy} and \alg{MES} select the same projects, before voters start to exhaust their personal budgets in the latter. 

\begin{wrapstuff}[type=figure,width=0.49\textwidth]
        \centering
        \includegraphics[scale=0.5]{figures/EC/percentage_of_spending.png}
        \caption{Overlap between the outcomes of \alg{MES} (with budget increase and greedy completion) and the mixed rule $\mix^{\pay{Value-Based}+}([\alpha_\alg{G}B,B,B])$, averaged over all instances with at least $20$ projects.}
        \label{fig:percentage_of_spending_on_projects_selected_by_MES}
\end{wrapstuff}
Consider \Cref{fig:percentage_of_spending_on_projects_selected_by_MES}, which shows the difference in outcomes between `pure' \alg{MES} (with budget increase and greedy completion) and the mixed rule combining \alg{Greedy} with \alg{MES} using the \pay{Value-Based} method. The curve plots the cost of projects that are commonly selected by both rules, for a given value of $\alpha_G$, normalized by the total spending of pure MES. By construction, this ratio is 1 for $\alpha_G=0$ when the two rules coincide. Pure \alg{MES} and pure \alg{Greedy} do share a substantial fraction of their selections (the curve shows, at $\alpha_G=1$, that almost 70\% of \alg{MES}'s spending goes to projects also selected by \alg{Greedy}).  However, similarity to the MES outcome gradually decreases as $\alpha_G$ increases, even for low values of $\alpha_G$, indicating that the interpolation does affect individual project selection even when aggregate metrics are stable.

\paragraph{Splitting the Budget Between Exhaustive Rules}
Splitting the budget between multiple rules can sometimes have undesirable consequences\,---\,for example, a project costing more than $50\,\%$ of the budget is unlikely to ever be selected by a mixed rule that splits the budget into two equal sized parts. This is especially problematic when the first component of a mixed rule is exhaustive, leaving little more than $50\,\%$ of the budget to the second rule.
The problem is less likely to occur in larger instances, since the average project cost tends to be lower for instances with a large number of projects. For example, the cost of the most expensive project in instances with less than $10$ projects is $71\,\%$ of the budget limit on average, while this value is only $44\,\%$ over instances with at least $20$ projects.
We can empirically observe the effect of splitting the budget, as we explain below.

\paragraph{Mixing \alg{Greedy} with Itself} By comparing the outcomes of $\mix^\alg{G} = [\alg{Greedy}, \alg{Greedy}]$ to the outcome of \alg{Greedy}, we can isolate the consequences of splitting the budget.
\Cref{fig:greedy_greedy_util} shows the utilitarian welfare for $\mix^\alg{G}([\alpha_\alg{G} B, B])$, when varying $\alpha_\alg{G}$ for different instance sizes (measured by the number of projects). Note that for values $0$ and $1$ of $\alpha_\alg{G}$ the mixed rule $\mix^\alg{G}$ is just the \alg{Greedy} rule. The performance of $\mix^\alg{G}$ drops significantly when splitting the budget for instances with fewer than 20 projects, whereas the effect is much less pronounced for larger instances. For instances with at least $50$ projects even splitting the budget in half seems to have almost no effect. This indicates that using mixed rules with an exhaustive first component does not work well in practice on small instances.

Another interesting side effect can be observed in \Cref{fig:greedy_greedy_prop}, which shows the performance of $\mix^\alg{G}$ in terms of proportionality, compared to that of $\mix^{\pay{Value-Based}}$ and $\mix^{\pay{Value-Based}+}$. It seems surprising that the proportionality increases when splitting the budget of the \alg{Greedy} rule into two parts. Merely forcing \alg{Greedy} to pick cheaper projects (without checking \textit{who} approves these projects) seems to increase proportionality, which could potentially be caused by correlations between the votes, or the fact that when larger projects are chosen, more projects remain unchosen, which could make an EJR+X violation more likely.

\begin{figure}
    \centering
    \begin{subfigure}[t]{.48\textwidth}
        \centering
        \includegraphics[scale=0.62]{figures/EC/greedy_greedy_welfare_loss.png}
        \caption{Loss of utilitarian welfare due to splitting the budget, averaged over instances of different sizes.}
        \label{fig:greedy_greedy_util}
    \end{subfigure}\hfill
    \begin{subfigure}[t]{.48\textwidth}
        \centering
        \includegraphics[scale=0.62]{figures/EC/greedy_greedy_proportionality.png}
        \caption{Increase of proportionality due to splitting the budget, averaged over all instances with at least 20 projects. The values for the mixed rule $\mix^\pay{Value-Based}+$ are displayed for comparison.}
        \label{fig:greedy_greedy_prop}
    \end{subfigure}
    \caption{Performance of mixing \alg{Greedy} with \alg{Greedy} for different (first) \alg{Greedy} budget shares $\alpha_\alg{G}$ from $0$ to $1$.}
    \label{fig:results_greedy_greedy}
\end{figure}

\paragraph{\alg{Greedy} with Early Stopping}
\begin{figure}
    \centering
    \begin{subfigure}[t]{.48\textwidth}
        \centering
        \includegraphics[scale=0.62]{figures/EC/early_stopping_small.png}
        \caption{Proportionality over utilitarian ratio.}
        \label{fig:early_stopping_small}
    \end{subfigure}\hfill
    \begin{subfigure}[t]{.48\textwidth}
        \centering
        \includegraphics[scale=0.608]{figures/EC/early_stopping_spending.png}
        \caption{Budget spending of \alg{Greedy}.}
        \label{fig:early_stopping_spending}
    \end{subfigure}
    \caption{Experimental results of mixing \alg{Greedy} with and without early stopping with $\alg{MES}^{\pay{Value-Based}+}$, averaged over all instances with fewer than $20$ projects.}
    \label{fig:early_stopping_details}
\end{figure}

The aforementioned problems are mainly a consequence of using an exhaustive method like \alg{Greedy} as the first component in a mixed rule. In \Cref{sec:mixed}, we introduce a non-exhaustive version of \alg{Greedy} that iteratively selects the most approved projects until the next project to be added is not affordable. We call this variant ``\alg{Greedy} with early stopping'' and presented its effect on the experiments in \Cref{sec:exp_early_stopping}. \anton{Two new senetences, feel free to move:} Curiously, if the first instance of \alg{Greedy} in $\mix^\alg{G} = [\alg{Greedy}, \alg{Greedy}]$ is implemented with early stopping, then $\mix^\alg{G}$ is identical to the non-mixed \alg{Greedy} rule. Thus, using such an adaptation nullifies the effects discussed above.

Here, we focus on how early stopping affects mixed rules on small instances with fewer than 20 projects.
\Cref{fig:early_stopping_small} compares the performance of $\mix^{\pay{Value-Based}+}$ to its variant that uses \alg{Greedy} with early stopping. With the exception of $\alpha_\alg{G} = 1$, the variant with early stopping dominates the original mix for all \alg{Greedy} budget shares. Without early stopping, we observe a significant drop in welfare when assigning small fractions of the budget to \alg{Greedy}. In fact, the mixed rule only surpasses pure \alg{MES} in welfare once we allocate $80\,\%$ of the budget to \alg{Greedy}. In contrast, when using \alg{Greedy} with early stopping, the welfare remains essentially constant for greedy shares between $0$ and $0.6$. Furthermore, up to $60\,\%$ of the budget can be allocated to \alg{Greedy} without losing any proportionality, on average. A simple explanation could be that \alg{Greedy} with early stopping fails to spend any budget, because the cost of the most approved project exceeds the budget allocated to \alg{Greedy}. While this is certainly true for low values of $\alpha_\alg{G}$, it does not fully explain the behavior for larger values. \Cref{fig:early_stopping_spending} compares the average spending of \alg{Greedy} with and without early stopping for instances with fewer than $20$ projects. While \alg{Greedy}  spends significantly less when used with early stopping, the spending remains non-negligible. Notably, for $\alpha_\alg{G} = 0.6$, where $\mix^{\pay{Value-Based}+}$ with early stopping still matches the performance of pure \alg{MES}, the actual spending of \alg{Greedy} is around $30\,\%$ of the total budget.

Overall, introducing early stopping largely resolves the issues discussed in this section, albeit at the cost of reduced control over the exact budget share that \alg{Greedy} ultimately spends.

\subsection{Further Experimental Results for Other Rules} \label{app:other_methods_results}

\begin{figure}
    \centering
    \begin{subfigure}[t]{.48\textwidth}
        \centering
        \includegraphics[scale=0.62]{figures/EC/greedy_any.png}
        \caption{Proportionality over welfare. (\Cref{fig:results_other_methods} restated)}
        \label{fig:results_greedy_any_prop}
    \end{subfigure}\hfill
    \begin{subfigure}[t]{.48\textwidth}
        \centering
        \includegraphics[scale=0.62]{figures/EC/greedy_any_representation.png}
        \caption{Representation over welfare.}
        \label{fig:results_greedy_any_rep}
    \end{subfigure}
    \caption{Comparison of experimental results when mixing \alg{Greedy} with \alg{MES}, \seqPhrag, \alg{BOS} and \alg{GreedyCC} as explained in \Cref{app:other_rules_theory}. Metrics are averaged over all instances with at least $20$ projects}
    \label{fig:results_greedy_any}
\end{figure}

In \Cref{sec:other_rules} we compare the performance of mixed rules containing \alg{MES}, \alg{BOS}, \seqPhrag and \alg{GreedyCC} with respect to proportional representation and utilitarian welfare.
However, when evaluating our mixed-method framework, it is most meaningful to assess each mixed rules using metric it is designed to optimize. In particular \alg{GreedyCC} is not trying to achieve proportional representation, but rather just to maximize the number of voters with non-zero satisfaction. We therefore introduce the following representation measure.

\begin{definition}
For a given instance $I = (B,P,A,c)$ and outcome $P^\ast \subseteq P$, we measure the \emph{fraction of represented voters} as the fraction of voters with non-zero satisfaction, i.e., $|\{i \in N \mid A_i \cap P^\ast \neq \emptyset\}|/n$.
\end{definition}

\Cref{fig:results_greedy_any_rep} illustrates the trade-off between this measure of representation and utilitarian welfare, when mixing \alg{Greedy} with \alg{MES}, \alg{BOS}, \seqPhrag and \alg{GreedyCC}, as explained in \Cref{app:other_rules_theory}. Unsurprisingly, \alg{GreedyCC} achieves the highest representation for all values of $\alpha_\alg{G}$. Notably, the curve for \alg{BOS} lies very close to that of \alg{GreedyCC}. In particular, for $\alpha_\alg{G} = 0.8$, \alg{BOS} attains representation levels comparable to $\alg{GreedyCC}$ while outperforming all methods with respect to proportionality and welfare. Since this point is also fairly close to the proportionality that pure \alg{MES} achieves, these results indicate that mixing even a small proportion of \alg{BOS} into \alg{Greedy} can substantially improve (proportional) representation in the outcome of a PB election.

\end{document}